\pgfplotsset{compat=1.18}
\def\bs{\ensuremath\boldsymbol}
\newtheorem{theorem}{Theorem}
\newtheorem{problem}[theorem]{Problem}
\newtheorem{lemma}[theorem]{Lemma}
\newtheorem{remark}[theorem]{Remark}
\newtheorem{assumption}{Assumption}
\newtheorem{definition}{Definition}
\newcommand{\bm}{\boldsymbol}
\author[1]{Zhiyan Ding \thanks{zding.m@math.berkeley.edu}}
\author[2]{Taehee Ko \thanks{kthmomo@kias.re.kr}\thanks{Ding and Ko are co-first authors with equal contribution.}}
\author[1]{Jiahao Yao \thanks{jiahao@math.berkeley.edu}}
\author[1,3,4]{Lin Lin \thanks{linlin@math.berkeley.edu}}
\author[5]{Xiantao Li \thanks{xiantao.li@psu.edu}}
\affil[1]{Department of Mathematics, University of California, Berkeley, CA 94720, USA}
\affil[2]{School of Computational Sciences, Korea Institute for Advanced Study, 02455, Seoul, South Korea}
\affil[3]{Applied Mathematics and Computational Research Division, Lawrence Berkeley National Laboratory,  Berkeley,  CA 94720, USA
}
\affil[4]{Challenge Institute for Quantum Computation, University of California, Berkeley, CA 94720, USA
}
\affil[5]{Department of Mathematics, Pennsylvania State University, University Park, PA 16802, USA
}
\title{Random coordinate descent: a simple alternative for optimizing parameterized quantum circuits}
\begin{document}

\maketitle

\begin{abstract}
Variational quantum algorithms rely on the optimization of parameterized quantum circuits in noisy settings. The commonly used back-propagation procedure in classical machine learning is not directly applicable in this setting due to the collapse of quantum states after measurements. Thus, gradient estimations constitute a significant overhead in a gradient-based optimization of such quantum circuits. This paper introduces a random coordinate descent algorithm as a practical and easy-to-implement alternative to the full gradient descent algorithm. This algorithm only requires one partial derivative at each iteration. 
Motivated by the behavior of measurement noise in the practical optimization of parameterized quantum circuits, this paper presents an optimization problem setting that is amenable to analysis. Under this setting,
the random coordinate descent algorithm exhibits the same level of stochastic stability as the full gradient approach, making it as resilient to noise. The complexity of the random coordinate descent method is generally no worse than that of the gradient descent and can be much better for various quantum optimization problems with anisotropic Lipschitz constants. Theoretical analysis and extensive numerical experiments validate our findings. 
\end{abstract}

\section{Introduction}\label{sec:intro}

Variational quantum algorithms have emerged as a promising application for near-term quantum devices, addressing various computational challenges with enhanced efficiency~\cite{moll2018quantum, cerezo2021variational}. These algorithms encompass several notable approaches, such as the variational quantum eigensolver \cite{peruzzo2014variational}, the variational quantum simulation~\cite{benedetti2021hardware}, the quantum approximate optimization algorithm \cite{farhi2014quantum,cerezo2021variational,leng2022differentiable}, and quantum machine learning \cite{larocca2021theory, biamonte2017quantum, schuld2015introduction, cerezo2022challenges}. They are designed to operate in a hybrid quantum-classical fashion~\cite{mcclean2016theory, endo2021hybrid}.
In these algorithms, the quantum component involves the implementation of parameterized quantum gate operations. By performing measurements, a cost function (and optionally, its gradient) is obtained as the output. The classical computational procedure then utilizes an iterative method to produce updates for the parameters, which are subsequently leveraged to refine and reprogram the quantum circuits. This iterative process continues until convergence is achieved, forming a feedback loop that continues to improve the algorithm's performance.

In variational quantum algorithms, the optimizable parameters are defined within parameterized quantum circuits (PQCs)~\cite{benedetti2019parameterized, ostaszewski2021structure, sim2019expressibility,altares2021automatic}. A PQC is a sequence of unitary operators represented by parameterized quantum gates that can be readily implemented on a quantum computer. Assuming we are working in an $n$-qubit Hilbert space, a parameterized quantum circuit can be expressed as follows:
\begin{equation}\label{eq: pqc}
    U(\bm \theta) = \prod_{j=1}^J U_j(\bm \theta_j) W_j. 
\end{equation}
Here ${\bm \theta} \!= \!\{\bm \theta_j\}_{j=1}^J$ are the parameters that we need to optimize, $U_j(\bm \theta_j)\in \mathbb{C}^{2^n \times 2^n}$ are the parameterized unitary operators, and $W_j \in \mathbb{C}^{2^n \times 2^n}$ are fixed unitary operators. For instance, a simple example of a PQC consisting only of one-qubit Pauli rotation operators takes the form 
 \[ U_j(\bm \theta_j) = \bigotimes_{m=1}^{M} e^{- i\theta_{j,k_{j,m}} \sigma_{j,k_{j,m}}},  \]
where $\sigma_{j,k_{j,m}}\in\mathbb{C}^{2\times 2}$ is a single-qubit Pauli matrix that acts on $k_{j,m}$-th qubit, $\theta_{j,k_{j,m}}$ represents one of the parameters in $\bm \theta$, and $W_j$'s can be used to represent quantum gates that do not require parametrization, such as the controlled-NOT (CNOT) gate.

Let $d$ be the dimension of the parameters, and we write $\bm\theta=(\theta_1,\theta_2,\cdots,\theta_d)$.
We then optimize the parameter $\bm \theta$ by minimizing a properly chosen cost function $f(\bm \theta)$. 
As an example, the variation quantum eigensolvers (VQE) finds the smallest eigenvalue (ground-state energy) and its corresponding eigenvector (ground state) of a given Hamiltonian matrix $H$  by minimizing the energy of the state:
\begin{equation}\label{eq:vqe}
\theta^*=\mathrm{argmin}_{\bm\theta}  f(\bm \theta)=\mathrm{argmin}_{\bm\theta}\bra{U(\bm \theta) \psi_0} H \ket{U(\bm \theta) \psi_0}\,.
\end{equation}
Here $\ket{\psi_0}\in\mathbb{C}^{2^n}$ is a predetermined initial state that can be easily prepared on a quantum computer. For each given $\bm{\theta}$, $U(\bm{\theta})$ is implemented on a quantum computer to evolve $\ket{\psi_0}$, and the corresponding energy $f(\bm{\theta})$ and its gradient $\nabla_{\bm \theta} f(\bm{\theta})$ can be  consequently obtained with measurements. By solving the optimization problem \eqref{eq:vqe}, the minimum value gives an approximation to the smallest eigenvalue of $H$, while $U(\bm{\theta}^*)\ket{\psi_0}$ approximates the corresponding eigenvector.

\subsection{Problem setup}\label{sec:problem_setup}

Although the problem of optimizing parameters in VQAs resembles classical optimization problems in machine learning, there exist key differences, particularly in how the cost function is evaluated and the level of accuracy that can be obtained for function and gradient evaluations. Firstly, quantum circuits used for estimating partial derivatives in various directions are typically different. This is predominantly because there is no straightforward method (in parallel to backpropagation) to estimate the entire gradient at once, given the inherent nature of quantum states. The predominant method for computing partial derivatives in a PQC is called the parameter-shift rule~\cite{crooks2019gradients, wierichs2022general, banchi2021measuring}, which can only be applied to evaluate one component of the partial derivatives at a time.
As a result, the estimation of the gradient, $\nabla f(\bm\theta)$, typically incurs a cost that is $d$ times greater than the cost associated with merely estimating a single partial derivative, $\partial_i f(\bm\theta)$.

Secondly, the evaluation of any given quantity, a function value or a partial derivative, requires measurement from quantum computers and is subject to measurement noise. We note that this noise is associated with a finite sampling space. For example, a measurement of the Hamiltonian in \eqref{eq:vqe}, which is defined in a finite-dimensional Hilbert space, yields one of its eigenvalues corresponding to the ansatz. Thus, with an increased number of samples or measurements, the central limit theorem suggests that the distribution of the sample average of the function value or the partial derivative can be approximated by a Gaussian distribution, and as a result, the accuracy of function and gradient evaluations can be relatively low. Therefore, the optimization algorithm must be designed to be resilient to measure noise.

In an idealized scenario, we may assume that both the function value and the partial derivatives incorporated into the optimization routine are subject to some Gaussian noise. However, the magnitude of corresponding noises can differ up to a constant, especially in situations where the parameter shift rule is applicable (see \cite{sweke2020stochastic}). With this consideration, the problem of optimizing PQCs can be stated as follows:

\begin{problem}[Optimizing parameterized quantum circuits]\label{problem_1} Finding an efficient algorithm to solve the optimization problem,
\begin{equation}
{\bm \theta}^*=\mathrm{argmin}_{\bm \theta\in\mathbb{R}^d}f(\bm \theta),
\end{equation}
under the following assumptions:
\begin{enumerate}

\item The cost of evaluating a partial derivative scales linearly with that of a function value. 

\item Every evaluation of the function and partial derivative is susceptible to Gaussian noise:
\begin{equation}\label{eqn:g_i}
f(\bm \theta) \to f(\bm \theta)+N(0,\sigma^2_1(\bm\theta))\,, \quad 
\partial_i f(\bm \theta) \to \partial_i f(\bm \theta)+N(0,\sigma^2_2(\bm\theta))\,.
\end{equation}

\end{enumerate}
\end{problem}

Here, $\sigma_1(\bm\theta)$ and $\sigma_2(\bm\theta)$ depend on the real implementation and are not necessarily the same (see \cite{sweke2020stochastic} for example). 
In practical applications, it is common to adaptively adjust the number of samples for varying values of $\bm\theta$ when evaluating $f$ or $\nabla f$. This adjustment ensures that the noise strength, $\sigma_1(\bm\theta)$ and $\sigma_2(\bm\theta)$, remains comparable. For simplicity, in our later analysis, we assume that $\sigma_2(\bm\theta)$ has a uniform upper bound $\sigma_\infty$ (see \cref{as:noise_bound}).

\subsection{Optimization methods}\label{sec:optimization}

One widely used approach for optimizing VQA is through the application of gradient descent (GD)~\cite{sweke2020stochastic, Yuan2019}. The classical gradient descent method involves iteratively updating the parameters $\bm \theta$ by utilizing the gradient of the cost function.
\begin{equation}\label{eq:def_GD}
    \bs{\theta}_{n+1} = \bs{\theta}_n - a_n\nabla f(\bs{\theta}_n),
\end{equation}
where $a_n$ denotes the learning rate. In light of the measurement process in quantum computing, we consider the noisy gradient descent: Rather than implementing \cref{eq:def_GD} with exact $\nabla f(\bs{\theta}_n)$, we apply an unbiased estimator $g(\bs{\theta})$ \footnote{$g(\bs{\theta})$ satisfies $\mathbb E[g(\bs{\theta})] = \nabla f(\bs{\theta})$.} [for example, \eqref{eqn:g_i}]. Consequently,  the parameter update  involves the following iteration,
\begin{equation}\label{eq:def_noisy_GD}
    \bs{\theta}_{n+1} = \bs{\theta}_n - a_n g(\bs{\theta}_n).
\end{equation}
Since $g(\bs{\theta}_n)$ is an unbiased estimation, Eq. \eqref{eq:def_noisy_GD} is equivalent to Eq. \eqref{eq:def_GD} in the expectation sense. Specifically, by taking the conditional expectation on both sides, we have
\begin{equation}\label{eqn:expect_gd}
\mathbb{E}(\bs{\theta}_{n+1}|\bs{\theta_n})=\bs{\theta}_n - a_n\nabla f(\bs{\theta}_n)
\end{equation}
where $\mathbb{E}(\cdot|\bs{\theta_n})$ denotes the conditional expectation given $\bs{\theta}_n$. 

While noisy gradient descent avoids the need for precise gradient information, it still requires the approximated full gradient information at each iteration. As argued before, in the context of VQA, it is often necessary to compute $d$ partial derivatives separately for each direction, which makes the cost of each updating step at least $d$. In this paper, we introduce an alternative optimization method called random coordinate descent (RCD)~\cite{Ste-2015, nesterov2012efficiency, RT_2014} for addressing \cref{problem_1}, with the goal of eliminating the cost dependency on $d$ in each step. RCD can be viewed as a variant of GD where the full gradient in GD is approximated by a randomly selected component of $\nabla f(\bs{\theta}_n)$ in each iteration. Specifically, one RCD iteration can be expressed as:
\begin{equation}\label{eq: def_RCD} 
    \bs{\theta}_{n+1} = \bs{\theta}_n - a_n\bs{e}_{i_n}\partial_{i_n}f(\bs{\theta}_n)\,.
\end{equation}Here $\bs{e}_{i_n}$ is the $i_n$-th unit direction,  $f_{i_n}'(\bs{\theta}_n)$ is the corresponding partial derivative of the cost function, and  $i_n$ is  a random index uniformly drawn from $\{1,2,\cdots, d\}$. Similarly to Eq. \eqref{eq:def_noisy_GD}, 
we can write the noisy RCD as:
\begin{equation}\label{eq:def_noisy_RGD}
    \bs{\theta}_{n+1} = \bs{\theta}_n - a_n\bs{e}_{i_n}g_{i_n}(\bs{\theta}_n)\,.
\end{equation}
It is important to emphasize that in each iteration of RCD \eqref{eq:def_noisy_RGD}, only one partial derivative information is needed. Consequently, within the scope of VQA (as stated in the first assumption of \cref{problem_1}), the cost per step of RCD is $d$ times smaller than that of GD.

\subsection{Contribution}
This paper primarily focuses on the investigation of RCD in the context of noisy gradient evaluation. Our analysis is conducted in a specific comparison with GD, and we illustrate that, under specific conditions, RCD can serve as a favorable alternative for optimizing parameterized quantum circuits. The main contributions of this study can be summarized as follows:
\begin{itemize}
    \item  We show that RCD is theoretically no worse than GD when measuring the complexity by the number of partial derivative calculations (Theorems \ref{thm:GD_informal} and \ref{thm:RCD_informal}), assuming the presence of noise and the \textbf{local} Polyak-Łojasiewicz (PL) condition, which is more appropriate for PQCs (see \cref{remark: no global}).  A summary of the complexities of the two methods is presented in Table \ref{table:complexity} for comparison.  It is important to highlight that the inequality $L_{\mathrm{avg}}\leq L\leq dL_{\mathrm{avg}}$ always holds. Consequently, when the optimization problem is highly anisotropic, i.e., $L\gg L_{\mathrm{avg}}$, RCD is more cost-effective than GD. In the most extreme case when $L$ is nearly equal to $dL_{\mathrm{avg}}$, RCD can reduce the complexity by a factor of $d$ compared to GD. 

    \item We demonstrate that (noisy) GD and RCD converge with high probability under the \textbf{local} PL condition (Assumption \ref{as: local PL}) and are stable under noisy gradient information. Specifically, if the initial parameter $\bs{\theta}_0$ resides within the basin $\mathcal{N}(\mathcal{X})$ surrounding the global minimum, then both noisy methods ensure that the subsequent parameters $\bs{\theta}_n$ will remain consistently within this basin until they converge with the same high probability (Lemmas \ref{lemma: stability of GD} and \ref{lemma: stability of RCD}). To the best of our knowledge, such stochastic stability has not been established for the optimization methods in variational quantum algorithms. We employ a, to the best of our knowledge, novel supermartingale approach and utilize Markov's inequality to quantify the average behavior of the iterates in the optimization method. We anticipate that, as far as we know, this new analytical framework would also be useful for other optimization algorithms for PQCs.

    \item We provide extensive empirical evidence demonstrating that RCD consistently delivers superior performance compared to GD (Sections \ref{sec:numerical_1} and \ref{sec:numerical_2}). Our numerical findings support the theoretical observation that RCD can take a larger learning rate than GD, leading to faster convergence. 
\end{itemize}

\begin{table}
\centering
\begin{tabular}{||c c c c ||}
 \hline
 Algorithm & Iteration cost & Iterations to reach $\epsilon$ tolerance & Total cost \\ [0.5ex] 
 \hline
 GD  & $\Omega(d)$ & $\tilde{\mathcal{O}}\left(\max\left\{\frac{L\sigma^2_\infty d}{\mu^2\epsilon},\frac{L}{\mu}\log\left(\frac{1}{\epsilon}\right)\right\}\right)$ &  $\Omega(\frac{L\sigma^2_\infty d^2}{\epsilon})$\\
 \hline
 RCD  & $\Omega(1)$ & $\tilde{\mathcal{O}}\left(\max\left\{\frac{L_{\mathrm{avg}}\sigma^2_\infty d^2}{\mu^2\epsilon},\frac{dL_{\max}}{\mu}\log\left(\frac{1}{\epsilon}\right)\right\}\right)$ &  $\Omega(\frac{L_{\mathrm{avg}}\sigma^2_\infty d^2}{\epsilon})$\\
 \hline
\end{tabular}
\caption{Comparison of the gradient descent and the randomized coordinate descent methods with an unbiased noisy gradient estimation. $d$ is the dimension of the parameter, and the smoothness constants $L$ and $L_{\mathrm{avg}}$ are defined in \eqref{L2} and \eqref{Lmax}, respectively. $\sigma^2_\infty$ is a bound for the measurement noise defined in \eqref{eqn:sigma_infinite}. In the table, we limit our attention to the situation where the learning rate is fixed.
}
\label{table:complexity}
\end{table}

\subsection{Related works}

\noindent{\bf Gradient descent with noise}
The noisy gradient descent \eqref{eq:def_noisy_GD} is a popular optimization method in the classical machine learning community. Notable examples are the stochastic gradient descent (SGD)~\cite{bottou2018optimization} or the perturbed gradient descent~\cite{jin2021nonconvex}. The convergence properties of the noisy gradient descent method in (\ref{eq:def_noisy_GD}) have been extensively studied ~\cite{moulines2011non, nguyen2018sgd, rakhlin2011making, nguyen2018tight,sweke2020stochastic, liu2022loss}. These previous works established that when the cost function is $L$-smooth, and $\mu$ strong convex (or PL~\cite{Polyak_1963}) and satisfies other mild conditions, $f(\bs{\theta}_n)$ converges linearly to an approximation of $f_{\min}$. 
In the recent study by Sweke $et$ $al$. \cite{sweke2020stochastic}, a comparable theoretical result was demonstrated for the application of the noisy GD method to quantum optimization problems. The authors establish the $L$-smooth property of the loss function \eqref{eq:vqe} and achieve the convergence of noisy gradient descent under the assumption of unbiased gradient estimation and global PL condition. Another approach to interpreting noise involves assuming that the variational quantum state is influenced by a noisy quantum map~\cite{PhysRevA.102.052414}, ultimately resulting in a noisy loss function. In~\cite{PhysRevA.102.052414}, the authors investigate the convergence properties of noisy GD, considering the impact of both noise and the stochastic nature of quantum measurement outcomes. 
They specifically illustrate that, when assuming the global PL condition, noisy GD has the capability to converge to a region where the perturbation in loss is bounded by a quantity associated with the quantum fisher information of the variational state. With an important observation that PQCs typically do not obey the global PL condition (see \cref{remark: no global}), we formulate a local PL condition, a property that is more appropriate for PQCs, and we establish the convergence of noisy GD and RCD under this weaker condition and unbiased gradient estimation. We substantiate the efficacy of both optimization algorithms with high probability. To the best of our knowledge, this result is novel and has not been previously explored.


\noindent{\bf Randomized coordinate descent}
  The (RCD) method has proven its efficiency over GD in many large-scale optimization problems. The convergence properties of RCD have been extensively explored in the fields of machine learning and optimization~\cite{Ste-2015, nesterov2012efficiency, RT_2014, nesterov2017efficiency, lee2013efficient,chen2023global}. For example, it was shown in \cite{nesterov2012efficiency} that when $f$ is strongly convex, the convergence complexity of RCD can be consistently lower than or equal to that of GD. Here complexity refers to the total number of partial derivative calculations required for convergence. Later, for strongly convex functions, RCD accelerations were achieved with adaptive momentum-based strategies in various regimes \cite{lee2013efficient,nesterov2017efficiency}. For the nonconvex optimization, recent work \cite{chen2023global} shows the global convergence behavior of RCD with a focus on saddle point avoidance. 
  Nevertheless, 
  convergence rates of RCD have been scarcely studied for nonconvex optimization problems. More importantly, most related works focused on the case where partial derivatives are computed exactly, while in this work, we deal with the case where partial derivatives are estimated, which is subject to noise, and we will refer to it as {\it noisy} RCD \eqref{eq:def_noisy_RGD}.

\noindent{\bf Locally-defined convex conditions for convergence analysis}
One limitation of the conventional convergence analysis is its reliance on assumptions of global convex \cite{bottou2018optimization} or global PL \cite{sweke2020stochastic} conditions for the cost function $f(\bm \theta)$. However, we show that such global assumptions are not satisfied in quantum problem applications with PQCs, as elaborated on \cref{remark: no global}.
Thus, one must weaken such a global assumption to a local one in the analysis. Convergence analysis under local assumptions requires more sophisticated techniques (see \cite{fehrman2020convergence,ko2022local,patel2022global,mertikopoulos2020almost} and therein), but it provides important insights that help to interpret empirical results. In our work, we make a local nonconvex condition based on the local PL condition~\cite{liu2022loss}. Under this condition and suitable assumptions for the cost function, we show a local convergence analysis for the optimization of PQCs. By employing a stochastic stability argument, we demonstrate that the noisy GD and RCD methods maintain a comparable convergence rate under our local PL condition with high probability (refer to \cref{thm:GD_informal} and \cref{thm:RCD_informal}). To the best of the authors' knowledge, this paper is the first to provide a rigorous result for the complexity of noisy GD and RCD under a local PL condition designed for variational quantum algorithms built from PQCs. The recent work ~\cite{an2024convergencestochasticgradientdescent} also provided a local convergence analysis for the noisy GD but their local assumption and result are more suitable for deep neural networks in the classical machine learning.

\noindent{\bf Other quantum optimization methods}

Simultaneous perturbation stochastic approximation (SPSA)~\cite{Spall_1992,Spall_1987}, is a zero th-order method (i.e., it only involves function values) that has some  similarity to the RCD method. Each iteration of SPSA employs a finite-difference formula to estimate one directional derivative of the loss function. SPSA has been used in~\cite{Kandala_2017} to update the control parameters in VQE and achieves a level of accuracy comparable to standard gradient descent methods. We highlight that when each directional or partial derivative is approximated using only two loss function values through finite-differences, the cost per step of SPSA and RCD should be comparable. However, in scenarios where analytic partial derivatives are available, such as those where the parameter shift rule \cite{sweke2020stochastic} is applicable, RCD becomes significantly more robust than SPSA, since the small parameter in finite-difference formulas tends to amplify the noise inherence in the measurement.  
In contrast, derivative estimations from analytic derivatives are more stable in the presence of measurement and quantum noise, as substantiated by the numerical example provided in Appendix \ref{sec:RCD_vs_SPSA}\footnote{One can also use the  parameter shift rule to calculate the directional derivative in SPSA. Nevertheless, the computation of a random directional derivative always requires evaluating $d$ partial derivatives, making it $d$ times more expensive than the cost of RCD per step.
}.

Besides SPSA, there are other promising zero-order methods in variational quantum optimization, more commonly known as gradient-free methods. Notably, policy gradient-based techniques have shown their effectiveness in noise robust optimization in the NISQ~\cite{yao2020policy}. Sung $et$ $al$.\cite{sung2020using} construct models based on the previous method and further improve the sample efficiency of the methods. Furthermore, these zero-order optimization methods leverage the strengths of reinforcement learning~\cite{pmlr-v145-yao22a, fosel2021quantum, PhysRevB.98.224305, PhysRevX.8.031086}, Monte Carlo tree search~\cite{yao2022monte, meng2021quantum, rosenhahn2023monte}, and natural evolutionary strategies~\cite{anand2021natural, zhao2020natural, giovagnoli2023qneat},  Bayesian~\cite{tibaldi2022bayesian, tamiya2022stochastic}, as well as Gaussian processes~\cite{zhu2019training}. 

In addition to these zero-order methods, several other optimization methods have been proposed recently~\cite{jordan2005fast,rebentrost2019quantum,stokes2020quantum,gilyen2019optimizing,gao2021quantum}. One interesting example is the quantum natural gradient (QNG)~\cite{stokes2020quantum}, an approximate second-order method, that incorporates the quantum geometric tensor, which is similar to the natural gradient in classical machine learning. While an outcome of measurement is used as an estimate of the gradient in the QNG or the noisy gradient \eqref{eq:def_noisy_GD} from \eqref{eq: pqc},
the Jordan algorithm \cite{jordan2005fast} encodes the partial derivatives as binary numbers in the computational basis. 
This algorithm was later  improved by Gilyen $et$ $al$.  \cite{gilyen2019optimizing} using high-order finite-difference approximations,  and applications to VQAs for a certain class of smooth functions were considered. However, the methods \cite{jordan2005fast,gilyen2019optimizing} require a significant number of ancilla qubits and complex control logics, due to the binary encoding of partial derivatives. Alternatively, \cite{abbas2023quantum} proposed a quantum backpropagation algorithm, which uses $\log d$ copies of the quantum state to compute $d$ derivatives. The overhead for computing $d$ derivatives is $\text{polylog}(d)$ times that of function evaluation (therefore mimicking backpropagation). One of the main drawbacks of their algorithm is that there is an exponential classical cost associated with the process.
For a more restrictive class of cost functions (polynomial functions),
\cite{rebentrost2019quantum} proposed a framework to implement the gradient descent and Newton's methods. This method also requires the coherent implementation of the cost function on a quantum computer using e.g., sparse input oracle, and thus can be challenging to implement in near-term devices.

\subsection{A numerical 
illustration: Variational quantum eigenvalue solver}\label{sec:numerical_1}

As a brief illustration of the performance of noisy GD versus RCD methods, we consider the transverse-field Ising model (TIFM),
\begin{equation}\label{eq:isingmodel}
    H = J\sum_{j=1}^{N-1}Z_jZ_{j+1}+\Delta\sum_{j=1}^NX_j,
\end{equation}with the coefficient $J=1$ and $\Delta = 1.5$.  Here $N$ denotes the number of qubits, and $X_j,Z_j$ are Pauli operators acting on the $j$-th qubit. In \cref{fig:vqe}, we set $N=10$. To implement the quantum circuits, we use Qiskit Aer-simulator \cite{Qiskit}
with the command ``result.get\_counts" that outputs measurement outcomes as classical bitstrings. We utilize the resulting classical bitstrings to compute partial derivatives by applying the parameter shift rule \cite{sweke2020stochastic}. Thus, the result in \cref{fig:vqe} takes into account the measurement noise. 

In each experiment, 10 independent simulations are used  with a fixed initialization. The parameterized quantum circuit used for estimating the ground state energy of the Hamiltonian \eqref{eq:isingmodel} is given in  Figure~\ref{fig:circuit} (Appendix \ref{sec:circuit}).

We compare the optimization performance of the two methods in terms of the number of partial derivative evaluations. The optimization results in \cref{fig:vqe} suggest that RCD requires nearly 4 times fewer partial derivative evaluations than GD to converge to an energy ratio of 0.99 and a fidelity of 0.97, both of which are higher than the energy ratio and the fidelity obtained from GD. This observation can be explained by the analysis in \cref{sec:main_result}, i.e.,  RCD can be more efficient than GD when the ratio of Lipschitz constants ($L/L_{\mathrm{avg}}$ or $L/L_{\max}$) is significantly larger than $1$. Specifically, the ratio of the total computational cost of GD to RCD can be linked to the Lipschitz ratios, as summarized in \cref{table:complexity}.  For instance, in the lower panels of \cref{fig:vqe}, we observe that the ratio $L/L_{\mathrm{avg}}$ and $L/L_{\max}$ remains above 20 and 11 throughout the iterations on average. The faster convergence of RCD can be attributed to these large Lipschitz ratios.

\begin{figure}[t]
   \centering
   \begin{subfigure}[b]{0.32\textwidth}
    \centering
    \includegraphics[width=\textwidth]{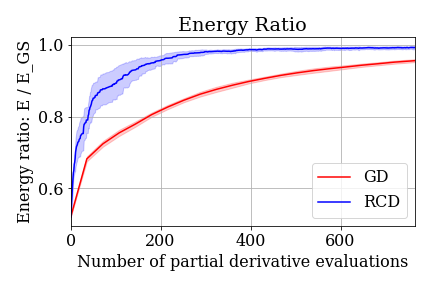}
    \end{subfigure}
   \begin{subfigure}[b]{0.32\textwidth}
    \centering
    \includegraphics[width=\textwidth]{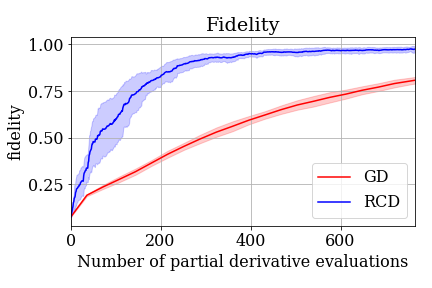}
    \end{subfigure}
    
    \begin{subfigure}[b]{0.32\textwidth}
    \centering
    \includegraphics[width=\textwidth]{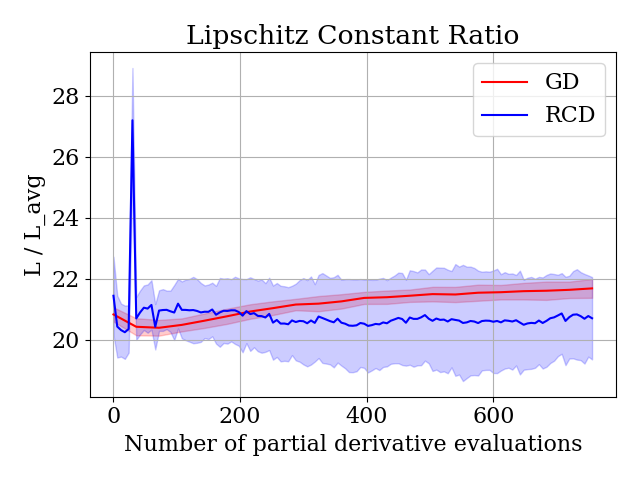}
    \end{subfigure}
   \begin{subfigure}[b]{0.32\textwidth}
    \centering
    \includegraphics[width=\textwidth]{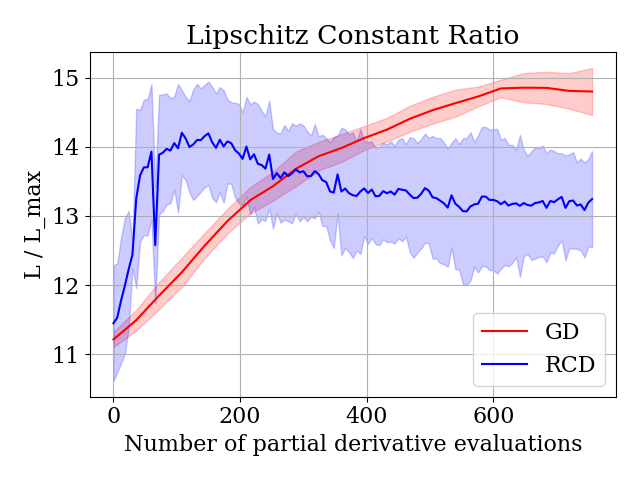}
    \end{subfigure}
    \caption{The comparison of the performance of GD (red) and RCD (blue) for optimizing the Hamiltonian  \eqref{eq:isingmodel}. The unit of the $x$ axis labels the number of partial derivative evaluations as an indication of the computational complexity. 
    The top  panels show the approximation of the ground state, including the energy ratio (left) and  fidelity (right). In the bottom panels, we show the ratios of Lipschitz constants obtained from the two methods are compared: $\frac{L}{L_{avg}}$ (left) and $\frac{L}{L_{\max}}$ (right). }
    \label{fig:vqe}
\end{figure}

\section{Preliminaries and main results}
Before we establish results pertinent to the performance of RCD, we first establish consistent notations and assumptions, which are presented in Section \ref{sec:notation}. Following that, we outline our key theoretical findings in Section \ref{sec:main_result}.

\subsection{Notations and assumptions}\label{sec:notation}
Given a vector $\bm v\in\mathbb{R}^d$, we use standard norms for $\bm v$, including the $2$-norm $\|\bm v\|_2:=\sqrt{\sum_iv_i^2}$ and the $\infty$-norm $\|\bm v\|_{\infty}:=\max_i\left|v_i\right|$. In order to ensure the convergence of gradient-based methods, we list several technical assumptions. 

We assume the cost function $f$ satisfies the $L$-smoothness. Specifically, it satisfies the following assumption:    
\begin{assumption}\label{as:f_L_smooth} The cost function $f$ is $L$-smooth, in that, 
    \begin{equation}\label{L2}
    \|\nabla f(\bs{\theta}) - \nabla f(\bs{\theta}')\|_2\leq L\|\bs{\theta}-\bs{\theta}'\|_2,\quad \text{for all }\bs{\theta},\bs{\theta}'\in\mathbb{R}^d.
\end{equation}
\end{assumption}
It is worth noting that assuming $L$-smoothness is standard when analyzing the convergence of gradient-based algorithms. Moreover, in the setting of PQC, verifying the $L$-smoothness of the loss function is straightforward. An example of this can be seen in~\cite[Theorem 2]{sweke2020stochastic}, where the author demonstrates that a broad class of PQCs can generate loss functions that satisfy $L$-smoothness properties.

Since the gradient is Lipschitz continuous, the partial derivatives are Lipschitz continuous as well. We define the componentwise Lipschitz constants,    
\begin{definition}We say that a function $f$ is $L_i$-smooth with respect to the $i$-th component if
    \begin{equation}\label{Li}
        |\partial_i f(\bs{\theta}+\bs{e}_ih)-\partial_i f(\bs{\theta})|\leq L_i|h| \quad \forall h\in \mathbb{R},
    \end{equation}where $\partial_i f(\bs{\theta})$ denotes the partial derivative in the $i$-th direction. 
    
\end{definition}From these componentwise Lipschitz constants, we denote the maximum and average of those constants as
\begin{equation}\label{Lmax}
    L_{\max}:=\max_{1\leq i \leq d} L_i,\quad L_{\mathrm{avg}}=\frac{1}{d}\sum^d_{i=1}L_i\,.
\end{equation}
As shown in~\cite{Ste-2015}, in general we have,
\begin{equation} \label{eq:LiL}
L_i \le L_{avg}\le L_{\max} \le L \le d L_{\max}\,.
\end{equation}
Another interpretation is through the Hessian:  When $f$ is twice continuously differentiable, the condition \eqref{L2} is equivalent to $\nabla^2 f(x)\preceq L I_d$, and similiarly, the condition \eqref{Li} is equivalent to $\sup_{\bs{\theta}}\left|\partial_{i}^2 f(\bs{\theta})\right|\leq L_i$. We note that both the upper and lower bounds of $L$ in terms of $L_{\max}$ in \eqref{eq:LiL} are tight. If $\nabla^2 f$ is a diagonal matrix, then $L_{\max}=L$, both being the largest diagonal element of $\nabla^2 f$. 
(This is the case in which all coordinates are independent of each other, for example, $f = \sum_i\lambda_ix_i^2$.)
On the other hand, if $\nabla^2f = \mathsf{e}\cdot\mathsf{e}^\top$, where $\mathsf{e}\in\mathbb{R}^d$ satisfies $\mathsf{e}_i=1$ for all $i$, then $L = dL_{\max}$. 
This is a situation where $f$ is highly anisotropic, e.g.,  $f = (\sum_i x_i)^2/2$, where $L=d$ and $L_{\max}=1$. 
In addition, when $L_{avg}=L$, we see that $L_{avg}=L_{\max}=L_i$ for all $i$.

\medskip

Next, it is important to note that the estimation of the gradients in quantum computing can be susceptible to noise, which stems from the inherent nature of quantum measurements. Consequently, in our analysis and comparative studies of different optimization methods, we will take into account the presence of noise. To facilitate such analysis, we make the following assumption:

\begin{assumption}[Bounds of the noise with respect to the $2$-norm]\label{as:noise_bound}
Given any $\bs{\theta}\in\mathbb{R}^d$, we assume that we can find an unbiased random estimate $\bs{g}(\bs{\theta})$ for the gradient $\nabla f(\bs{\theta})$, meaning that 
\[
\mathbb{E}\left[\bs{g}(\bs{\theta})\right]=\nabla f(\bs{\theta})\,.
\]
Furthermore, we assume that there exists a constant $\sigma^2_\infty>0$ such that
\begin{subequations}
    \begin{eqnarray}  \sigma^2_\infty>\sup_{\bs{\theta}\in\mathbb{R}^d}\max_{1\leq i \leq d}\mathbb{E}\left[|\partial_i f(\bs{\theta}) - g_i(\bs{\theta})|^2\right]\,.\label{eqn:sigma_infinite}
    \end{eqnarray}
\end{subequations}
Here, we also assume $g(\bs{\theta})$ is independent for different $\bs{\theta}$.
\end{assumption}
The constraint on the noise bounds is a standard assumption in the analysis of noisy gradient-based algorithms. A straightforward case that satisfies this assumption is $\nabla f(\bs{\theta})\in L^\infty(\mathbb{R}^d)$, a condition frequently observed in optimizations involving PQCs. Moreover, in cases where $|\nabla f(\bs{\theta})|$ lacks an upper bound, we can adjust the sample sizes in various $\bs{\theta}$ values to effectively reduce the variance to a bounded range.

Additionally, we assume the existence of a basin encompassing the global minimum, within which $f$ satisfies the Pl condition~\cite{Polyak_1963} and, equivalently, the local PŁ condition~\cite{liu2022loss}.

\begin{assumption}[Local PL condition]\label{as: local PL} Define $\mathcal{X}$ as the set of global minima and $f_{\min}$ as the global minimum value evaluated over $\mathcal{X}$. Then there exists a $\delta_f,\mu>0$ such that for any $\bs\theta\in\mathcal{N}(\mathcal{X}):=f^{-1}([f_{\min},\delta_f))$,    
\begin{equation*}
        \|\nabla f(\bs{\theta})\|^2\geq 2\mu \left(f(\bs{\theta})-f_{\min}\right).
    \end{equation*}
\end{assumption}
While a similar condition was theoretically justified in classical machine learning (e.g., over-parameterized regimes \cite{liu2022loss}), analysis on the case of parameterized quantum circuits \cref{eq: pqc} is still lacking and challenging. The technical tools in \cite{liu2022loss} might be applicable for the case of PQCs. We leave this for future work. Nevertheless, it is worthwhile to highlight that the local PL condition is defined not on the entire space $\mathbb{R}^d$ but the neighborhood of global minima $\mathcal{N}(\mathcal{X})$, which is reasonable in the setting of variational quantum algorithm. We support this argument with the following remark.


\begin{remark}[PQCs are neither global PL nor globally convex]\label{remark: no global}
    Let $f(\bs\theta)$ be a cost function defined by some parameterized quantum circuit \eqref{eq:vqe}. Note that $f$ is  periodic and smooth, due to its specialized form. By the extreme value theorem, we see that there exist global maximum and minimum of $f$, denoted by $\bs\theta_{\max}$ and $\bs\theta_{\min}$. In general, $f$ is not constant, which means that $f_{\max}>f_{\min}$. Had $f$ satisfied the global PL condition, it would have followed that at the global maximum $\bs\theta_{\max}$, 
    \begin{equation}
        0=\|\nabla f(\bs\theta_{\max})\|^2\geq 2\mu \left(f_{\max}-f_{\min}\right)\geq 0,
    \end{equation}which gives a contradiction to the general case that $f_{\max}>f_{\min}$. As another case study, if $f$ is assumed to be convex, namely,
    \begin{equation}
        f(\bs \theta')\geq f(\bs \theta) + (\nabla f(\bs\theta),\bs\theta'-\bs\theta)\text{ for all }\bs\theta,\bs\theta'\in\mathbb{R}^d,
    \end{equation}then setting $\bs\theta = \bs\theta_{\max}$ and $\bs\theta'=\bs\theta_{\min}$ results in a contradiction. Therefore, the cost function $f$ that is constructed from an ansatz similar to \eqref{eq:vqe}, will not satisfy global PL or convex conditions in general.
\end{remark}

\subsection{Main result: complexity comparison of GD and RCD}\label{sec:main_result}
In this study, our main focus is to compare the complexity of noisy GD and RCD under the assumptions of a local PL condition~\ref{as: local PL}. For the sake of simplicity, in the remaining part of this paper, we will refer to ``noisy gradient descent" and ``noisy randomized coordinate descent" as ``GD" and ``RCD", respectively, without explicitly mentioning the term ``noisy".

The main theoretical results are summarized in the following two theorems:
\begin{theorem}[Complexity of GD~\eqref{eq:def_noisy_GD} under the local PL condition]\label{thm:GD_informal}
    Assume $f$ is a $L$-smooth function that satisfies assumption \ref{as: local PL} and $g$ satisfies assumption \ref{as:noise_bound}. Given $\epsilon>0$ small enough, if $f(\bs{\theta}_1)\leq \delta_f$ and $a_n=\Theta(\min\left\{\mu \epsilon/(L\sigma^2_\infty d),1/L\right\})$ in GD \eqref{thm:GD_noisy}, then with probability $1-f(\bs{\theta}_1)/\delta_f-o(1)$, there  exists  at least one
        \begin{equation}
            n<N=\tilde{\Theta}\left(\max\left\{L\sigma^2_\infty d/(\mu^2\epsilon),L/\mu\right\}\right) 
        \end{equation}such that $f(\bs{\theta}_n)\leq f_{\min}+\epsilon$.

\end{theorem}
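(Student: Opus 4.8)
The plan is to combine a one-step descent estimate with a stopping-time/supermartingale argument that confines the iterates to the basin $\mathcal{N}(\mathcal{X})$ long enough for the local PL condition to force convergence. Throughout I write $h_n := f(\bs{\theta}_n) - f_{\min}\ge 0$, take the learning rate fixed at $a_n\equiv a$, and let $\mathcal{F}_n$ denote the filtration generated by the iterates.

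First I would derive the basic recursion. Applying the descent lemma from $L$-smoothness (\cref{as:f_L_smooth}) to the update $\bs{\theta}_{n+1} = \bs{\theta}_n - a\,\bs{g}(\bs{\theta}_n)$, taking the conditional expectation, and using unbiasedness together with the componentwise variance bound $\mathbb{E}\|\bs{g}(\bs{\theta}_n)-\nabla f(\bs{\theta}_n)\|_2^2 \le d\sigma_\infty^2$ obtained by summing \eqref{eqn:sigma_infinite} over the $d$ coordinates (\cref{as:noise_bound}), one gets
\[
\mathbb{E}[h_{n+1}\mid\mathcal{F}_n] \le h_n - a\left(1-\frac{La}{2}\right)\|\nabla f(\bs{\theta}_n)\|_2^2 + \frac{La^2}{2}\,d\sigma_\infty^2 .
\]
For $a \le 1/L$ the factor $1-La/2 \ge 1/2$. \emph{Provided} $\bs{\theta}_n \in \mathcal{N}(\mathcal{X})$, the local PL condition (\cref{as: local PL}) gives $\|\nabla f(\bs{\theta}_n)\|_2^2 \ge 2\mu h_n$, whence
\[
\mathbb{E}[h_{n+1}\mid\mathcal{F}_n] \le (1-a\mu)\,h_n + \frac{La^2}{2}\,d\sigma_\infty^2 .
\]
The learning-rate choice $a = \Theta(\min\{\mu\epsilon/(L\sigma_\infty^2 d),\,1/L\})$ is precisely what makes the additive ``noise floor'' satisfy $\tfrac{La^2}{2}d\sigma_\infty^2 \le \tfrac{a\mu\epsilon}{2}$; consequently, as long as $h_n > \epsilon$ and $\bs{\theta}_n\in\mathcal{N}(\mathcal{X})$, the recursion sharpens into a genuine contraction $\mathbb{E}[h_{n+1}\mid\mathcal{F}_n] \le (1-\tfrac{a\mu}{2})\,h_n =: r\,h_n$ with $r\in(0,1)$.

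The crux, and the main obstacle, is that this contraction is only available inside the basin, so I must control the probability that the trajectory ever leaves $\mathcal{N}(\mathcal{X})$ before reaching the target sublevel set. I would introduce the stopping time $T := \inf\{n: h_n \le \epsilon\} \wedge \inf\{n: f(\bs{\theta}_n)\ge\delta_f\}$ (first success or first exit). Before $T$ we have $h_n>\epsilon$ and $f(\bs{\theta}_n)<\delta_f$, and the contraction above shows that both the stopped value $f(\bs{\theta}_{n\wedge T})$ itself and the exponentially reweighted process $r^{-(n\wedge T)}h_{n\wedge T}$ are nonnegative supermartingales (for the former, $f(\bs{\theta}_n)\ge f_{\min}$ makes the affine recursion non-increasing). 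This is exactly where the stability statement \cref{lemma: stability of GD} enters. Optional stopping gives $\mathbb{E}[f(\bs{\theta}_{N\wedge T})]\le f(\bs{\theta}_1)$; since exiting the basin before success forces $f(\bs{\theta}_{N\wedge T})\ge\delta_f$, Markov's inequality bounds the probability of that event by $f(\bs{\theta}_1)/\delta_f$, which is the origin of that term in the statement. Applying Markov to the reweighted supermartingale controls the complementary failure mode, ``remain in the basin yet never descend below $f_{\min}+\epsilon$ within $N$ steps,'' by $h_0\,r^{N-1}/\epsilon$.

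Finally I would choose $N$. Since $1/a = \Theta(\max\{L\sigma_\infty^2 d/(\mu\epsilon),\,L\})$, taking $N = \tilde{\Theta}(\max\{L\sigma_\infty^2 d/(\mu^2\epsilon),\,L/\mu\}) = \tilde{\Theta}(\tfrac{1}{a\mu}\log\tfrac{1}{\epsilon})$ forces $h_0\,r^{N-1}/\epsilon = o(1)$, the logarithmic factor absorbing the dependence on $h_0$ and on the desired failure level. Summing the two failure probabilities yields $\mathbb{P}(\forall\,n<N:\ h_n>\epsilon) \le f(\bs{\theta}_1)/\delta_f + o(1)$, which is the complement of the claimed event. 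The two branches of the $\max$ in $N$ correspond to the two branches of the $\min$ in the learning rate: when the noise-floor constraint $a\le\mu\epsilon/(L\sigma_\infty^2 d)$ is binding we obtain the $L\sigma_\infty^2 d/(\mu^2\epsilon)$ term, and when $a\le 1/L$ is binding we obtain the $L/\mu$ term. I expect the delicate bookkeeping to be (i) verifying both supermartingale properties cleanly across the stopping time $T$, and (ii) tracking constants so that the noise floor sits below $\epsilon/2$ and the final probability comes out as $1-f(\bs{\theta}_1)/\delta_f-o(1)$ rather than a looser bound.
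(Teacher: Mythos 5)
Your proposal is correct and follows the same overall strategy as the paper: a one-step descent estimate combined with the local PL condition inside the basin, a supermartingale construction to confine the iterates, Ville/Markov inequalities, and the decomposition of failure into ``exit the basin before success'' (bounded by $f(\bs\theta_1)/\delta_f$, exactly the paper's \cref{lemma: stability of GD}) and ``stay in the basin but never reach the $\epsilon$-sublevel set within $N$ steps.'' Where you genuinely diverge is in how the second failure mode is controlled. The paper freezes an indicator-weighted process $V_n=f(\bs\theta_n)\mathbb{I}_n$ at the stopping time $\tau=\inf\{k: f(\bs\theta_k)\le\epsilon\}$, obtains a per-step decay factor $\bigl(1-\tfrac{\mu a}{2}\,\mathbb{P}(\tau>n)\bigr)$ that depends on the unknown failure probability $p_{\mathrm{fail}}=\mathbb{P}(\tau>N)$ itself, and must close the argument by contradiction (assume $p_{\mathrm{fail}}>\eta$, plug in $N$, derive absurdity from the self-referential inequality \eqref{eqn:must_hold}). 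You instead introduce the combined stopping time $T$ (first success or first exit) and the exponentially reweighted stopped process $r^{-(n\wedge T)}h_{n\wedge T}$ with $r=1-\tfrac{\mu a}{2}$; since the contraction is only ever invoked on the $\mathcal{F}_n$-measurable event $\{T>n\}$, this process is a bona fide supermartingale and Markov's inequality gives $\mathbb{P}(T>N)\le h_1 r^{N}/\epsilon$ directly, with no contradiction argument and no appearance of $p_{\mathrm{fail}}$ in the rate. Your route is the more standard exponential-supermartingale argument and is arguably cleaner; the paper's route, at the price of the contradiction step, works with the unstopped index $N$ throughout and reuses the same process $V_n$ for both the stability lemma and the convergence theorem. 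Both yield the same complexity $N=\tilde{\Theta}\bigl(\tfrac{1}{a\mu}\log\tfrac{1}{\epsilon}\bigr)$ and the same success probability $1-f(\bs\theta_1)/\delta_f-o(1)$. The only points needing care in a full write-up of your version are the ones you already flag: verifying the supermartingale property across the two branches of $T$, and checking that the learning-rate choice puts the noise floor $\tfrac{La^2}{2}d\sigma_\infty^2$ below $\tfrac{a\mu\epsilon}{2}$ so that the contraction holds whenever $h_n>\epsilon$.
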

\begin{theorem}[Complexity of RCD~\eqref{eq:def_noisy_RGD} under the local PL condition]\label{thm:RCD_informal}
    Assume $f$ is a $L$-smooth function that satisfies assumption \ref{as: local PL} and $g$ satisfies assumption \ref{as:noise_bound}. Given $\epsilon>0$ small enough, if $f(\bs{\theta}_1)\leq \delta_f$ and $a_n=\Theta(\max\left\{\mu \epsilon/(L_{\mathrm{avg}}\sigma^2_\infty d),1/L_{\max}\right\})$ in RCD \eqref{thm:RCD_noisy}, then with probability $1-f(\bs{\theta}_1)/\delta_f-o(1)$, there  exists  at least one 
        \begin{equation}
        n<N=\tilde{\Theta}(\max\left\{L_{\mathrm{avg}}\sigma^2_\infty d^2/(\mu^2\epsilon),L_{\max}d/\mu\right\})   
        \end{equation}
        such that $f(\bs{\theta}_n)\leq f_{\min}+\epsilon$.
\end{theorem}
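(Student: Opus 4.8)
The plan is to track the centered objective $V_n := f(\bs\theta_n) - f_{\min}$ and combine a one-step descent estimate with a stopping-time argument that confines the iterates to the basin $\mathcal{N}(\mathcal{X})$ with high probability. First I would derive the RCD descent inequality. Because a single RCD step perturbs only coordinate $i_n$, the componentwise smoothness \eqref{Li} gives $f(\bs\theta_{n+1}) \leq f(\bs\theta_n) - a_n \partial_{i_n} f(\bs\theta_n)\, g_{i_n}(\bs\theta_n) + \tfrac{L_{i_n}}{2} a_n^2\, g_{i_n}(\bs\theta_n)^2$. Taking the conditional expectation given the history $\mathcal{F}_n$ (over the uniform index $i_n$ and the unbiased noise), and using $\mathbb{E}[g_i^2] \leq (\partial_i f)^2 + \sigma_\infty^2$ from \cref{as:noise_bound} together with $\sum_i L_i (\partial_i f)^2 \leq L_{\max}\|\nabla f\|^2$ and $\sum_i L_i = d L_{\mathrm{avg}}$, I obtain
\begin{equation*}
\mathbb{E}[f(\bs\theta_{n+1}) \mid \mathcal{F}_n] \leq f(\bs\theta_n) - \frac{a_n}{d}\Big(1 - \frac{a_n L_{\max}}{2}\Big)\|\nabla f(\bs\theta_n)\|^2 + \frac{a_n^2 L_{\mathrm{avg}} \sigma_\infty^2}{2}.
\end{equation*}
Choosing $a_n \leq 1/L_{\max}$ controls the bracket by $1/2$, and invoking the local PL condition (\cref{as: local PL}) on the event $\{\bs\theta_n \in \mathcal{N}(\mathcal{X})\}$ yields the contraction-with-floor $\mathbb{E}[V_{n+1} \mid \mathcal{F}_n] \leq (1 - a_n \mu/d) V_n + \tfrac{1}{2}a_n^2 L_{\mathrm{avg}} \sigma_\infty^2$. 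The second role of the learning rate enters here: requiring in addition $a_n \leq \mu\epsilon/(L_{\mathrm{avg}}\sigma_\infty^2 d)$ makes the floor at most $\tfrac{a_n\mu}{2d}\epsilon$, so that whenever $V_n > \epsilon$ the floor is absorbed and $\mathbb{E}[V_{n+1}\mid\mathcal{F}_n] \leq (1 - \tfrac{a_n\mu}{2d})V_n$.

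Next I would set up the stopping-time machinery to handle the fact that the above holds only inside the basin. Writing $\tilde\delta = \delta_f - f_{\min}$, define $\tau = \inf\{ n : V_n \leq \epsilon \text{ or } V_n \geq \tilde\delta \}$. On $\{n < \tau\}$ one has $\epsilon < V_n < \tilde\delta$, so the iterate lies in $\mathcal{N}(\mathcal{X})$ and is not yet $\epsilon$-optimal; hence the strict contraction applies and the stopped process $W_n := V_{n\wedge\tau}$ is a nonnegative supermartingale with $W_1 = f(\bs\theta_1) - f_{\min}$. Doob's (Ville's) maximal inequality then bounds the escape probability,
\begin{equation*}
\mathbb{P}(V_\tau \geq \tilde\delta) \leq \mathbb{P}\big(\sup_n W_n \geq \tilde\delta\big) \leq \frac{\mathbb{E}[W_1]}{\tilde\delta} = \frac{f(\bs\theta_1) - f_{\min}}{\delta_f - f_{\min}},
\end{equation*}
which, after shifting $f_{\min}$ to $0$, is the advertised $f(\bs\theta_1)/\delta_f$.

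Third, I would quantify the time to reach the target by rescaling the supermartingale. With $c = a_n\mu/(2d)$ fixed, the process $Z_n := (1-c)^{-(n\wedge\tau)} V_{n\wedge\tau}$ is again a nonnegative supermartingale, so $\mathbb{E}[Z_N] \leq (1-c)^{-1}(f(\bs\theta_1)-f_{\min})$. On $\{\tau > N\}$ we have $V_N > \epsilon$, hence $Z_N > (1-c)^{-N}\epsilon$, and Markov's inequality gives
\begin{equation*}
\mathbb{P}(\tau > N) \leq \frac{\mathbb{E}[Z_N]}{(1-c)^{-N}\epsilon} \leq \frac{(1-c)^{N-1}(f(\bs\theta_1)-f_{\min})}{\epsilon}.
\end{equation*}
Choosing $N = \tilde\Theta\big(c^{-1}\log(1/\epsilon)\big)$, which under the two regimes of $a_n$ is exactly $\tilde\Theta(\max\{L_{\mathrm{avg}}\sigma_\infty^2 d^2/(\mu^2\epsilon),\, L_{\max} d/\mu\})$, drives this to $o(1)$. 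Combining the two bounds, the event that RCD reaches $f(\bs\theta_n) \leq f_{\min}+\epsilon$ for some $n < N$ without ever leaving $\mathcal{N}(\mathcal{X})$ fails with probability at most $\mathbb{P}(V_\tau \geq \tilde\delta) + \mathbb{P}(\tau > N) \leq f(\bs\theta_1)/\delta_f + o(1)$, which is the claim.

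The main obstacle is the local (rather than global) PL condition: the descent and contraction estimates are valid only while the iterates remain in $\mathcal{N}(\mathcal{X})$, so convergence and stability must be established simultaneously. The device that resolves this is the single stopping time $\tau$ hitting either boundary, which turns $V_{n\wedge\tau}$ into a genuine supermartingale; the escape bound then follows from the maximal inequality and the hitting-time bound from the exponentially rescaled supermartingale with Markov's inequality. A secondary delicate point is calibrating the two constraints on $a_n$ so that the noise floor sits strictly below $\epsilon$ while the per-step contraction stays of order $a_n\mu/d$, since it is precisely this balance that produces the two complexity regimes and the factor-$d$ separation from GD recorded in \cref{table:complexity}.
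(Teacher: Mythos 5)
Your proposal is correct, and its skeleton --- the componentwise descent inequality for a single-coordinate update, the local PL contraction with the noise floor absorbed once $V_n>\epsilon$ via the constraint $a_n\leq\mu\epsilon/(L_{\mathrm{avg}}\sigma_\infty^2 d)$, and a stopped nonnegative supermartingale plus a maximal inequality to control escape from the basin --- is the same as the paper's (\cref{lemma: stability of RCD} and \cref{thm:RCD_noisy}, proved in Appendices \ref{sec: appendix C} and \ref{sec: appendix D}). The one place where you genuinely diverge is the hitting-time bound. The paper keeps the unrescaled process $V_n=f(\bs\theta_n)\mathbb{I}_n$ stopped at $\tau=\inf\{k:f(\bs\theta_k)\leq\epsilon\}$, derives the self-referential recursion $\mathbb{E}[V_{n+1}]\leq\left(1-\tfrac{\mu a\,p_{\mathrm{fail}}}{2d}\right)\mathbb{E}[V_n]$, in which the contraction rate itself involves $p_{\mathrm{fail}}=\mathbb{P}(\tau>N)$, and closes the argument by contradiction; this is why its iteration count carries an extra $1/\eta$ factor, $N=\Omega\left(\tfrac{d}{\mu a\eta}\log(\cdot)\right)$. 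You instead rescale, taking $Z_n=(1-c)^{-(n\wedge\tau)}V_{n\wedge\tau}$ with $c=\mu a/(2d)$, which is a genuine supermartingale because the geometric factor only accrues while the contraction is in force; Markov's inequality then gives $\mathbb{P}(\tau>N)\leq(1-c)^{N-1}V_1/\epsilon$ directly, with no contradiction and no $1/\eta$ loss in $N$. Your single two-sided stopping time (hitting $V_n\leq\epsilon$ or $V_n\geq\tilde\delta$ with $\tilde\delta=\delta_f-f_{\min}$) also packages the paper's separate basin indicator $\mathbb{I}_n$ and lower stopping level into one object, which makes the final event decomposition slightly cleaner. Both routes land on the same complexity $\tilde\Theta(\max\{L_{\mathrm{avg}}\sigma_\infty^2 d^2/(\mu^2\epsilon),L_{\max}d/\mu\})$ and the same success probability $1-f(\bs\theta_1)/\delta_f-o(1)$; what your version buys is a more standard and slightly sharper execution of the convergence step, while the paper's version makes the dependence of $N$ on a prescribed failure probability $\eta$ explicit.
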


Based on the theorem mentioned above, to achieve $f(\boldsymbol{\theta}_n)-f_{\min} \leq \epsilon$, we can select the learning rate $a_n = \frac{\mu \epsilon}{L\sigma^2_\infty d}$ for GD and $a_n = \frac{\mu \epsilon}{L_{\mathrm{avg}}\sigma^2_\infty d}$ for RCD. Recalling Eq. \eqref{eq:LiL}, we observe that $L_{\mathrm{avg}} \leq L$, which means that we could use a larger learning rate for RCD. This choice aligns with the learning rates utilized in the numerical experiments presented in Section \ref{sec:numerical_1} as well as those in Section \ref{sec:numerical_2}.

We compare the complexity of the noisy GD and RCD methods with the estimates of the number of iterations. First, according to the above result, we conclude that the number of iterations required for GD is $N=\tilde{\Theta}\left(\frac{L\sigma^2_\infty d}{\mu^2\epsilon}\right)$\footnote{This complexity aligns with the classical theoretical results for gradient descent (GD), which typically assume the presence of strong convexity or a local PL condition for the function $f$.}, while for RCD,  we have $N=\tilde{\mathcal{O}}\left(\frac{L_{\mathrm{avg}}\sigma^2_\infty d^2}{\mu^2\epsilon}\right)$. Notably, in RCD, there is an additional factor of $d$, which can be understood in the expectation sense: During each iteration of the noisy RCD, the randomness arises from two sources: the random direction $i_n$ and the noisy partial derivative $g_{i_n}(\bm \theta_n)$. By taking the conditional expectation with respect to $\bs{\theta}_n$, we obtain:
\begin{equation}\label{eqn:expect_rcd}
\mathbb{E}(\bs{\theta}_{n+1}|\bs{\theta_n})=\bs{\theta}_n - \frac{a_n}{d}\nabla f(\bs{\theta}_n)\,.
\end{equation}
Compared with \eqref{eqn:expect_gd}, there is an extra $1/d$ factor in the expectation of RCD. Consequently, in each iteration, the rate of decay of the cost function is smaller in RCD compared to GD. 
Consequently, we anticipate that RCD would necessitate more iteration steps to achieve convergence. On the other hand, it is also important to note that in certain scenarios where $L_{\mathrm{avg}}d$ is comparable to $L$, the number of iterations required for RCD is comparable  to that of GD.

Meanwhile, it is important to point out that a more practical criterion for comparing the two methods is the cumulative cost of each method, which is represented by the number of partial derivative calculations from the quantum circuits. This is because quantum algorithms for estimating the gradient have a cost proportional to $d$.  Since each iteration of GD needs to calculate the full gradient ($d$ partial derivatives), the total number of partial derivative estimations in GD is
\[
N_{\text{partial,GD}}=\tilde{\Theta}\left(\frac{L\sigma^2_\infty d^2}{\mu^2\epsilon}\right)\,.
\]
In contrast, the number of partial derivative estimations in RCD is:
\[
N_{\text{partial,RCD}}=\tilde{\mathcal{O}}\left(\frac{L_{\mathrm{avg}}\sigma^2_\infty d^2}{\mu^2\epsilon}\right)\,.
\]
From Eq. \eqref{eq:LiL}, we can deduce that:
\[
\tilde{\Omega}(N_{\text{partial,RCD}})=N_{\text{partial,GD}}=d\tilde{\mathcal{O}}(N_{\text{partial,RCD}})\,.
\]
This suggests that the computational cost of RCD is $L/L_{\mathrm{avg}}$ times cheaper than that of GD. In an extreme case where $f$ is highly skewed, i.e., $L/L_{\mathrm{avg}} \approx d$,  RCD can reduce the computational cost by a factor of the dimension $d$, which will be a significant reduction for large quantum circuits.

In addition to the complexity result, it is worth noting that the two methods exhibit similar success probability, which is approximately $1-f(\bs{\theta}_1)/\delta_f$, as indicated by the two aforementioned theorems. This observation is quite surprising, as each iteration of RCD appears noisier due to the random selection of the updating direction $i_n$.  Intuitively, this suggests that we might need to choose a smaller learning rate $a_n$ to ensure stability in RCD, which would consequently increase its complexity. However, our theory unveils that choosing a similar learning rate $a_n$ is adequate to stabilize RCD. To elucidate this point, it's important to recognize that, on average, RCD behaves equivalently to GD. By conducting more iterations, RCD can approximate its average behavior (expectation), effectively mitigating the extra randomness introduced by $i_n$. This compensation mechanism ensures that the success probabilities remain consistent between the two methods.

\section{Proof of main results}\label{sec: main_results}

In this section, we provide the proofs for Theorems \ref{thm:GD_informal} and \ref{thm:RCD_informal}. We will start by showing the stochastic stability of the two methods in Section \ref{sec: stochastic stability}. This will guarantee that the parameter $\bm\theta$ is likely to stay close to the global minimum until attaining a small loss. Following that, in Section \ref{sec:convergence analysis}, we utilize the local PL condition around the global minimum to establish the convergence of $f(\bs{\theta}_n)$. In all of the following theoretical results and the corresponding proofs in the Appendices, we assume $f_{\min}=0$ without loss of generality by modifying the original function as
\begin{eqnarray}\label{eq:shiftfmin}
    f(\bs\theta)\leftarrow f(\bs\theta) - f_{\min}.
\end{eqnarray}Thus, all results in this section can be reformulated for the original cost function by the substitution \eqref{eq:shiftfmin}, which will yield  \cref{thm:GD_informal,thm:RCD_informal}.

\subsection{Stochastic stability}\label{sec: stochastic stability}

In the context of optimization, stability and convergence are not separate properties. In a deterministic algorithm, convergence immediately guarantees stability. However, this connection does not hold for stochastic processes in general. For instance, when optimization methods such as noisy GD, SGD, or noisy RCD are applied, discrete-time stochastic processes are generated. In such cases, a convergence theory must be developed for a collection of random paths, which can exhibit different convergence behaviors among themselves. 

In our specific case, we anticipate that when $\bs{\theta}_n$ remains within the basin $\mathcal{N}(\mathcal{X})$ and the learning rate is correctly chosen, both the GD and the RCD methods, when the gradient is exactly calculated, converge to a global minimum due to the local PL condition stated in assumption \ref{as: local PL}. However, in the presence of noise in the gradient and the use of a constant learning rate, it is generally impossible to ensure that $\bs{\theta}_n\in \mathcal{N}(\mathcal{X})$ almost surely, unless a different strategy is adopted such as the decreasing learning rates~\cite{patel2022global,fehrman2020convergence,ko2022local}. On the other hand, the purpose of the optimization algorithm is to minimize the loss function, which means that it suffices to ensure stability until a small loss is achieved. To quantify such a likelihood, in this section, we demonstrate that when $\bs{\theta}_0\in \mathcal{N}(\mathcal{X})$, there exists a finite probability that $\bs{\theta}_n$ obtained from GD and RCD remain within $\mathcal{N}(\mathcal{X})$ until achieving a small loss. This provides a high probability of convergence for the two methods.

We summarize the result for noisy GD in the following lemma.

\begin{lemma}\label{lemma: stability of GD}
Assume that $f$ is a $L$-smooth function that satisfies the assumption \ref{as: local PL} and $g$ satisfies the assumption \ref{as:noise_bound}. If $f(\bs{\theta}_1)\leq \delta_f$ and the learning rate is chosen as follows,
\[
a_n=a<\min\left\{\frac{1}{L}\;,\frac{2\mu \delta_f}{L\sigma^2_\infty d}\right\}\,,
\]
then,  with high probability, iterations of noisy GD \eqref{thm:GD_noisy} remain in $f^{-1}([0,\delta_f))$ until a small loss is achieved. Specifically, 
\begin{equation}\label{eqn:p_gd}
\mathbb{P}\left\{\exists N>0\ \text{such that}\ f(\bs\theta_N)\notin\mathcal{N}\ \text{and}\  f(\bs\theta_n)>\frac{La\sigma^2_\infty d}{\mu},\ \forall n<N\right\}\leq \frac{f(\bs{\theta}_{1})}{\delta_f}.
\end{equation}
\end{lemma}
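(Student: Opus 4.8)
The plan is to construct a nonnegative supermartingale out of the loss values $f(\bs\theta_n)$ and then read off the escape probability directly from Doob's maximal inequality. Since $f_{\min}$ has been normalized to $0$, the goal is to control the first time the iterates leave $\mathcal{N}(\mathcal{X})=f^{-1}([0,\delta_f))$. Let $\mathcal{F}_n=\sigma(\bs\theta_1,\dots,\bs\theta_n)$ denote the natural filtration of the noisy GD process.

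First I would derive a one-step expected-descent inequality. Applying the descent lemma implied by $L$-smoothness (\cref{as:f_L_smooth}) to the update $\bs\theta_{n+1}=\bs\theta_n-a\,g(\bs\theta_n)$ gives
\[
f(\bs\theta_{n+1})\le f(\bs\theta_n)-a\langle\nabla f(\bs\theta_n),g(\bs\theta_n)\rangle+\tfrac{La^2}{2}\|g(\bs\theta_n)\|^2.
\]
Taking $\mathbb{E}[\,\cdot\mid\mathcal{F}_n]$ and using unbiasedness together with the variance bound $\mathbb{E}[\|g(\bs\theta_n)\|^2\mid\mathcal{F}_n]\le\|\nabla f(\bs\theta_n)\|^2+d\sigma^2_\infty$ from \cref{as:noise_bound}, the cross term becomes $-a\|\nabla f(\bs\theta_n)\|^2$; invoking $a\le 1/L$ to absorb the $\tfrac{La^2}{2}\|\nabla f\|^2$ contribution yields $\mathbb{E}[f(\bs\theta_{n+1})\mid\mathcal{F}_n]\le f(\bs\theta_n)-\tfrac{a}{2}\|\nabla f(\bs\theta_n)\|^2+\tfrac{La^2 d\sigma^2_\infty}{2}$. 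Whenever $\bs\theta_n\in\mathcal{N}(\mathcal{X})$ I may apply the local PL condition (\cref{as: local PL}) as $\|\nabla f(\bs\theta_n)\|^2\ge 2\mu f(\bs\theta_n)$, producing the affine contraction
\[
\mathbb{E}[f(\bs\theta_{n+1})\mid\mathcal{F}_n]\le(1-a\mu)f(\bs\theta_n)+\tfrac{La^2 d\sigma^2_\infty}{2}.
\]

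Next I would turn this into a stopped supermartingale. The right-hand side is $\le f(\bs\theta_n)$ exactly when $f(\bs\theta_n)\ge\tfrac{La d\sigma^2_\infty}{2\mu}$, the fixed point of the affine map; the threshold $\tfrac{La\sigma^2_\infty d}{\mu}$ appearing in \eqref{eqn:p_gd} is twice this value, and the hypothesis $a<\tfrac{2\mu\delta_f}{L\sigma^2_\infty d}$ places the fixed point strictly inside the basin. I then introduce the stopping time $\tau$, the first index at which $f(\bs\theta_n)\le\tfrac{La\sigma^2_\infty d}{\mu}$ or $f(\bs\theta_n)\ge\delta_f$. For every $n<\tau$ the iterate lies in the basin (so PL is licensed) and its loss exceeds $\tfrac{La\sigma^2_\infty d}{\mu}$, which is strictly above the fixed point; hence the affine bound is a genuine supermartingale step, and $M_n:=f(\bs\theta_{n\wedge\tau})$ is a nonnegative supermartingale with $\mathbb{E}[M_1]=f(\bs\theta_1)$.

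Finally I would invoke Doob's maximal inequality for nonnegative supermartingales, $\mathbb{P}(\sup_n M_n\ge\delta_f)\le\mathbb{E}[M_1]/\delta_f=f(\bs\theta_1)/\delta_f$, and verify that the event in \eqref{eqn:p_gd} is contained in $\{\sup_n M_n\ge\delta_f\}$: on that event some $N$ satisfies $f(\bs\theta_N)\ge\delta_f$ while $f(\bs\theta_n)>\tfrac{La\sigma^2_\infty d}{\mu}$ for all $n<N$, which forces the first band-exit to occur at or before $N$ through the upper boundary, so that $M_\tau=f(\bs\theta_\tau)\ge\delta_f$. Combining the two displays yields the claim. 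I expect the main obstacle to be the bookkeeping around $\tau$: one must ensure the affine/PL step is applied only at indices where $\bs\theta_n$ is simultaneously inside the basin (for PL) and above the noise-floor threshold (so the drift dominates), and then confirm the set-containment of the escape event so that the maximal inequality transfers cleanly. Once this stochastic scaffolding is fixed, the underlying estimates are routine.
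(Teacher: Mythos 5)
Your proposal is correct and follows essentially the same route as the paper's proof: both derive the one-step inequality $\mathbb{E}[f(\bs{\theta}_{n+1})\mid\mathcal{F}_n]\leq(1-a\mu)f(\bs{\theta}_n)+\tfrac{La^2\sigma^2_\infty d}{2}$ from $L$-smoothness, unbiasedness, the variance bound, and the local PL condition, then turn the loss into a nonnegative supermartingale that is frozen once the band $\left(\tfrac{La\sigma^2_\infty d}{\mu},\delta_f\right)$ is exited, and conclude via the maximal (Ville/Markov) inequality. The only difference is cosmetic: you stop the process with a two-sided exit time $M_n=f(\bs{\theta}_{n\wedge\tau})$, whereas the paper combines a one-sided stopping time with an indicator $\mathbb{I}_n$ that kills the process upon leaving the basin; both yield the same containment of the escape event and the same bound $f(\bs{\theta}_1)/\delta_f$.
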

In light of Eq. \eqref{eqn:p_gd}, if we select the learning rate $a_n$ to be sufficiently small, then with a probability of $1-\frac{f(\bs{\theta}_{1})}{\delta_f}$, the parameters are guaranteed to achieve a small loss before escaping the basin.
\medskip

Despite infrequent updates of the gradient components, RCD still demonstrates a similar level of stochastic stability. This key observation is summarized in the following lemma:
\begin{lemma}\label{lemma: stability of RCD}
Assume that $f$ is a $L$-smooth function that satisfies assumption \ref{as: local PL} and $g$ satisfies assumption \ref{as:noise_bound}. Given any $f(\bs{\theta}_1)<\delta_f$, if one chooses the learning rate 
\[
a_n=a<\min\left\{\frac{1}{L_{\max}},\; \frac{d}{\mu}, \;\frac{2\mu \delta_f}{L_{\mathrm{avg}}\sigma^2_\infty d}\right\}\,,
\]
then,  with high probability, iterations from the noisy RCD \eqref{thm:RCD_noisy} stay at $f^{-1}([0,\delta_f))$ until achieving a small loss. Specifically, 
\[
\mathbb{P}\left\{\exists N>0\ \text{such that}\ f(\bs\theta_N)\notin\mathcal{N}\ \text{and}\  f(\bs\theta_n)>\frac{L_{\mathrm{avg}}a\sigma^2_\infty d}{\mu},\ \forall n<N\right\}\leq \frac{f(\bs{\theta}_{1})}{\delta_f}.
\]
\end{lemma}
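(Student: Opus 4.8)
The plan is to show that, as long as the iterate stays in the basin $\mathcal{N}(\mathcal{X})$ and the loss stays above the threshold $T:=L_{\mathrm{avg}}a\sigma^2_\infty d/\mu$, the sequence $f(\bs\theta_n)$ is a nonnegative supermartingale, and then to extract the escape probability from a maximal inequality. Note first that if $T\ge \delta_f$ the event in the statement is vacuous: $f(\bs\theta_1)<\delta_f$ together with the requirement $f(\bs\theta_n)>T\ge\delta_f$ for all $n<N$ already fails at $n=1$, so the probability is $0$. Hence I may assume $T<\delta_f$ (this is the regime enforced, up to the harmless factor $2$, by the constraint $a<2\mu\delta_f/(L_{\mathrm{avg}}\sigma^2_\infty d)$).

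First I would derive the one-step inequality. Since the noisy RCD update \eqref{eq:def_noisy_RGD} changes only coordinate $i_n$, the component-wise smoothness \eqref{Li} gives
\[
f(\bs\theta_{n+1})\le f(\bs\theta_n)-a\,\partial_{i_n}f(\bs\theta_n)\,g_{i_n}(\bs\theta_n)+\tfrac{L_{i_n}a^2}{2}\,g_{i_n}(\bs\theta_n)^2 .
\]
Taking the conditional expectation given $\bs\theta_n$ over the uniform index $i_n$ and the noise, and using $\mathbb{E}[g_{i_n}]=\partial_{i_n}f$ and $\mathbb{E}[g_{i_n}^2]\le(\partial_{i_n}f)^2+\sigma^2_\infty$ from \cref{as:noise_bound}, together with $\frac1d\sum_iL_i=L_{\mathrm{avg}}$ and $\sum_iL_i(\partial_if)^2\le L_{\max}\|\nabla f\|^2$, I obtain
\[
\mathbb{E}\!\left[f(\bs\theta_{n+1})\mid\bs\theta_n\right]\le f(\bs\theta_n)-\tfrac{a}{d}\Bigl(1-\tfrac{aL_{\max}}{2}\Bigr)\|\nabla f(\bs\theta_n)\|^2+\tfrac{a^2L_{\mathrm{avg}}\sigma^2_\infty}{2}.
\]
Using $a<1/L_{\max}$ the bracket is $\ge\frac12$, and whenever $\bs\theta_n\in\mathcal{N}(\mathcal{X})$ the local PL inequality (\cref{as: local PL}, with $f_{\min}=0$) gives $\|\nabla f\|^2\ge 2\mu f$, so
\[
\mathbb{E}\!\left[f(\bs\theta_{n+1})\mid\bs\theta_n\right]\le\Bigl(1-\tfrac{a\mu}{d}\Bigr)f(\bs\theta_n)+\tfrac{a^2L_{\mathrm{avg}}\sigma^2_\infty}{2}.
\]
The right-hand side is $\le f(\bs\theta_n)$ exactly when $f(\bs\theta_n)\ge T/2$, so on the region $\{T<f<\delta_f\}$ the supermartingale property holds with room to spare.

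Next I would set up the stopping argument. Define $\tau$ to be the first time $f(\bs\theta_n)\le T$ or $\bs\theta_n\notin\mathcal{N}(\mathcal{X})$ (i.e.\ $f(\bs\theta_n)\ge\delta_f$), and put $M_n:=f(\bs\theta_{n\wedge\tau})$. By the previous step $M_n$ is a nonnegative supermartingale: nonnegativity comes from the normalization $f_{\min}=0$, the increments before $\tau$ are controlled because there $T<f<\delta_f$ forces both the basin and the above-threshold conditions, and after $\tau$ the process is frozen. The event in the lemma — leaving $\mathcal{N}(\mathcal{X})$ while $f>T$ throughout — forces $\tau<\infty$ with $f(\bs\theta_\tau)\ge\delta_f$, hence $\sup_n M_n\ge\delta_f$. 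Ville's maximal inequality for nonnegative supermartingales then yields
\[
\mathbb{P}\Bigl(\sup_n M_n\ge\delta_f\Bigr)\le\frac{\mathbb{E}[M_1]}{\delta_f}=\frac{f(\bs\theta_1)}{\delta_f},
\]
which is exactly the claimed bound.

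I expect the main obstacle to be the bookkeeping of the stopping time so that the one-step descent and the local PL inequality are invoked only where they are valid (inside the basin \emph{and} above threshold) while still capturing the escape event. In particular one must verify that at the exit time the overshoot $f(\bs\theta_\tau)\ge\delta_f$ is all that is needed (the iterate need not land on the boundary), that the constraint $a<2\mu\delta_f/(L_{\mathrm{avg}}\sigma^2_\infty d)$ is precisely what keeps the threshold region $\{T<f<\delta_f\}$ nonempty in the nontrivial case, and that the remaining constraint $a<d/\mu$ keeps the contraction factor $1-a\mu/d$ nonnegative. The structure mirrors the proof of \cref{lemma: stability of GD}, the only genuine difference being that averaging over the random coordinate $i_n$ replaces the full-gradient variance term $L\sigma^2_\infty d$ by its coordinate average $L_{\mathrm{avg}}\sigma^2_\infty$, which is what makes the two stability guarantees quantitatively comparable.
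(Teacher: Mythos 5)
Your proposal is correct and follows essentially the same route as the paper's proof: the componentwise descent inequality averaged over the uniform coordinate and the measurement noise, yielding $\mathbb{E}[f(\bs\theta_{n+1})\mid\bs\theta_n]\le(1-\mu a/d)f(\bs\theta_n)+a^2L_{\mathrm{avg}}\sigma^2_\infty/2$, combined with the local PL condition, a supermartingale built by stopping the iteration at the threshold region, and Markov's (Ville's) maximal inequality. The only cosmetic difference is that you use the two-sided stopped process $f(\bs\theta_{n\wedge\tau})$ while the paper uses the indicator-weighted process $f(\bs\theta_n)\mathbb{I}_n$ stopped only at the lower threshold; both constructions give the same bound $f(\bs\theta_1)/\delta_f$.
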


The proofs of Lemma \ref{lemma: stability of GD} and \ref{lemma: stability of RCD} are provided in Appendices \ref{sec: appendix B} and \ref{sec: appendix C}, respectively. The core concept of these proofs is based on the construction of a specialized  supermartingale and the utilization of Markov's inequality. For example, to prove  Lemma \ref{lemma: stability of GD}, we define a stochastic process
\[
V_n=\left\{
\begin{aligned}
&f(\bs\theta_n)\mathbb{I}_n,\quad n<\tau\\
&f(\bs\theta_\tau)\mathbb{I}_\tau,\quad n\geq\tau
\end{aligned}\right.\,.
\]
where the indicator random variable is given by,
\[
I_n=\left\{
\begin{aligned}
&1,\quad \text{if}\quad \{\bs{\theta}_k\}^{n-1}_{k=1}\subset f^{-1}([0,\delta_f))\\
&0,\quad \text{otherwise}.
\end{aligned}\right.\,, 
\]
and the stopping time \[
\tau=\inf \left\{k:f(\bs\theta_k)\leq \frac{La\sigma^2_\infty d}{\mu}\right\}\,.
\]
We observe that $V_n$ is a meticulously crafted supermartingale, allowing us to distinguish between stable and unstable events. In particular, we demonstrate that if $\bs\theta_n$ exits the basin before it reaches $f(\bs\theta_n)=\frac{La\sigma^2_\infty d}{\mu}$ (an unstable event), then $\sup_n V_n\geq \delta_f$. 
Therefore, we can employ $V_n$ as a categorizer and the probability of failure of GD can be characterized by the value of $V_n$. More specifically, 
\[
    \mathbb{P}\left\{\exists N>0\ \text{such that}\ f(\bs\theta_N)\notin\mathcal{N}\ \text{and}\  f(\bs\theta_n)>\frac{La\sigma^2_\infty d}{\mu},\ \forall n<N\right\}\leq \mathbb{P}\left\{\sup_{n}V_n\geq \delta_f\right\}\,.
\]
Except for its use as a categorizer, we have designed $V_n$ in such a way that it is a supermartingale, meaning $\mathbb{E}(V_{n+1}|\bs{\theta}_{k\leq n})\leq V_{n}$. Therefore, we can use Markov's inequality for supermartingales to bound the supremum of $V_n$ and achieve the desired result.

\subsection{Convergence analysis}\label{sec:convergence analysis}
 
In this section, we present the convergence properties of noisy GD and RCD methods. 
It is important to note that Theorems \ref{thm:GD_informal} and \ref{thm:RCD_informal} directly follow from Theorems \ref{thm:GD_noisy} and \ref{thm:RCD_noisy}, respectively.

Our first theorem shows the convergence performance of the noisy GD method,
\begin{theorem}\label{thm:GD_noisy} Assume $f$ is a $L$-smooth function that satisfies Assumption \ref{as: local PL} and $g$ satisfies 
 \cref{as:noise_bound}. Given any precision $0<\epsilon<\delta_f$, the initial guess $f(\bs{\theta}_1)<\delta_f$, and the probability of failure $\eta\in\left(\frac{f(\bs\theta_1)}{\delta^{-1}_f},1\right)$, we choose the learning rate in \eqref{eq:def_GD}
\[
a_n=a=\mathcal{O}\left(\min\left\{\frac{1}{L},\frac{\mu \epsilon}{L\sigma^2_\infty d}\right\}\right),
\]
and the total number of iterations
\[
N=\Omega\left(\frac{1}{\mu a\eta}\log\left(\frac{f(\bs{\theta_1})}{\left(\eta-\frac{f(\bs\theta_1)}{\delta_f}\right)\epsilon}\right)\right)\,.
\]
Then, with probability $1-\eta$, we can find at least one $\bs\theta_m$ with $1\leq m\leq N$ such that $f(\bs\theta_m)\leq \epsilon$. In particular,
\[
\mathbb{P}\left\{\exists m\leq N,\ f(\bs{\theta}_m)\leq\epsilon \right\}\geq 1-\eta,
\]

\end{theorem}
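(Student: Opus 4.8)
The plan is to turn the $L$-smoothness and local PL conditions into a geometric contraction toward accuracy $\epsilon$, and then to promote this deterministic-looking contraction into a high-probability statement via a stopped supermartingale and Markov's inequality, using the stability Lemma~\ref{lemma: stability of GD} to absorb the event that the iterates leave the basin. Throughout I normalize $f_{\min}=0$. First I would establish the conditional descent estimate. Writing $\mathcal{F}_n$ for the $\sigma$-algebra generated by $\bs\theta_1,\dots,\bs\theta_n$, $L$-smoothness (Assumption~\ref{as:f_L_smooth}) gives $f(\bs\theta_{n+1})\le f(\bs\theta_n)-a\langle\nabla f(\bs\theta_n),g(\bs\theta_n)\rangle+\tfrac{La^2}{2}\|g(\bs\theta_n)\|^2$; taking $\mathbb{E}[\cdot\mid\mathcal{F}_n]$, using unbiasedness, and using the componentwise variance bound $\mathbb{E}[\|g(\bs\theta_n)-\nabla f(\bs\theta_n)\|^2\mid\mathcal{F}_n]\le d\sigma^2_\infty$ from Assumption~\ref{as:noise_bound}, I obtain
\[
\mathbb{E}[f(\bs\theta_{n+1})\mid\mathcal{F}_n]\le f(\bs\theta_n)-a\left(1-\tfrac{La}{2}\right)\|\nabla f(\bs\theta_n)\|^2+\tfrac{La^2d\sigma^2_\infty}{2}.
\]
With $a\le 1/L$ the bracket is at least $1/2$, and on the event $\bs\theta_n\in\mathcal{N}(\mathcal{X})$ the local PL condition (Assumption~\ref{as: local PL}) replaces $\|\nabla f(\bs\theta_n)\|^2$ by $2\mu f(\bs\theta_n)$, giving $\mathbb{E}[f(\bs\theta_{n+1})\mid\mathcal{F}_n]\le(1-a\mu)f(\bs\theta_n)+\tfrac{La^2d\sigma^2_\infty}{2}$. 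The learning-rate choice forces the additive term below $\tfrac{a\mu\epsilon}{2}$, so whenever additionally $f(\bs\theta_n)>\epsilon$ that term is dominated by $\tfrac{a\mu}{2}f(\bs\theta_n)$ and I recover the clean strict contraction $\mathbb{E}[f(\bs\theta_{n+1})\mid\mathcal{F}_n]\le(1-\tfrac{a\mu}{2})f(\bs\theta_n)$.

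Next I would introduce the stopping time $\tau=\inf\{n:f(\bs\theta_n)\le\epsilon\ \text{or}\ \bs\theta_n\notin\mathcal{N}(\mathcal{X})\}$ and the reweighted, stopped process $V_n=(1-\tfrac{a\mu}{2})^{-(n\wedge\tau)}f(\bs\theta_{n\wedge\tau})$. The strict contraction holds for every $n<\tau$ (there both $\bs\theta_n\in\mathcal{N}(\mathcal{X})$ and $f(\bs\theta_n)>\epsilon$), which is precisely what is needed to verify that $V_n$ is a supermartingale, so $\mathbb{E}[V_N]\le V_1=(1-\tfrac{a\mu}{2})^{-1}f(\bs\theta_1)$. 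From this I would extract the geometric tail $\mathbb{P}\{\tau>n\}\le(1-\tfrac{a\mu}{2})^{n-1}f(\bs\theta_1)/\epsilon$ by applying Markov's inequality to the level of $V_n$ at the threshold, and summing the tail bounds the expected stopping time by $\mathbb{E}[\tau]=\mathcal{O}\!\left(\tfrac{1}{a\mu}\log(f(\bs\theta_1)/\epsilon)\right)$. Applying Markov's inequality to $\tau$ itself, $\mathbb{P}\{\tau>N\}\le\mathbb{E}[\tau]/N$, is what yields the $\tfrac{1}{\mu a\eta}\log(\cdots)$ form of the iteration count once $N$ is taken large enough to drive this below the allotted convergence budget.

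Finally I would decompose the failure event $\{f(\bs\theta_m)>\epsilon\ \text{for all }m\le N\}$ according to whether the iterates have exited the basin by time $N$. If they have, then since the learning rate guarantees the noise floor $\tfrac{La\sigma^2_\infty d}{\mu}\le\epsilon$, this is a sub-event of the unstable event controlled by Lemma~\ref{lemma: stability of GD}, contributing at most $f(\bs\theta_1)/\delta_f$; if they have not, then $\tau>N$, handled by the Markov estimate above. Splitting the target probability $\eta$ into a stability budget $f(\bs\theta_1)/\delta_f$ and a convergence budget $\eta-f(\bs\theta_1)/\delta_f$ and adding the two contributions gives success probability $1-\eta$ and fixes both the admissible range of $\eta$ and the logarithmic factor in $N$. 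The hard part is the coupling between stability and convergence: the PL-driven contraction is valid only inside $\mathcal{N}(\mathcal{X})$, so the supermartingale cannot be run in isolation but must be interleaved with the escape analysis. The two delicate points are (i) choosing $\tau$ so that $V_n$ stays a supermartingale across the step at which the iterate might leave the basin — which works because the one-step descent bound constrains only $\bs\theta_n$, not $\bs\theta_{n+1}$ — and (ii) arranging the budget split so the two Markov estimates combine additively rather than through a union bound that would degrade the constants.
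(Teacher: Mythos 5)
Your proposal is correct, and it shares the paper's overall architecture: the one-step descent inequality combined with the local PL condition to obtain the conditional contraction $\mathbb{E}[f(\bs\theta_{n+1})\mid\mathcal{F}_n]\leq(1-\tfrac{\mu a}{2})f(\bs\theta_n)$ on the event that $\bs\theta_n$ is in the basin with $f(\bs\theta_n)>\epsilon$; a supermartingale plus Markov's inequality; Lemma~\ref{lemma: stability of GD} to absorb the basin-exit event; and the budget split $\eta=\tfrac{f(\bs\theta_1)}{\delta_f}+\bigl(\eta-\tfrac{f(\bs\theta_1)}{\delta_f}\bigr)$. The execution of the supermartingale step, however, is genuinely different. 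The paper stops only at $\{f(\bs\theta_k)\leq\epsilon\}$, kills the process upon exit via the indicator $\mathbb{I}_n$, and works with the unweighted process $V_n=f(\bs\theta_{n\wedge\tau})\mathbb{I}_{n\wedge\tau}$; because the contraction acts only on $\{\tau>n\}$, its recursion carries the rate $1-\tfrac{\mu a}{2}\mathbb{P}(\tau>n)\leq 1-\tfrac{\mu a}{2}p_{\mathrm{fail}}$, which makes the final bound self-referential in $p_{\mathrm{fail}}$ (inequality \eqref{eqn:must_hold}) and forces a proof by contradiction. You instead fold the basin exit into the stopping time, reweight by $(1-\tfrac{a\mu}{2})^{-(n\wedge\tau)}$, and read off the geometric tail $\mathbb{P}(\tau>n)\leq(1-\tfrac{a\mu}{2})^{n-1}f(\bs\theta_1)/\epsilon$ directly from $\mathbb{E}[V_n]\leq V_1$; your observation that the one-step bound constrains only $\bs\theta_n$, not $\bs\theta_{n+1}$, is precisely what keeps the stopped process a supermartingale across the exit step, and it plays the same role as the paper's inequality $\mathbb{I}_{n+1}\leq\mathbb{I}_n$. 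Your route is more direct and in fact stronger than needed: the geometric tail alone already drives the convergence budget below $\eta-\tfrac{f(\bs\theta_1)}{\delta_f}$ at $N=\Theta\bigl(\tfrac{1}{\mu a}\log\bigl(\tfrac{f(\bs\theta_1)}{(\eta-f(\bs\theta_1)/\delta_f)\epsilon}\bigr)\bigr)$, with no $1/\eta$ prefactor; the detour through $\mathbb{E}[\tau]$ and Markov applied to $\tau$ is an unnecessary weakening introduced only to match the paper's stated form of $N$ (and it actually produces a $1/(\eta-f(\bs\theta_1)/\delta_f)$ prefactor rather than $1/\eta$, which is immaterial since the theorem only requires $N=\Omega(\cdot)$). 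The paper's $1/\eta$ factor is an artifact of its $p_{\mathrm{fail}}$-dependent contraction rate, not something your argument needs to reproduce.
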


Next, we state the convergence property of the noisy RCD method in the following theorem,
\begin{theorem}\label{thm:RCD_noisy}
 Assume $f$ is a $L$-smooth function that satisfies Assumption \ref{as: local PL} and $g$ satisfies Assumption \ref{as:noise_bound}. Given any precision $0<\epsilon<\delta_f$, the initial guess $f(\bs{\theta}_1)<\delta_f$, and the probability of failure $\eta\in\left(\frac{f(\bs\theta_1)}{\delta^{-1}_f},1\right)$, we choose the learning rate in \eqref{eq:def_noisy_RGD}
\[
a_n=a=\mathcal{O}\left(\min\left\{\frac{1}{L_{\max}},\frac{d}{\mu},\frac{\mu \epsilon}{L_{\mathrm{avg}}\sigma^2_\infty d}\right\}\right),
\]
and the total number of iterations
\[
N=\Omega\left(\frac{d}{\mu a\eta}\log\left(\frac{f(\bs{\theta_1})}{\left(\eta-\frac{f(\bs\theta_1)}{\delta_f}\right)\epsilon}\right)\right)\,.
\]
Then, with probability $1-\eta$, we can find at least one $\bs\theta_m$ with $1\leq m\leq N$ such that $f(\bs\theta_m)\leq \epsilon$. In particular,
\[
\mathbb{P}\left\{\exists m\leq N,\ f(\bs{\theta}_m)\leq\epsilon \right\}\geq 1-\eta,
\]

\end{theorem}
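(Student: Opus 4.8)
The plan is to deduce \cref{thm:RCD_noisy} by combining a one-step \emph{contraction in expectation} that is valid inside the basin $\mathcal N(\mathcal X)$ with the stochastic stability already furnished by \cref{lemma: stability of RCD}. Throughout I assume $f_{\min}=0$ as in \eqref{eq:shiftfmin}. The organizing identity is the averaged update \eqref{eqn:expect_rcd}, which shows RCD behaves like GD slowed by a factor $1/d$; the whole argument turns this averaged contraction into a high-probability statement. I would split the failure event $\{f(\bs\theta_m)>\epsilon,\ \forall m\le N\}$ into two parts: (i) the iterates leave $\mathcal N(\mathcal X)$ before the loss ever drops to a prescribed noise floor, and (ii) the iterates remain in $\mathcal N(\mathcal X)$ for all $n\le N$ yet never reach that floor. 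Part (i) is bounded by $f(\bs\theta_1)/\delta_f$ directly from \cref{lemma: stability of RCD}, so the real work is part (ii).

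First I would establish the descent inequality. Using the componentwise smoothness \eqref{Li} along the randomly chosen coordinate $i_n$ gives the scalar quadratic bound $f(\bs\theta_{n+1})\le f(\bs\theta_n)-a\,g_{i_n}\partial_{i_n}f(\bs\theta_n)+\tfrac{L_{i_n}}{2}a^2 g_{i_n}^2$. Taking the expectation over the unbiased noise (so that $\mathbb E[g_{i_n}^2]\le(\partial_{i_n}f)^2+\sigma_\infty^2$ by \cref{as:noise_bound}) and then over the uniform index $i_n$ converts the sums into $\tfrac1d\sum_i L_i(\partial_i f)^2\le \tfrac{L_{\max}}{d}\|\nabla f\|^2$ and $\tfrac1d\sum_i L_i=L_{\mathrm{avg}}$. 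Imposing $a\le 1/L_{\max}$ absorbs the quadratic gradient term into half of the linear one, and the local PL condition (\cref{as: local PL}), applicable precisely because $\bs\theta_n\in\mathcal N(\mathcal X)$, yields
\[
\mathbb E\!\left[f(\bs\theta_{n+1})\mid\bs\theta_n\right]\le\Big(1-\tfrac{a\mu}{d}\Big)f(\bs\theta_n)+\tfrac{a^2\sigma_\infty^2 L_{\mathrm{avg}}}{2}.
\]
The fixed point of this affine recursion is $x^\star=\tfrac{a L_{\mathrm{avg}}\sigma_\infty^2 d}{2\mu}$, i.e. half of the noise floor $\epsilon':=\tfrac{a L_{\mathrm{avg}}\sigma_\infty^2 d}{\mu}$; the learning-rate choice $a=\mathcal O(\mu\epsilon/(L_{\mathrm{avg}}\sigma_\infty^2 d))$ guarantees $\epsilon'\le\epsilon$, so reaching $\epsilon'$ already certifies $f\le\epsilon$.

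For part (ii) I would build a stopped, amplified supermartingale. Let $\tau$ be the first time $f(\bs\theta_n)\le\epsilon'$ and set $V_n=f(\bs\theta_n)\,\mathbb I\{\tau>n,\ \bs\theta_k\in\mathcal N(\mathcal X)\ \forall k\le n\}$, which is zeroed upon either success or escape and hence nonnegative. On the event $V_n>0$ one has $\bs\theta_n\in\mathcal N(\mathcal X)$ and $f(\bs\theta_n)>\epsilon'$, so the additive noise term is dominated by the drift ($\tfrac{a^2\sigma_\infty^2 L_{\mathrm{avg}}}{2}<\tfrac{a\mu}{2d}f(\bs\theta_n)$), upgrading the recursion to a genuine contraction $\mathbb E[V_{n+1}\mid\mathcal F_n]\le\rho V_n$ with $\rho=1-\tfrac{a\mu}{2d}\in(0,1)$ (this is where $a<d/\mu$ is used to keep $\rho>0$). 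Consequently $\rho^{-n}V_n$ is a nonnegative supermartingale, so $\mathbb E[V_N]\le\rho^{\,N-1}f(\bs\theta_1)$. Since the event in part (ii) forces $V_N=f(\bs\theta_N)>\epsilon'$, Markov's inequality gives $\mathbb P(\text{part (ii)})\le \rho^{\,N-1}f(\bs\theta_1)/\epsilon'$. Choosing $N=\Omega\!\big(\tfrac{d}{a\mu}\log(\cdots)\big)$ makes this at most $\eta-f(\bs\theta_1)/\delta_f$, and adding the part-(i) bound $f(\bs\theta_1)/\delta_f$ yields total failure $\le\eta$; the cruder route of bounding the expected hitting time and applying Markov in time accounts for the extra $1/\eta$ factor in the stated $N$.

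The main obstacle is the \textbf{localization} of the PL inequality: the contraction is only available while $\bs\theta_n$ stays in $\mathcal N(\mathcal X)$, so a naive unrolling of the recursion is illegitimate. The remedy is precisely the stopped process above, which must be engineered so that (a) escape trajectories are set to $0$ rather than to their (possibly large) loss values, preventing them from corrupting the expectation, and (b) the additive measurement-noise floor does not destroy the geometric decay — handled by the threshold $\epsilon'$ that keeps the drift strictly dominant until success. Coordinating the two failure modes through a single learning-rate window (consistent with \cref{lemma: stability of RCD} because $\epsilon<\delta_f$) is the delicate bookkeeping step; everything else is the RCD analogue of the GD argument, with the extra $1/d$ in the contraction rate producing the extra $d$ in the iteration count $N$.
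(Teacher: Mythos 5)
Your proposal is correct, and the descent inequality you derive is exactly the paper's \eqref{eqn:second_term_estimation_RCD}; the stability half of your argument (part (i)) is identical to the paper's use of \cref{lemma: stability of RCD}. Where you genuinely diverge is the final probabilistic step. The paper's stopped process \emph{freezes} at the value $f(\bs\theta_\tau)\mathbb{I}_\tau$ once the loss drops below $\epsilon$, so the one-step contraction only acts on the event $\{\tau>n\}$; to extract a rate the authors lower-bound $\mathbb{P}(\tau>n)\geq p_{\mathrm{fail}}$, obtain $\mathbb{E}(V_{n+1})\leq\bigl(1-\tfrac{\mu a\, p_{\mathrm{fail}}}{2d}\bigr)\mathbb{E}(V_n)$, and close a self-referential inequality
\[
p_{\mathrm{fail}}\leq \frac{\bigl(1-\tfrac{\mu a p_{\mathrm{fail}}}{2d}\bigr)^{N}f(\bs\theta_1)}{\epsilon}+\frac{f(\bs\theta_1)}{\delta_f}
\]
by contradiction; this is precisely what injects the extra $1/\eta$ into the stated $N$. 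You instead \emph{zero} the process on success as well as on escape, which makes $\mathbb{E}[V_{n+1}\mid\mathcal{F}_n]\leq\bigl(1-\tfrac{\mu a}{2d}\bigr)V_n$ hold unconditionally, so a single telescoping plus Markov at the noise floor $\epsilon'$ bounds the non-escape failure directly by $\rho^{N-1}f(\bs\theta_1)/\epsilon'$. This yields $N=\mathcal{O}\bigl(\tfrac{d}{\mu a}\log(\cdots)\bigr)$ with no $1/\eta$ factor — a slightly sharper iteration count that still satisfies the theorem's $\Omega(\cdot)$ requirement, since the success event is monotone in $N$ (your closing remark attributing the $1/\eta$ to a ``cruder route'' is the right diagnosis, though it is not needed to complete your proof). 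The only bookkeeping point worth making explicit is that your two failure events must be checked to cover the complement of success: any failing trajectory either stays in $\mathcal{N}(\mathcal{X})$ up to time $N$ (your part (ii), forcing $V_N>\epsilon'$) or exits at some first time $n_0\leq N$ with $f>\epsilon'$ on $[1,n_0)$ (exactly the event of \cref{lemma: stability of RCD}); you assert this decomposition and it does hold, so the argument is complete.
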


The proofs of these theorems can be found in the Appendix \ref{sec: appendix D}. 

\begin{remark}
We emphasize that \cref{thm:GD_noisy} and \cref{thm:RCD_noisy} are general convergence results that require only mild conditions. Specifically, \cref{thm:GD_noisy} can be used to demonstrate the stability and convergence of the traditional SGD algorithm when the right assumptions are in place. A convergence result analogous to the one previously discussed has been investigated in~\cite[Theorem 7]{liu2022loss}, where the authors impose a more stringent requirement on the cost function~\cite{bassily2018exponential}. In our work, we demonstrate the convergence of noisy GD using more sophisticated techniques in probability theory and adopt a weak version of probabilistic convergence~\cite{kleinberg2018alternative}. In addition, our approach can be directly extended to show the convergence of noisy RCD as in \cref{thm:RCD_noisy}, which to the best of our knowledge, has not been established before. These two theorems suggest that with a high probability, the loss function can achieve a small loss during the training process. In other words, it is likely that the parameter $\bs\theta$ remains in the basin $\mathcal{N}$ until the precision $\epsilon$ is attained at some point. After that, the optimization algorithm could diverge unless a certain strategy is applied, for example, a schedule of decreasing learning rates or an early stopping criterion.

\end{remark}

\begin{remark}
    Our theoretical result clarifies a relation between the learning rate and the desired precision in optimization. For example, the precision $\epsilon$ is manifested in the upper bounds of the learning rates in \cref{thm:GD_noisy} and \cref{thm:RCD_noisy}. Thus, to reach precision $\epsilon$, it is suggested to use an $\mathcal{O}(\epsilon)$ learning rate. Otherwise, due to the stability issue, the trajectory is no longer guaranteed to converge to the precision with positive probability. 
\end{remark}
We present the roadmap for proving Theorem \ref{thm:GD_noisy} as follows: Define the stopping time 
\[
\tau=\inf \left\{k:f(\bs\theta_k)\leq \epsilon\right\}\,.
\]
To prove Theorem \ref{thm:GD_noisy}, it suffices to demonstrate that the probability of failure $\mathbb{P}(\tau>N)$ is small. Since the learning rate $a_n$ is selected to be sufficiently small and, according to the lemma \ref{lemma: stability of GD}, it is likely that $\bs\theta_n$ will remain within the basin until the $\epsilon$ loss is achieved\footnote{Rigorously, we must also take into account the possibility that the optimization algorithm does not reach $\epsilon$ loss in a finite number of iterations.}. Thus, informally, it suffices for us to assume $\bs\theta_n\in\mathcal{N}$. The next step is to find an upper bound for the probability of failure $p_{\mathrm{fail}}=\mathbb{P}(\tau>N)$. Using the local PL condition, we can show that when $\epsilon<f(\bs\theta_n)<\delta_f$, 
\[
\mathbb{E}(f(\bs{\theta}_{n+1})|\bs{\theta}_n)\leq \left(1-\frac{\mu a}{2}\right)f(\bs{\theta}_n)\,,
\]
meaning that the conditional expectation of $f(\bs{\theta}_{n+1})$ decays to zero with rate $\left(1-\frac{\mu a}{2}\right)$. Inspired by this observation, we can construct a supermartingale to show that, if $\tau>N$, then, with high probability, we have $\inf_{1\leq n\leq N}f(\bs\theta_n)\leq \epsilon$. We note that this event is complementary to the failure event $\{\tau>N\}$. Consequently, we obtain an upper bound for $p_{\mathrm{fail}}$.

\section{Numerical results}\label{sec:numerical_2}

In \cref{sec:numerical_1}, depicted in \cref{fig:vqe}, we have demonstrated that the noisy RCD leads to faster convergence than the noisy GD for VQE problems. In this section, we extend our investigation to gauge the efficiency of noisy RCD applied to various other variational quantum algorithms, especially those involving nonconvex optimization problems. The implementation of these algorithms is executed on classical computers. To emulate quantum measurement noise, the partial derivatives undergo perturbation through additive Gaussian noise as outlined in \cref{sec:problem_setup}~\footnote{
The derivative with noise is computed by adding Gaussian noise to the original derivative: $\partial_i f(x) \leftarrow \partial_i f(x) + \epsilon$, where $\epsilon$ follows a Gaussian distribution, denoted as $\mathcal N(0, \sigma)$. In this notation, $\sigma$ signifies the standard deviation, defining the intensity of the Gaussian noise.}. Subsequently, we substantiate this approximation through a numerical experiment on a quantum simulator. This experiment further also proposes suitable values for the strength of the Gaussian noise that we will introduce in the upcoming numerical tests to appropriately mimic the measurement noise. 

In the experiment presented in \cref{sec:alternativevqe}, we utilize Qiskit-0.44.1~\cite{Qiskit}.

The algorithms for subsequent examples are implemented using Numpy~\cite{harris2020array} and Jax~\cite{bradbury2018jax}. We conducted each experiment 10 times, employing different random initializations for each run. All tests are executed on an Intel Xeon CPU @ 2.20GHz, complemented by a T4 GPU.

\subsection{Analyzing the noise distribution}\label{sec:noise_ana}

Building on the numerical experiment detailed in \cref{sec:numerical_1} and executed in Qiskit, we analyze the statistics of the partial derivatives derived from the quantum circuit.  
\cref{fig:vqe_noise} showcases the histograms representing 10000 estimates of partial derivatives with respect to the initial 12 directions, while the histograms for the remaining directions are presented in \cref{fig:noise1} and \cref{fig:noise2} in \cref{sec:noise}. Each estimate of the partial derivatives is averaged over 1000 shots.
From all histograms, we can clearly see that the distribution is closely approximated by a Gaussian distribution. 
In addition, the magnitude of the standard deviation of partial derivative estimates across all directions is comparable. These observations support assumptions of the noise model in \cref{problem_1}. For simplicity, we will employ the Gaussian noise model in our subsequent investigations to compare the performance of the noisy GD and RCD methods.

\begin{figure}[htbp]
   \centering
   \begin{subfigure}[b]{0.3\textwidth}
    \centering
    \includegraphics[width=\textwidth]{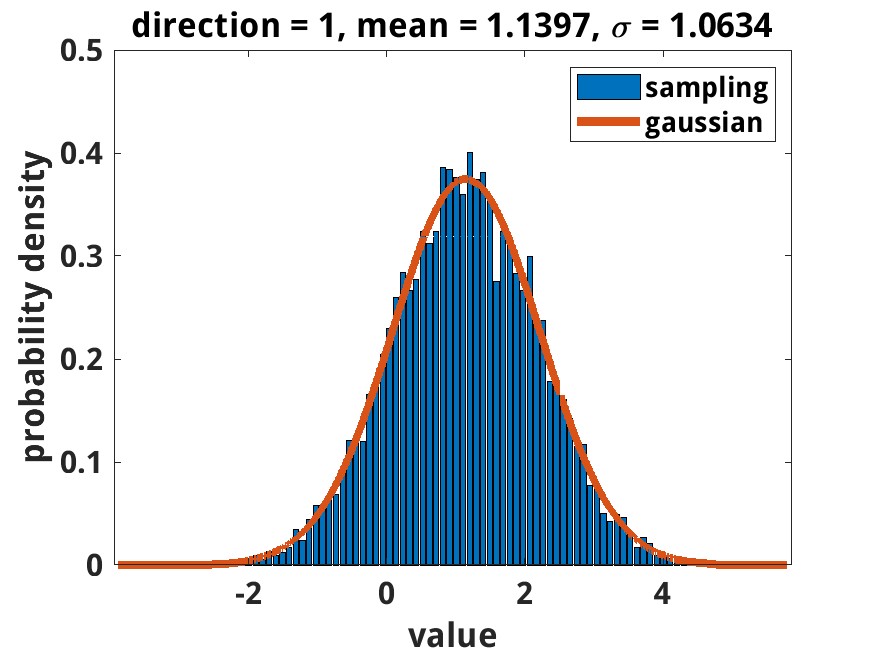}
    \end{subfigure}
   \begin{subfigure}[b]{0.3\textwidth}
    \centering
    \includegraphics[width=\textwidth]{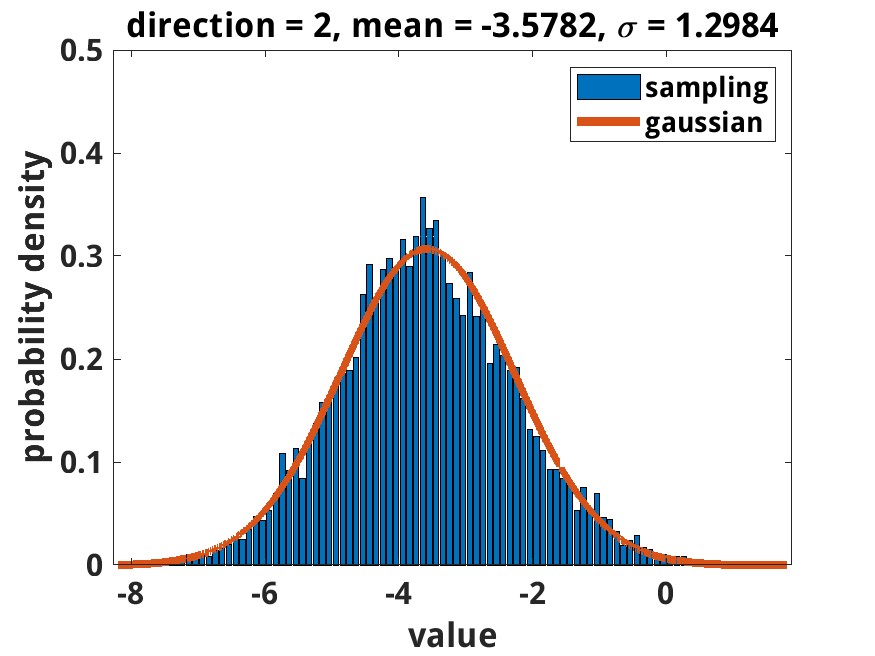}
    \end{subfigure}
    \centering
   \begin{subfigure}[b]{0.3\textwidth}
    \centering
    \includegraphics[width=\textwidth]{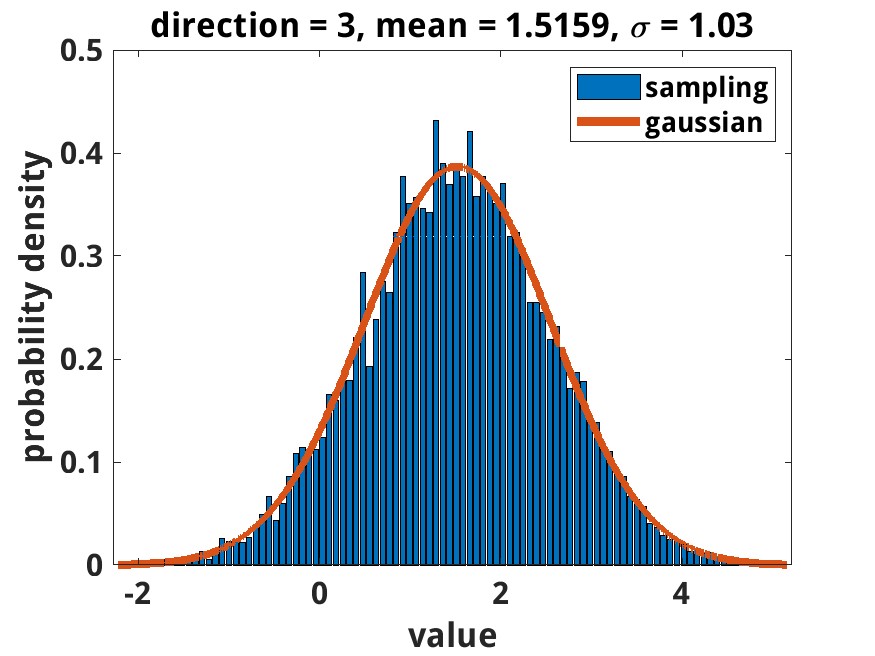}
    \end{subfigure}
    
    \begin{subfigure}[b]{0.3\textwidth}
    \centering
    \includegraphics[width=\textwidth]{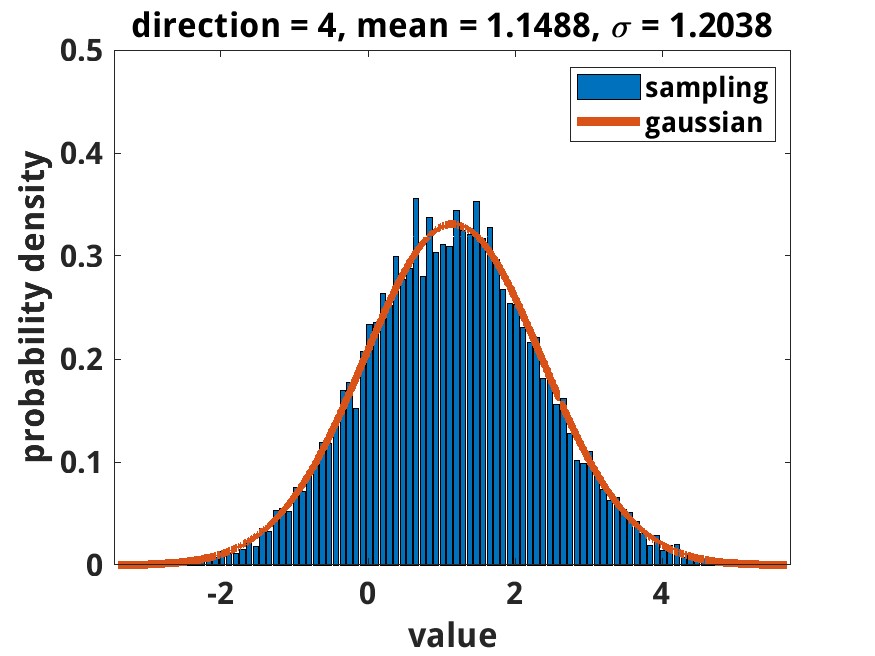}
    \end{subfigure}
   \begin{subfigure}[b]{0.3\textwidth}
    \centering
    \includegraphics[width=\textwidth]{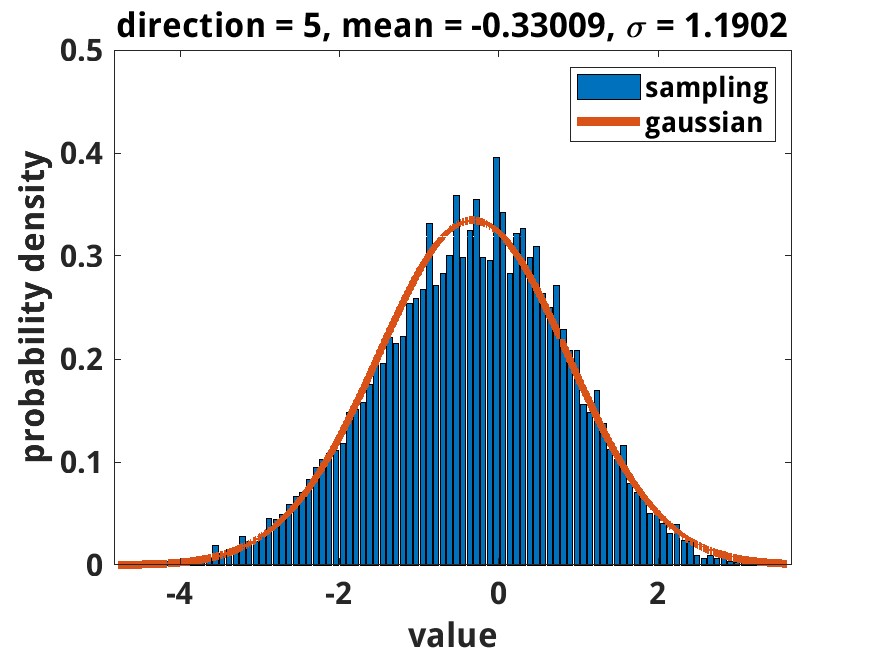}
    \end{subfigure}
    \centering
   \begin{subfigure}[b]{0.3\textwidth}
    \centering
    \includegraphics[width=\textwidth]{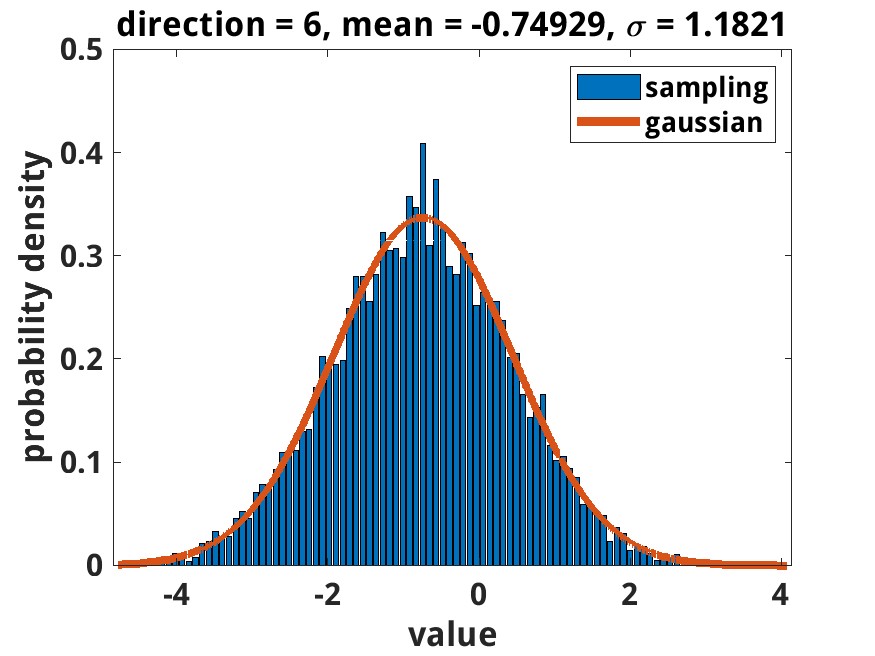}
    \end{subfigure}
    
    \begin{subfigure}[b]{0.3\textwidth}
    \centering
    \includegraphics[width=\textwidth]{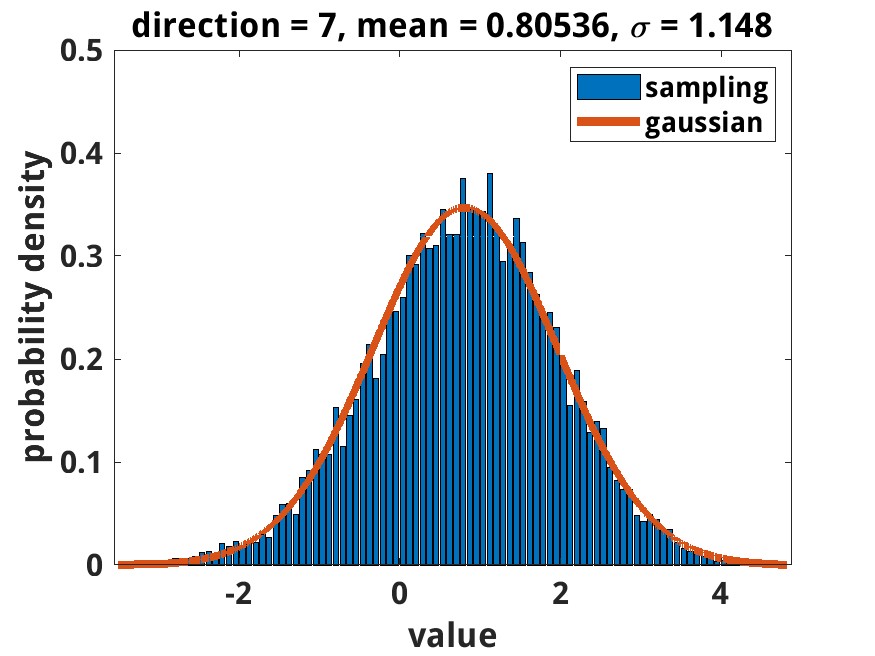}
    \end{subfigure}
   \begin{subfigure}[b]{0.3\textwidth}
    \centering
    \includegraphics[width=\textwidth]{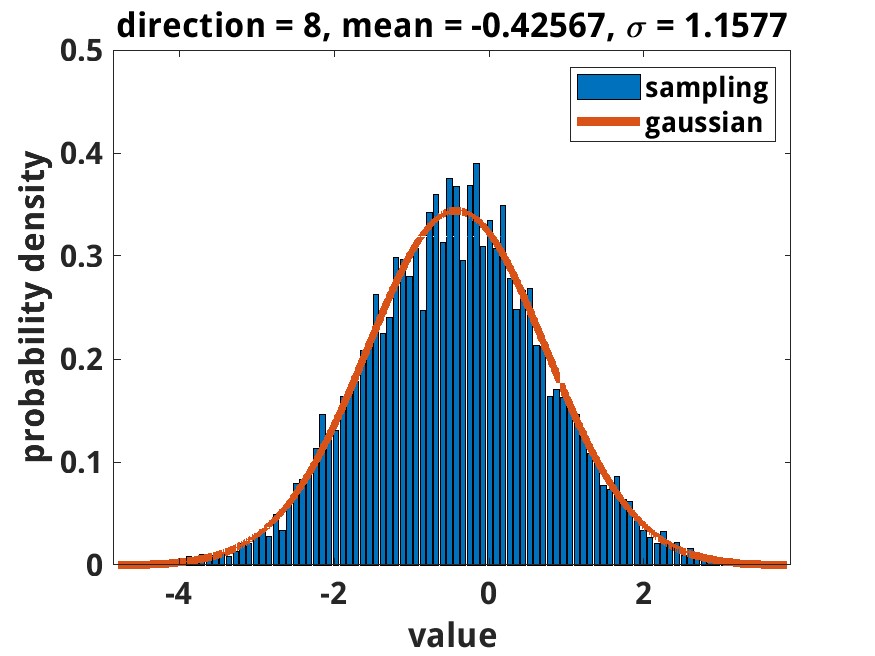}
    \end{subfigure}
    \centering
   \begin{subfigure}[b]{0.3\textwidth}
    \centering
    \includegraphics[width=\textwidth]{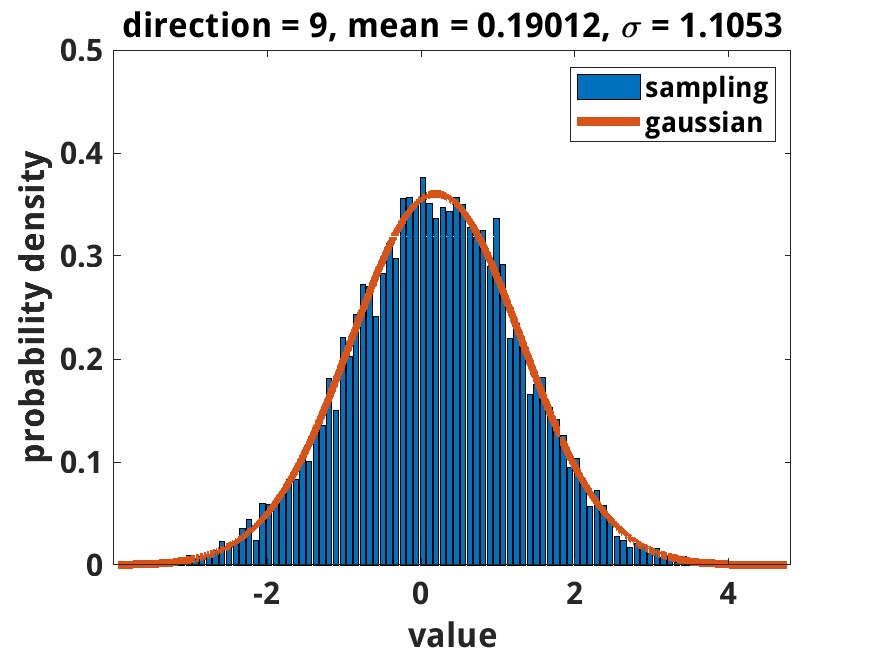}
    \end{subfigure}
    
    \begin{subfigure}[b]{0.3\textwidth}
    \centering
    \includegraphics[width=\textwidth]{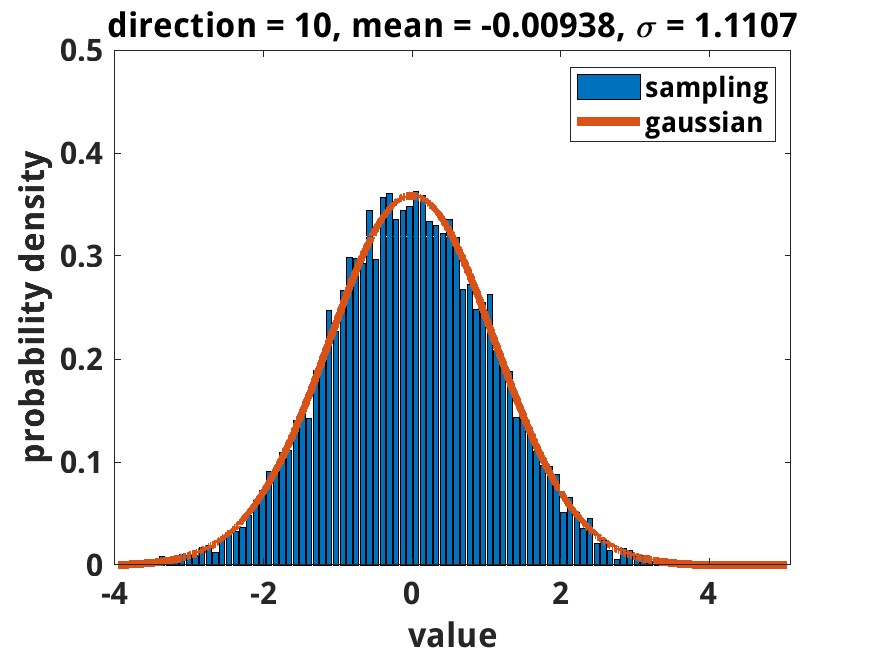}
    \end{subfigure}
   \begin{subfigure}[b]{0.3\textwidth}
    \centering
    \includegraphics[width=\textwidth]{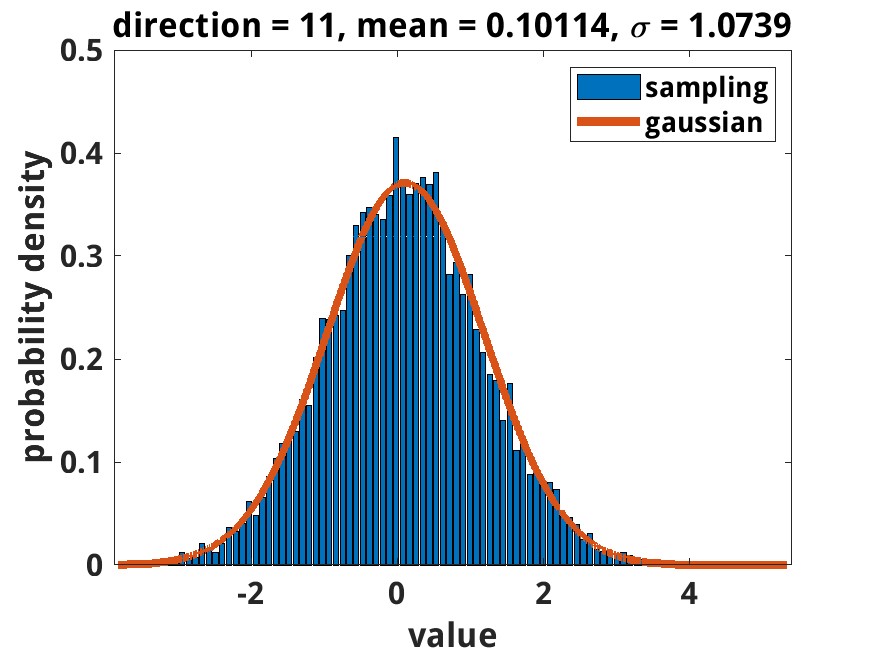}
    \end{subfigure}
    \centering
   \begin{subfigure}[b]{0.3\textwidth}
    \centering
    \includegraphics[width=\textwidth]{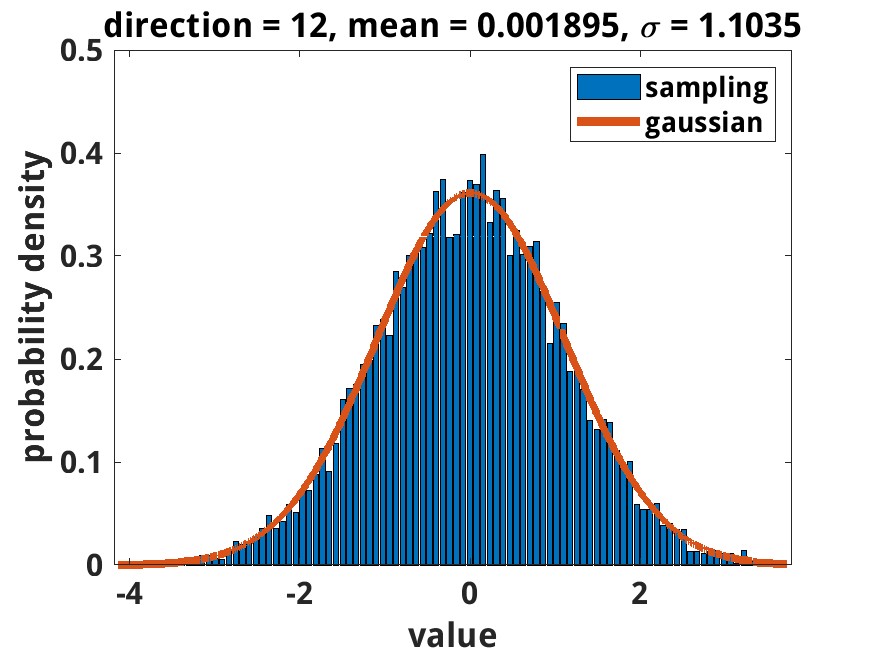}
    \end{subfigure}

    \caption{Histograms of the estimated partial derivatives:  Each panel displays the histogram of 10000 partial derivative estimates in one of the first 12 directions, which are obtained by applying the parameter-shift rule for the ansatz in \cref{fig:circuit}. The sampling of the partial derivatives is carried out at a suboptimal point chosen from one simulation used in \cref{fig:vqe}, where the fidelity is about 0.883.}
    \label{fig:vqe_noise}
\end{figure}

In the next two sections, we conduct a comprehensive comparison between noisy RCD and GD across a broad spectrum of variational quantum algorithms and applications.

\subsection{VQE with a varied circuit structure}\label{sec:alternativevqe}

In \cref{sec:numerical_1}, we utilize the VQE for the TFIM \eqref{eq:isingmodel} employing both the noisy GD and the noisy RCD. In this section, we tackle the same optimization task but with a modified setup.
Specifically,  \cref{fig:circuit_alt} depicts the PQC~\cite{Kandala_2017} utilized in the experiments showcased in \cref{fig:twolocal}, distinct from that presented in \cref{fig:circuit}.

In the experiments illustrated in \cref{fig:twolocal}, each optimization outcome derives from 10 identical simulations with the same initial condition. We set the learning rates for the RCD and GD at $0.3$ and $0.05$, respectively. Each experiment utilizes 10000 shots, with $18$ trainable parameters. Results shown in \cref{fig:twolocal} demonstrate that, compared to GD, RCD requires nearly three times fewer partial derivative evaluations to converge.

\begin{figure}
\centerline{

\scalebox{0.8}{

\Qcircuit @C=1.0em @R=0.2em @!R { \\
                \nghost{{q}_{0} :  } & \lstick{{q}_{0} :  } & \gate{\mathrm{R_Y}\,(\mathrm{{\ensuremath{\theta}}[0]})} & \gate{\mathrm{R_Z}\,(\mathrm{{\ensuremath{\theta}}[3]})} & \ctrl{1} & \ctrl{2} & \gate{\mathrm{R_Y}\,(\mathrm{{\ensuremath{\theta}}[6]})} & \gate{\mathrm{R_Z}\,(\mathrm{{\ensuremath{\theta}}[9]})} & \qw & \ctrl{1} & \ctrl{2} & \gate{\mathrm{R_Y}\,(\mathrm{{\ensuremath{\theta}}[12]})} & \gate{\mathrm{R_Z}\,(\mathrm{{\ensuremath{\theta}}[15]})} & \qw & \qw & \qw\\
                \nghost{{q}_{1} :  } & \lstick{{q}_{1} :  } & \gate{\mathrm{R_Y}\,(\mathrm{{\ensuremath{\theta}}[1]})} & \gate{\mathrm{R_Z}\,(\mathrm{{\ensuremath{\theta}}[4]})} & \targ & \qw & \ctrl{1} & \gate{\mathrm{R_Y}\,(\mathrm{{\ensuremath{\theta}}[7]})} & \gate{\mathrm{R_Z}\,(\mathrm{{\ensuremath{\theta}}[10]})} & \targ & \qw & \ctrl{1} & \gate{\mathrm{R_Y}\,(\mathrm{{\ensuremath{\theta}}[13]})} & \gate{\mathrm{R_Z}\,(\mathrm{{\ensuremath{\theta}}[16]})} & \qw & \qw\\
                \nghost{{q}_{2} :  } & \lstick{{q}_{2} :  } & \gate{\mathrm{R_Y}\,(\mathrm{{\ensuremath{\theta}}[2]})} & \gate{\mathrm{R_Z}\,(\mathrm{{\ensuremath{\theta}}[5]})} & \qw & \targ & \targ & \gate{\mathrm{R_Y}\,(\mathrm{{\ensuremath{\theta}}[8]})} & \gate{\mathrm{R_Z}\,(\mathrm{{\ensuremath{\theta}}[11]})} & \qw & \targ & \targ & \gate{\mathrm{R_Y}\,(\mathrm{{\ensuremath{\theta}}[14]})} & \gate{\mathrm{R_Z}\,(\mathrm{{\ensuremath{\theta}}[17]})} & \qw & \qw\\
\\ }}
}
\caption{A variational circuit ansatz is employed for the Transverse-field ising Model expressed in Eq. \eqref{eq:isingmodel}, utilizing three qubits. This circuit is a parameterized construct comprised of alternating rotation and entanglement layers. Each rotation layer involves the application of single qubit gates, specifically Rotation-y and Rotation-z gates, to all qubits. In contrast, the entanglement layer employs two-qubit gates, namely the controlled-X gate, to facilitate entanglement among the qubits. The ansatz is designated with $18$ parameters.}
\label{fig:circuit_alt}
\end{figure}
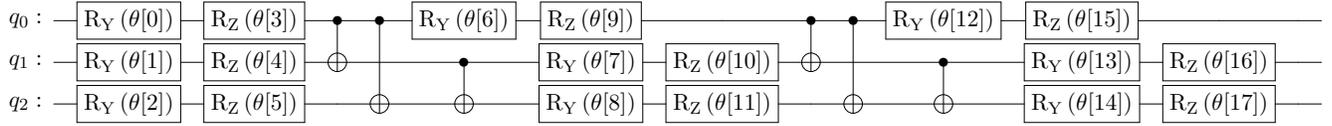

\begin{figure}[htbp]
   \centering
   \begin{subfigure}[b]{0.32\textwidth}
    \centering
    \includegraphics[width=\textwidth]{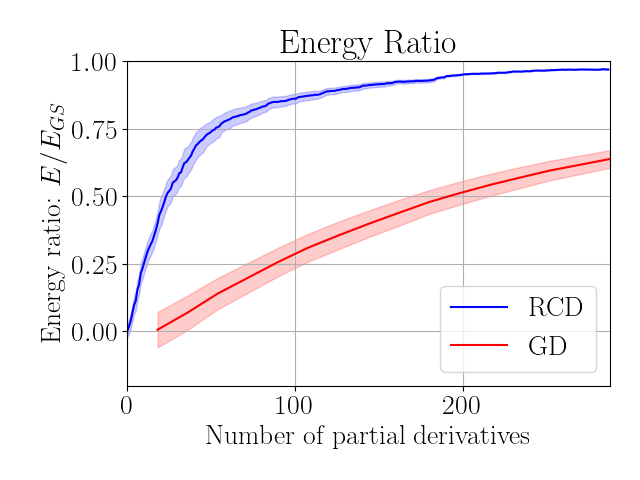}
    \end{subfigure}
    \begin{subfigure}[b]{0.32\textwidth}
    \centering
    \includegraphics[width=\textwidth]{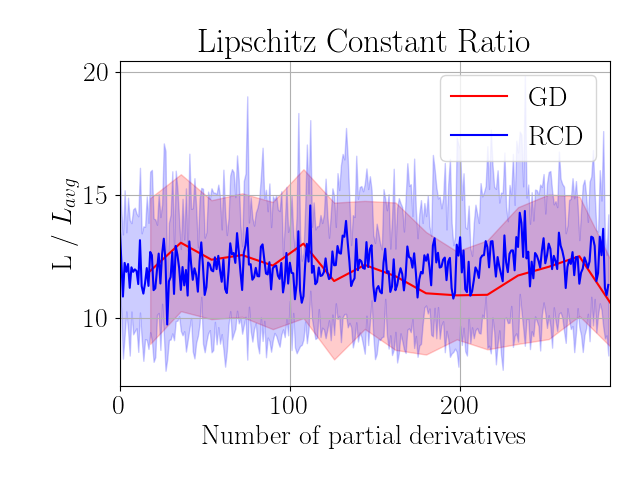}
    \end{subfigure}
    \begin{subfigure}[b]{0.32\textwidth}
    \centering
    \includegraphics[width=\textwidth]{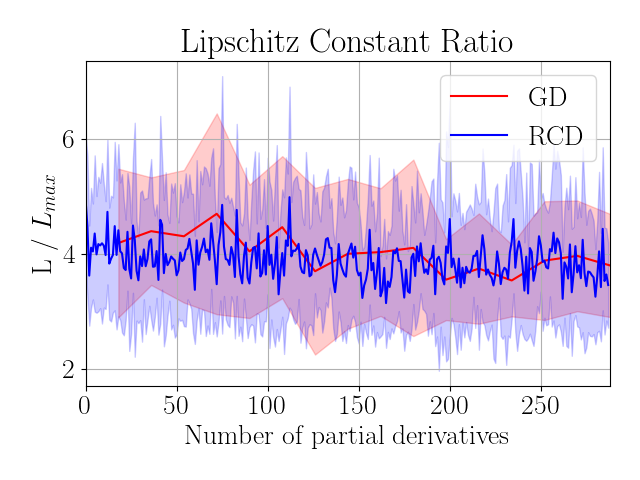}
    \end{subfigure}
    \caption{ \label{fig:twolocal} Performance comparison between GD (red) and RCD (blue) in terms of energy ratio and Lipschitz constant ratios for optimizing the Hamiltonian  \eqref{eq:isingmodel}. The energy ratio $ E / E_{GS}$ is presented in the left panel, while the Lipschitz constant ratios, denoted as $\frac{L}{L_{\text{avg}}}$ and $\frac{L}{L_{\max}}$, are shown in the middle and right panels respectively. The shaded areas in each panel represent variations observed across multiple trials. }
\end{figure}

\color{black}
\color{black}

\subsection{Quantum Approximate Optimization Algorithm (QAOA) for quantum Hamiltonians}
The quantum approximate optimization algorithm (QAOA) 
\cite{farhi2014quantum}, originally devised for solving combinatorial problems, is a leading example for demonstrating quantum advantage
on near-term quantum computers. As introduced in \cite{farhi2014quantum}, the QAOA utilizes a PQC, which naturally enables optimization through the variational quantum algorithm. 

In a generalized QAOA model, we begin with an initial quantum state $\ket{\psi_i}$, which can be easily prepared in experiments, and let it evolve by a parameterized unitary transformation,
\begin{equation}
    \ket{\psi(\bs\alpha,\bs\beta)} = U(\{\alpha_j, \beta_j\}_{j=1}^p) \ket{\psi_i} = e^{-i H_2 \beta_p} e^{-i H_1 \alpha_p} \cdots e^{-i H_2 \beta_1} e^{-i H_1 \alpha_1} \ket{\psi_i}
    \label{eqn:qaoawf},
\end{equation}where the vector $\bs\alpha$ (or $\bs\beta$) enumerates the parameters $\alpha_j$ (or $\beta_j$), and thus the total number of parameters is $2p$ and the unitary transformation alternates between two kinds of parameterized unitary transformations.
With this ansatz, the optimization is performed with the parameters $\{\alpha_j,\beta_j\}$  associated with the application-dependent Hamiltonian matrices $H_1$ and $H_2$, respectively. 

In the subsequent sections, we will consider optimization problems based on the QAOA \eqref{eqn:qaoawf}. We will conduct a comparative analysis of the noisy GD and RCD for various QAOA models that will span a range of systems, including the Ising model (refer to Section~\ref{sec:ising}), the Heisenberg model (refer to Section~\ref{sec:heisenberg}), and variational quantum factoring (refer to Section~\ref{sec:factor}).

\subsubsection{QAOA -- Ising Model} \label{sec:ising}

In this section, we parametrize the transverse-field Ising model by a  Hamiltonian 
\begin{equation}\label{eq: ising qaoa}
   H[h]= \sum_{j=1}^{N-1}Z_{j+1}Z_j + \sum_{j=1}^{N}(Z_j +  h X_j), 
\end{equation} 
where $N$ denotes the total number of qubits. The global control field $h\in\{\pm 4\}$ takes two discrete values, corresponding to the two alternating QAOA generators $H_1=H[-4]$ and $H_2=H[+4]$~\cite{PhysRevX.8.031086, yao2020policy}. The initial state \( \ket{\psi_i} \) corresponds to the ground state of \( H[-2] \), while the desired target state \( \ket{\psi_\ast} \) is selected as the ground state of \( H[+2] \). The variational problem aims to optimize the fidelity~\footnote{Fidelity serves as a metric for optimization. However, one caveat of utilizing fidelity is its reliance on the ground state. In this context, we assume the presence of an oracle capable of producing the fidelity value. Subsequently, we also employ energy as an observable metric for optimization purposes.}, 
\begin{eqnarray}
    \max_{\{\alpha_i, \beta_i\}_{i=1}^p} \mathcal{F}(\{\alpha_i, \beta_i\}_{i=1}^p) = \max_{\{\alpha_i, \beta_i\}_{i=1}^p}\abs{ \mel{\psi_{*}}{ U(\{\alpha_i, \beta_i\}_{i=1}^p)}{ \psi_i} }^2,     
\label{eqn:qaoa_HM}
\end{eqnarray} 
where,
\begin{equation}
 U(\{\alpha_i, \beta_i\}_{i=1}^p) \ket{\psi_i}= e^{-i H_2 \beta_p} e^{-i H_1 \alpha_p} \cdots e^{-i H_2 \beta_1} e^{-i H_1 \alpha_1}\ket{\psi_i}.
\label{eqn:qaoa_HM_psi}
\end{equation}We note that the fidelity optimization \eqref{eqn:qaoa_HM} is equivalent to the optimization of the form \eqref{eq:vqe} by letting the Hamiltonian be $\ketbra{\psi_*}$.  

In the numerical test, we choose a system from \eqref{eq: ising qaoa} with three qubits ($N=3$), and  then apply both GD and RCD methods in the optimization. 
Figure~\ref{fig:qaoa} shows the optimization results obtained from the noisy GD and RCD with the respective learning rates of 0.0045 and 0.015 by using an ansatz defined with 20 parameters. By adjusting the learning rate and tracking the stability, We observe that RCD permits a larger learning rate in comparison to GD, while maintaining the stability. Similarly to the results presented in \cref{fig:vqe}, we compare the performance of the two methods in terms of the number of partial derivative evaluations. From \cref{fig:qaoa}, We observe that noisy RCD converges much faster than noisy GD. While RCD achieves a fidelity near 1 with 500 partial derivative evaluations, GD only attains a fidelity below 0.25 with an equivalent number of evaluations.
This computational effectiveness of RCD can be attributed to the large ratios of Lipschitz constants shown in \cref{fig:qaoa}, which are obtained along the optimization trajectories.

\begin{figure}[htbp]
   \centering
   \begin{subfigure}[b]{0.32\textwidth}
    \centering
    \includegraphics[width=\textwidth]{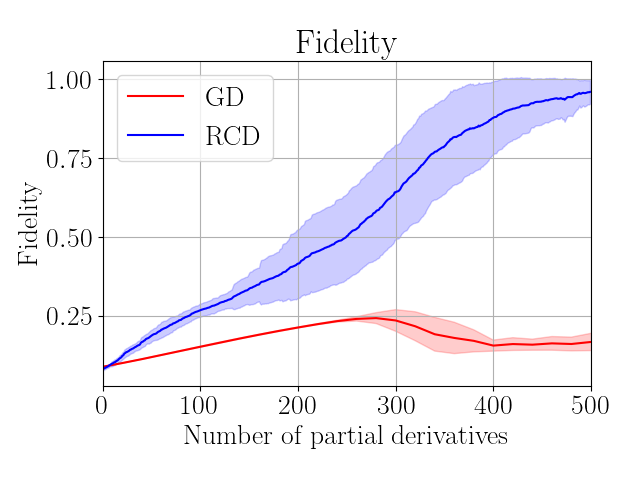}
    \end{subfigure}
    \begin{subfigure}[b]{0.32\textwidth}
    \centering
    \includegraphics[width=\textwidth]{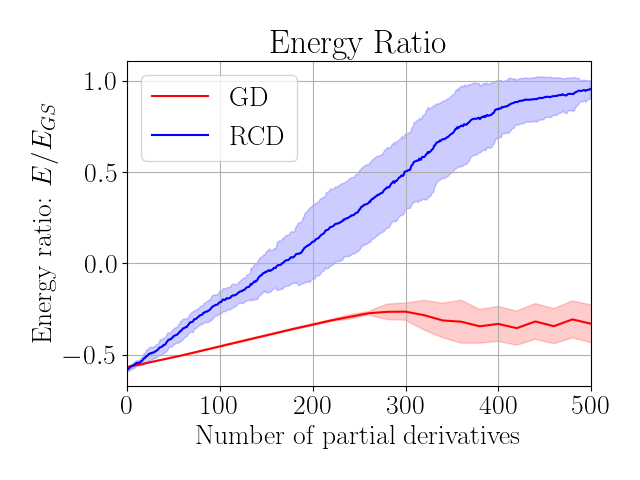}
    \end{subfigure}
    
    \begin{subfigure}[b]{0.32\textwidth}
    \centering
    \includegraphics[width=\textwidth]{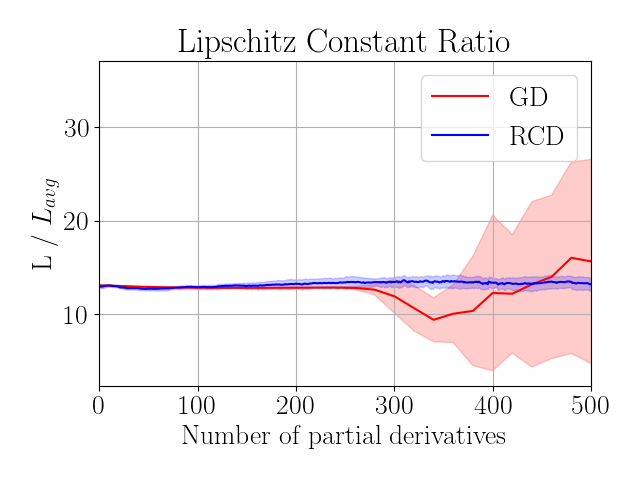}
    \end{subfigure}
    \begin{subfigure}[b]{0.32\textwidth}
    \centering
    \includegraphics[width=\textwidth]{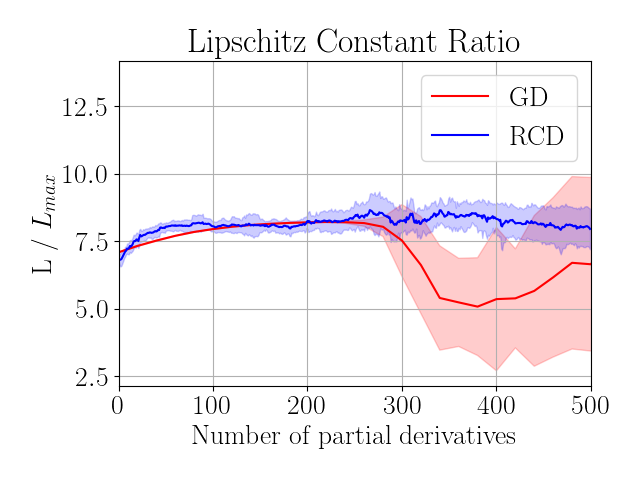}
    \end{subfigure}
    \caption{ \label{fig:qaoa} Performance comparison between the noisy GD and RCD for the Ising model \eqref{eq: ising qaoa}. The corresponding Lipschitz constant ratios, denoted as $\frac{L}{L_{\text{avg}}}$ and $\frac{L}{L_{\max}}$, are presented in the bottom figures. The shaded areas within the figures represent variations that have been observed across 10 random realizations. The optimization is performed for parameters with dimension equal to 20. 
}
\end{figure}

\subsubsection{QAOA -- Heisenberg Model} \label{sec:heisenberg}

Our second test problem with QAOA is the (anisotropic) spin-$1$ Heisenberg model, $H \!=\!H_1\!+\!H_2$, with the alternating Hamiltonians given by, \[H_1\!\!=\!\! J\sum_{j=1}^N (X_{j+1}X_j \!+\! Y_{j+1}Y_j), \quad H_2= \Delta \sum_{j=1}^N Z_{j+1}Z_j, \nonumber\] with anisotropic parameter $\Delta/J=0.5$ (topological/Haldane~\cite{chen2003ground,pollmann2010entanglement,langari2013ground, PhysRevX.11.031070}). For the Heisenberg model, we consider a system consisting of eight qubits ($N=8$) and choose the fidelity as a measure for optimization, similarly to the setup for the results in \cref{fig:qaoa}. We set the antiferromagnetic initial state to \(\ket{\psi_i} =  \ket{10101010} \). The target state  is the ground state of the Hamiltonian \(H \!=\! H_1 + H_2\). We employ the QAOA ansatz represented by Eq.~\eqref{eqn:qaoa_HM_psi} and carry out the fidelity optimization detailed in Eq.~\eqref{eqn:qaoa_HM}.

Figure~\ref{fig:heisenberg} showcases the performance outcomes  from noisy GD and RCD simulations with learning rates set to 0.01 and 0.1, respectively. This QAOA model involves  28 parameters. 
The fidelity result shows that RCD converges to the target state much faster than GD. This phenomenon can be elucidated by noting that the ratios of Lipschitz constants derived from both noisy methods, $\frac{L}{L_{\text{avg}}}$ and $\frac{L}{L_{\max}}$, average around 10 and 6 along the trajectories, respectively. Especially, the magnitude of the ratio $\frac{L}{L_{\max}}$ is similar to that of the ratio of the numbers of partial derivative evaluations to reach a high fidelity $>0.8$ from both noisy methods, as shown in \cref{fig:heisenberg}. Based on the observed numerical results, a high ratio of $\frac{L}{L_{\max}}$ is responsible for the  efficiency of RCD in this optimization problem.

\begin{figure}[htbp]
   \centering
   \begin{subfigure}[b]{0.32\textwidth}
    \centering
    \includegraphics[width=\textwidth]{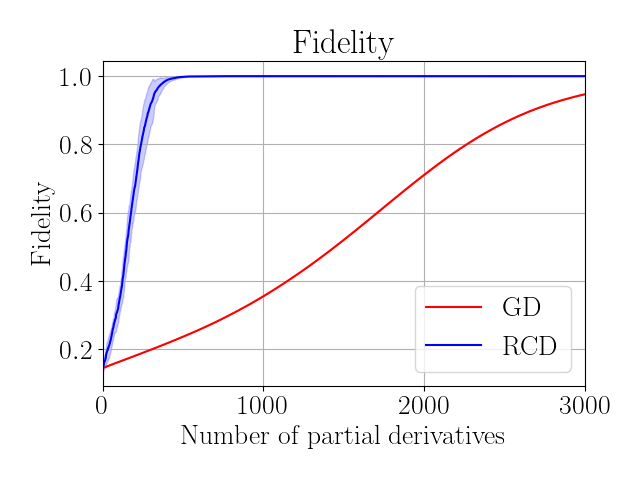}
    \end{subfigure}
    \begin{subfigure}[b]{0.32\textwidth}
    \centering
    \includegraphics[width=\textwidth]{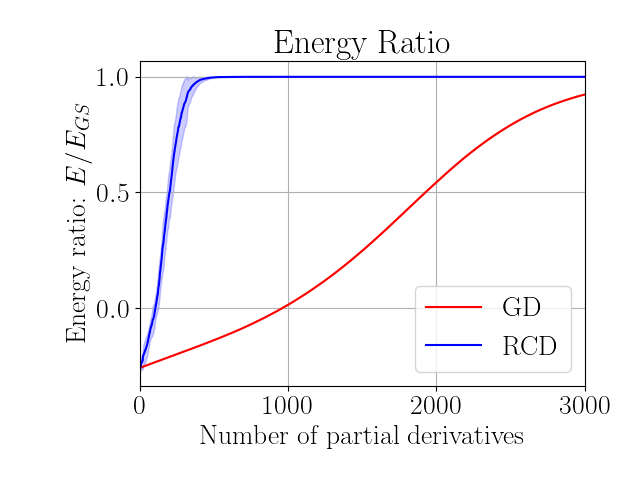}
    \end{subfigure}
    
       \centering
   \begin{subfigure}[b]{0.32\textwidth}
    \centering
    \includegraphics[width=\textwidth]{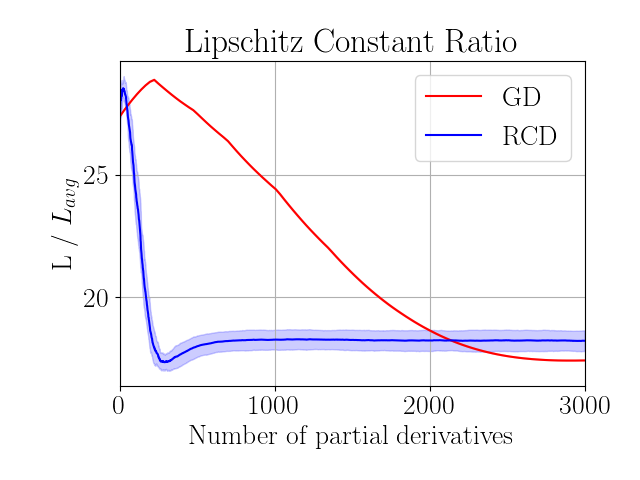}
    \end{subfigure}
   \begin{subfigure}[b]{0.32\textwidth}
    \centering
    \includegraphics[width=\textwidth]{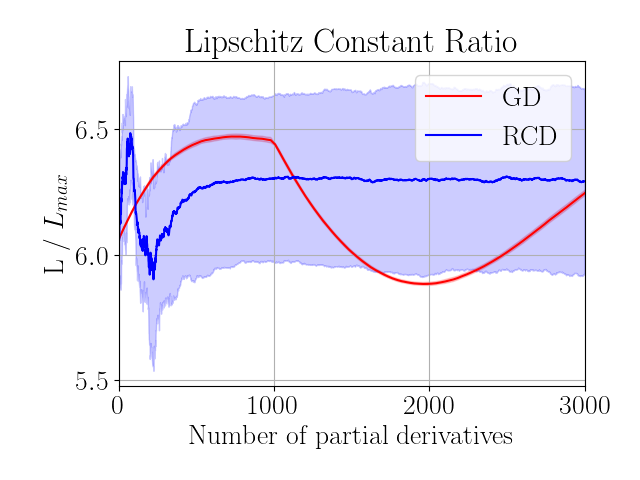}
    \end{subfigure}
    \caption{ \label{fig:heisenberg} Performance comparison between noisy GD and RCD for the Heisenberg model. The corresponding Lipschitz constant ratios, denoted as $\frac{L}{L_{\text{avg}}}$ and $\frac{L}{L_{\max}}$, are presented in the middle and right. The shaded areas within the figure represent variations that have been observed across 10 random realizations. The optimization is performed in dimensions of 28. }
\end{figure}

\subsection{QAOA for classical combinatorial optimization problems} 

Quadratic unconstrained binary optimization (QUBO) problems have significant applications in fields such as finance, logistics, and machine learning, etc. Recognized as one prominent optimization model in quantum computing, QUBO consolidates a wide range of combinatorial optimization problems~\cite{PhysRevA.104.062426,Kochenberger2014, anthony2017quadratic, glover2019quantum} and translates them into identifying the ground state of classical Ising models~\cite{lucas2014ising}.

The goal of QUBO is to identify a sequence of binary variables ($0$ or $1$) that minimize a quadratic function. Specifically, a cost function $f_{Q}$ is constructed over the set of binary vectors, $\mathbb{B}^{n}$:
\begin{equation}\label{eq:qubof}
    f_{Q}(x)=x^{\top} Q x=\sum_{i,j=1}^n Q_{i j} x_{i} x_{j}.
\end{equation}
In this context, $\mathbb{B}=\{0,1\}$ signifies the set of binary values (or bits), and $\mathbb{B}^{n}$ represents the collection of binary vectors with length $n>0$. A symmetric, real-valued matrix $Q \in \mathbb{R}^{n \times n}$ is introduced, with each element $Q_{i j}$ determining the weight for the corresponding pair of indices $i, j \in {1, \ldots, n}$. For example, if $i=j$, then the term $Q_{ii} x_{i}^2$ contributes $Q_{ii}$ to the function value when $x_i=1$. On the other hand, if $i\neq j$, then the term $Q_{ij} x_{i} x_{j}$ contributes $Q_{ij}$ to the function value when both $x_i=1$ and $x_j=1$. 

Overall, QUBO seeks to minimize the function $f_{Q}$ over the set of binary vectors by determining an optimal minimizer $x^*$,
\begin{equation}
x^{*}=\underset{x \in \mathbb{B}^{n}}{\arg \min } f_{Q}(x).    
\end{equation}
Incorporating the variational quantum algorithm into QUBO, we reformulate the cost function using the following substitution:
\begin{equation}
    x_i = \frac{1-Z_i}{2}\text{ or }\frac{1+Z_i}{2},
\end{equation}where the variable $x_i$ is supplanted by the Pauli Z matrix operating on the $i$-th qubit. This replacement facilitates the formulation of a model Hamiltonian. Its ground state can be approximated by minimizing the expected energy via the variational quantum algorithm, as elaborated in \cref{sec:maxcut}.

In the following sections, we evaluate the performance of the noisy GD and RCD across various QUBO applications, focusing on the ground state energy estimation. These applications encompass Max-Cut in \cref{sec:maxcut}, the traveling salesman problem in \cref{sec:tsp}, and variational quantum factoring in \cref{sec:factor}.

\subsubsection{Max-Cut} \label{sec:maxcut}

For the Max-Cut problem, the graph employed in our numerical experiments is presented as follows:

\begin{center}
\begin{tikzpicture}
[scale=.8,auto=left,every node/.style={circle,fill=blue!20}]

\node (n0) at (2,4) {0};
\node (n1) at (0,2) {1};
\node (n2) at (2,0) {2};
\node (n3) at (4,2) {3};

\foreach \from/\to in {n0/n1,n0/n2,n0/n3,n1/n2,n2/n3}
\draw (\from) -- (\to);
\end{tikzpicture}
\end{center}
The global cost function is designed to maximize $C = \sum_{(i,j) \in E} x_i (1 - x_j)$, where $E$ represents the edges in the graph. For the given graph, the QUBO problem can be formulated as:
\[
\min_{x_i\in\{0,1\}}-3x_{0}^{2} + 2x_{0}x_{1} + 2x_{0}x_{2} + 2x_{0}x_{3} - 2x_{1}^{2} + 2x_{1}x_{2} - 3x_{2}^{2} + 2x_{2}x_{3} - 2x_{3}^{2}\,.
\]
In order to construct the corresponding Hamiltonian, we associate the binary variables \( x_i \) with the Pauli \( Z \) matrices, denoted as \( Z_i \), which act on individual qubits. Taking into account the relationship between the binary variables \( x_i \) and the Pauli matrices \( Z_i \), defined by the equation \( x_i = \frac{{1 - Z_i}}{2} \), the cost function is articulated by the Hamiltonian:
\begin{equation}
H = \frac{1}{2} I - 3 Z_0 + \frac{1}{2} Z_0 Z_1 + \frac{1}{2} Z_0 Z_2 + \frac{1}{2} Z_0 Z_3 + \frac{1}{2} Z_1 Z_2 + \frac{1}{2} Z_2 Z_3.
\end{equation}

Using this Hamiltonian, we construct a parameterized quantum circuit with four qubits ($N=4$) and 20 parameters. The circuit consists of alternating single-gate rotations, denoted as $U_{\text {single }}(\bs\theta)=\prod_{i=1}^{n} \operatorname{RY}\left(\theta_{i}\right)$~\footnote{Each layer of rotation gates includes a rotation-Y gate applied to every qubit.} and entangler gate $U_{\text {entangler }}$~\footnote{The entanglement layer incorporates two-qubit gates for qubit entanglement without tunable parameters. In this experiment, the entangler gate employs controlled-Z gates. For a comprehensive explanation, refer to the circuit architecture in Appendix~\ref{sec:circuit_qubo}}. The configuration of the parametrized quantum circuit is illustrated in Figure~\ref{fig:circuit_qubo}. This structure resembles the variational quantum circuit of the QAOA, with the ansatz given by $|\psi(\bs\theta)\rangle=\left[U_{\text {single }}(\bs\theta) U_{\text {entangler }}\right]^{m}|+\rangle$. For the optimization process, we assign a learning rate of 0.1 for GD and 3.0 for RCD and select energy as the optimization metric.

As illustrated in \cref{fig:maxcut}, the RCD also outperforms GD in this case, as it converges to an energy ratio of 1 with roughly 200 partial derivative evaluations. In contrast, the GD achieves only an average of 0.75 with 1000 derivative evaluations. The superior performance of RCD in \cref{fig:maxcut} can again be attributed to the significant values of $\frac{L}{L_{\text{avg}}}$ and $\frac{L}{L_{\max}}$,  both exceeding an order of magnitude of 3. As observed from the optimization result, a high ratio of $\frac{L}{L_{\text{avg}}}$ is indicative of the rapid convergence of RCD in this application.

\begin{figure}[htbp]
   \centering
   \begin{subfigure}[b]{0.32\textwidth}
    \centering
    \includegraphics[width=\textwidth]{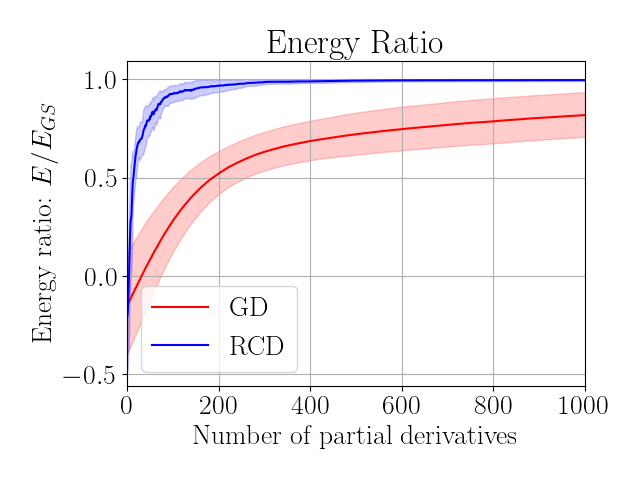}
    \end{subfigure}
   \begin{subfigure}[b]{0.32\textwidth}
    \centering
    \includegraphics[width=\textwidth]{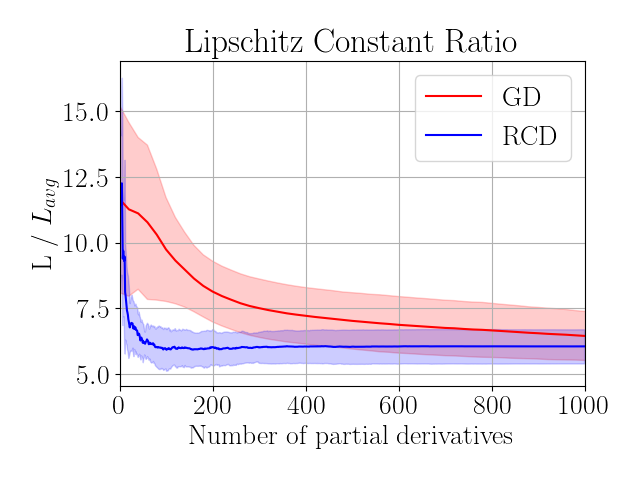}
    \end{subfigure}
   \begin{subfigure}[b]{0.32\textwidth}
    \centering
    \includegraphics[width=\textwidth]{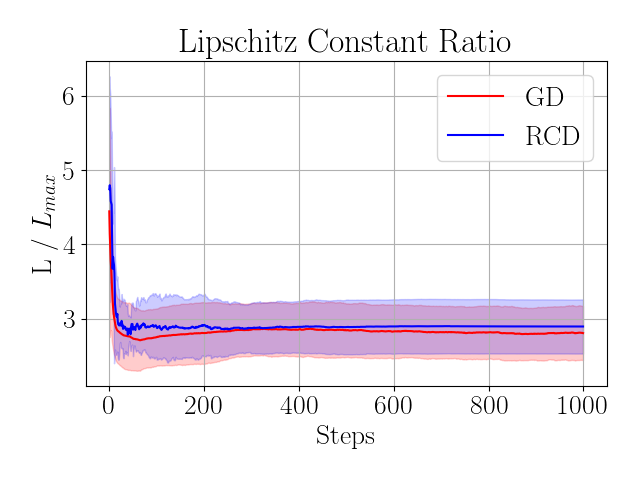}
    \end{subfigure}
    \caption{ \label{fig:maxcut} Performance comparison between noisy GD and RCD for the Max-Cut problem. The corresponding Lipschitz constant ratios, denoted as $\frac{L}{L_{\text{avg}}}$ and $\frac{L}{L_{\max}}$, are presented in the middle and right panels. The shaded areas within the figure represent variations that have been observed across 10 random realizations. The optimization process has been performed in 20 dimensions.}

\end{figure}

\subsubsection{Traveling Salesman Problem (TSP)} \label{sec:tsp}

We have designed a numerical test for the traveling salesman problem (TSP) using three cities as an example. The intercity costs for these cities are 48, 63, and 91 respectively.  The cost of TSP is defined as
\[
C(\mathbf{x})=\sum_{i, j} w_{i j} \sum_{p} x_{i, p} x_{j, p+1}+A \sum_{p}\left(1-\sum_{i} x_{i, p}\right)^{2}+A \sum_{i}\left(1-\sum_{p} x_{i, p}\right)^{2}\,,
\]
where $i$ labels the node, $p$ indicates its order, and $x_{i, p}$ is in the set $\{0, 1\}$ and the penalty parameter $A$ is set sufficiently large to effectively enforce constraints. More details regarding the expansion of $C(\textbf{x})$ can be found in \cref{sec:cost}.

Utilizing the defined cost function, we establish a model Hamiltonian in the same manner as presented in \cref{sec:maxcut}. We aim to prepare its ground state to address the QUBO problem. A detailed representation of the Hamiltonian is available in \cref{sec:cost}. We construct a parameterized quantum circuit comprising alternating single-gate rotations, represented by $U_{\text {single }}(\bs\theta)=\prod_{i=1}^{n} \operatorname{R_Y}\left(\theta_{i}\right)$ and entangler gate $U_{\text {entangler}}$. This circuit resembles the one depicted in Figure~\ref{fig:circuit_qubo}, albeit with a greater number of qubits. The total number of trainable parameters is 90, which requires nine qubits ($N=9$) and 10 alternating layers. We employ energy as the measure for the optimization cost function.

In the left panel in \cref{fig:tsp}, the optimization results obtained from the noisy RCD and GD are plotted. Notably, GD exhibits slower convergence compared to RCD in achieving an energy ratio of 1. The employment of 90 parameters in the optimization, a number markedly greater than those in prior applications, might account for this disparity. This increased parameter count likely requires additional iterations and partial derivative evaluations when applying GD. Similarly to previous results, the two types of  Lipschitz constant ratios are obtained and shown along with the iterations in \cref{fig:tsp}. Again, we can see that the values of the ratios are considerably large, especially during the initial stage of the optimization, underlining the efficiency of RCD in the optimization process.

\begin{figure}[htbp]
   \centering
   \begin{subfigure}[b]{0.32\textwidth}
    \centering
    \includegraphics[width=\textwidth]{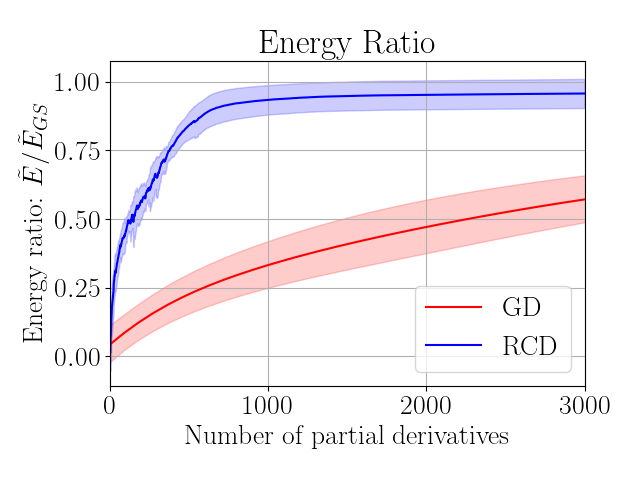}
    \end{subfigure}
   \begin{subfigure}[b]{0.32\textwidth}
    \centering
    \includegraphics[width=\textwidth]{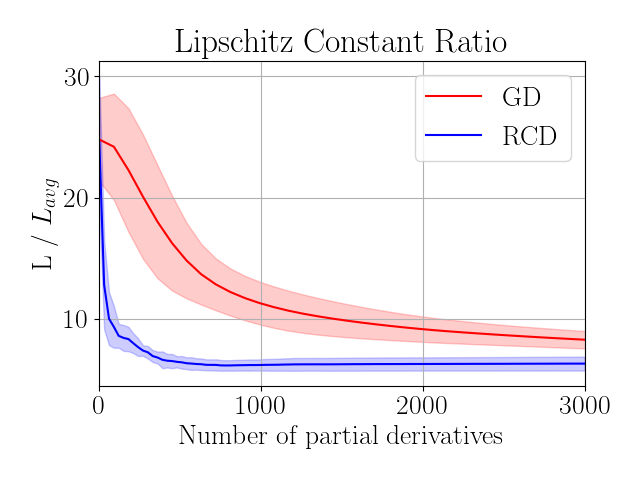}
    \end{subfigure}
   \begin{subfigure}[b]{0.32\textwidth}
    \centering
    \includegraphics[width=\textwidth]{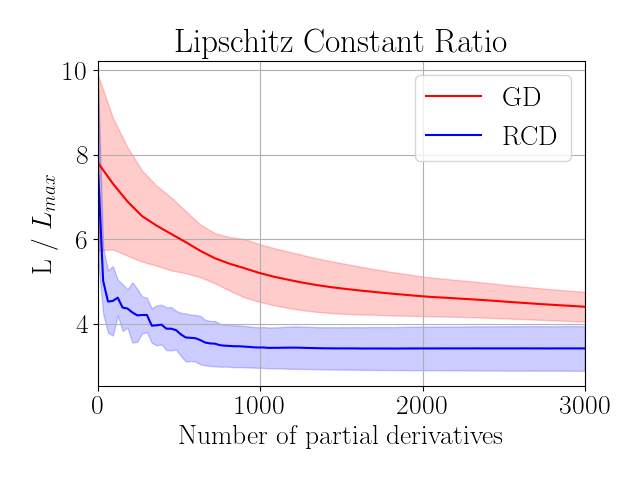}
    \end{subfigure}
    \caption{ \label{fig:tsp} Performance comparison between noisy GD and RCD for the TSP problem. The corresponding Lipschitz constant ratios, denoted as $\frac{L}{L_{\text{avg}}}$ and $\frac{L}{L_{\max}}$, are presented in the middle and right panels. The shaded areas within the figure represent variations that have been observed across 10 random realizations. The optimization process has been performed in 90 dimensions. In the first panel, $\tilde E / \tilde E_{GS}$ is defined as $(E - c) / (E_{GS} - c)$, where $c/E_{GS} = 3000$. For clarity in the presentation, we adjust the energy by a constant. }
\end{figure}

\subsubsection{Variational Quantum Factoring}\label{sec:factor}

Our next QUBO problem is designed as a variational quantum factoring task. For this task, we formulated the optimization problem within the framework of quantum adiabatic computation~\cite{dattani2014quantum, anschuetz2019variational}. 
For example, to factorize $143$ into the product of two prime numbers, let $143 = pq$, where 
\begin{align*}
p &= 8 + 4p_2 + 2p_1 + 1, \\
q &= 8 + 4q_2 + 2q_1 + 1.
\end{align*}
On direct computation, the relations are simplified to 
\begin{align}
p_1 + q_1 - 1 &= 0, \\
p_2 + q_2 - 1 &= 0, \\
p_2q_1 + p_1q_2 - 1&=0.
\end{align}
To solve this system of equations, we introduce a cost function
\begin{equation}
c(p_1, q_1, p_2, q_2) = (p_1+q_1-1)^2+(p_2+q_2-1)^2+(p_2 q_1+p_1 q_2-1)^2.
\label{eqn:ss_transform}
\end{equation}
By borrowing techniques (see Appendix.~\ref{sec:tech_qfac} for more details) from~\cite{xu2012quantum, schaller2007role}, the cost function can be reduced to 
\begin{equation}
c(p_1, q_1, p_2, q_2) = 5-3 p_1-p_2-q_1+2 p_1 q_1-3 p_2 q_1+2 p_1 p_2 q_1-3 q_2+p_1 q_2+2 p_2 q_2+2 p_2 q_1 q_2.
\end{equation}
Following the methods detailed in the QUBO, we treat $(p_1, q_1, p_2, q_2)$ as Boolean functions and substitute each Boolean with $\frac12 (1 - Z_i)$ as we did in previous sections. Then, the problem can be reformulated into the Ising Hamiltonian,
\begin{equation}
\begin{aligned}
     H =& -3 \mathbb{I} + \frac{1}{2} Z_0 + \frac{1}{4} Z_1 + \frac{3}{4} Z_0Z_2 + \frac{1}{4} Z_2 - \frac{1}{4} Z_1Z_2 + \frac{1}{4} Z_0Z_1 \\ 
      &- \frac{1}{4} Z_0Z_1Z_2 + \frac{1}{2} Z_3 + \frac{1}{4} Z_0Z_3 + \frac{3}{4} Z_1Z_3 + \frac{1}{4} Z_2Z_3 - \frac{1}{4} Z_1Z_2Z_3.
\end{aligned}
    \label{eqn:factor}
\end{equation}
The ground states of this Hamiltonian are $|0110\rangle$ and $|1001\rangle$, which respectively correspond to the solutions for the factorization of the number $143$. We summarize it as follows, 
\begin{eqnarray}
    & (p_{1}, p_{2}, q_{1}, q_{2})=(0,1,1,0) \longleftrightarrow (p,q) = (13,11)\\
    & (p_{1}, p_{2}, q_{1}, q_{2})=(1,0,0,1)  \longleftrightarrow (p,q) = (11,13)\\
    & p = 8 + 4p_2 + 2p_1 + 1 \text{ and }q = 8 + 4q_2 + 2q_1 + 1\quad \text{Boolean functions}. 
\end{eqnarray}

In our numerical experiment, we select the mixer Hamiltonian $H_2 = \sum X_i$ and set up a 20-layer QAOA, which corresponds to 40 parameters~\footnote{The QAOA ansatz builds the variational circuit by alternating between the parametrized unitary evolution associated with the problem Hamiltonian $H$ and the mixer Hamiltonian $H_2$.}. We set the learning rates to $0.0001$ for GD and $0.005$ for RCD and choose the energy as a measure for optimization. Even with a small step size, the variance of GD is notably large. Employing a larger step size for GD further exacerbates the results.

In \cref{fig:factor}, the optimization results of the Hamiltonian \eqref{eqn:factor} are depicted, showing that the number of partial derivative evaluations for the RCD to reach an energy ratio of 1 is about 400 whereas the GD seems to require more than 1000 to the same tolerance. As discussed previously, this observation aligns with prior discussions, particularly given the pronounced magnitude of the Lipschitz constant ratios evident in  \cref{fig:factor}.

\begin{figure}[htbp]
   \centering
   \begin{subfigure}[b]{0.32\textwidth}
    \centering
    \includegraphics[width=\textwidth]{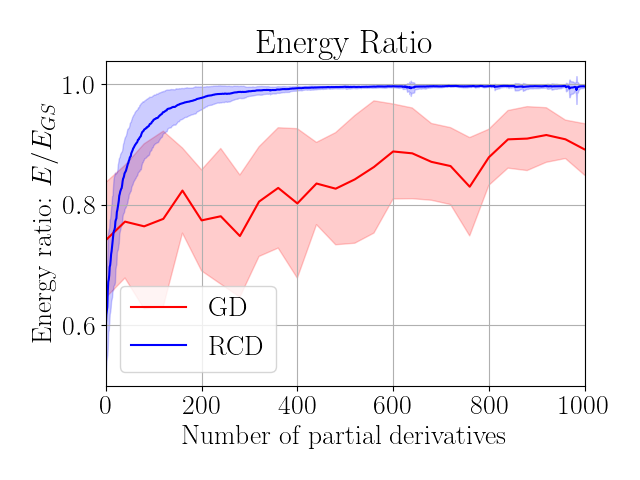}
    \end{subfigure}
   \begin{subfigure}[b]{0.32\textwidth}
    \centering
    \includegraphics[width=\textwidth]{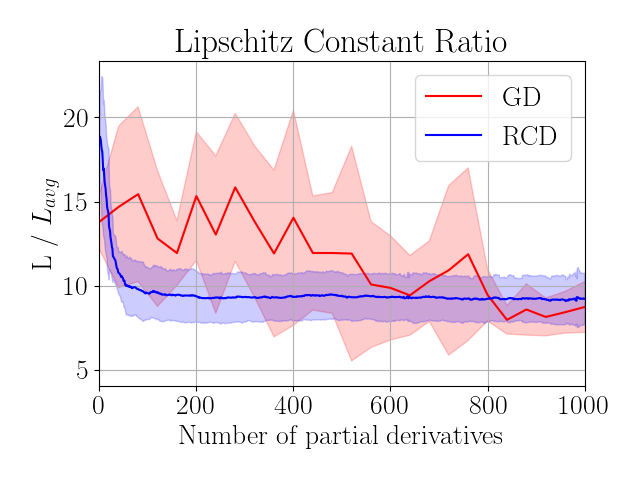}
    \end{subfigure}
   \begin{subfigure}[b]{0.32\textwidth}
    \centering
    \includegraphics[width=\textwidth]{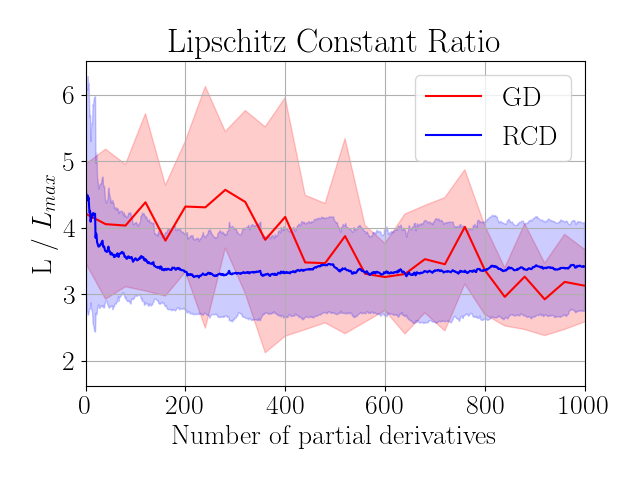}
    \end{subfigure}
    \caption{ \label{fig:factor}  Performance comparison between noisy GD and RCD for the quantum factoring problem. The corresponding Lipschitz constant ratios, denoted as $\frac{L}{L_{\text{avg}}}$ and $\frac{L}{L_{\max}}$, are presented in the middle and right panels. The shaded areas within the figure represent variations that have been observed across 10 random realizations. The optimization process has been performed in 40 dimensions.  }

\end{figure}

\section{Conclusion}

In this paper, we propose that the RCD method is a simple and effective algorithm for optimizing parameterized quantum circuits. The effectiveness of RCD is demonstrated using extensive numerical experiments  for a variety of quantum optimization problems (all are nonconvex problems). In a noisy environment, we find that RCD's computational cost (measured by the number of partial derivative calculations) can be significantly lower than that of the gradient descent method. This suggests that in optimizing parameterized quantum circuits, the benefits of developing an asymptotically efficient and yet complex quantum circuit for evaluating all partial derivatives, as discussed in~\cite{jordan2005fast,gilyen2019optimizing,abbas2023quantum}, may be limited and demand re-evaluation.

From a theoretical perspective, most previous works on randomized coordinate descent algorithms studied the case of convex cost functions or functions satisfying the global PL condition, which do not fit into most variational quantum applications that involve nonconvex cost functions. In this work, we generalized the conventional convergence analysis of randomized coordinate descent to local convergence analysis under a local-PL condition that can capture a large class of nonconvex optimization. In particular, we proved that noisy randomized coordinate descent can converge faster than noisy gradient descent in terms of the total cost, measured in terms of the total number of partial derivative estimations. Relaxing the global PL condition to a local one leads to additional theoretical difficulties, namely, the inherent noise can potentially displace parameters from the local basin. In our study, we introduce, to the best of our knowledge, a novel supermartingale approach to show that, with a proper initial condition, the parameters can stay within the basin and converge to a low-loss point. We believe that this analysis technique may also benefit the analysis of other quantum optimization algorithms. We also emphasize that our analysis does not guarantee the global convergence of the algorithm and, specifically, does not address its performance in the presence of barren plateaus \cite{McClean2018}.

Our analysis noise comes from quantum measurements. However, in real quantum devices, the presence of generic quantum noise may introduce additional noise to the problem. For example, circuit noise will have two effects on our optimization methods. First, it will increase the variance $\sigma^2$ associated with the gradient estimators and, as a result, increase the sampling complexity~\cite{PhysRevApplied.15.034026} or the number of iteration steps (\cref{thm:GD_informal} and~\cref{thm:RCD_informal}). Second, noise may induce a bias that is proportional to the noise strength ($\lambda$)~\cite{PhysRevApplied.15.034026}. Although the bias can be improved by error mitigation techniques (see~\cite{endo2018practical} and~\cite{wang2024can}), e.g., the zero noise extrapolation method, the variance overhead will persist, and thus increase the optimization complexity, as indicated by \cref{thm:GD_informal,thm:RCD_informal}. Meanwhile incorporating small bias into the convergence analysis seems possible, as suggested by previous works (refer, for instance, to~\cite{ajalloeian2020convergence}).

From an optimization standpoint, variational quantum optimization as outlined in \cref{problem_1} also raises many interesting questions. For instance, in our experiment, we employ a fixed number of measurement shots for each step. However, in practical scenarios, implementing an adaptive approach to determine the number of shots could potentially reduce the overall shot count and increase the accuracy. When the estimation for $f(\theta)$ and $\nabla f(\theta)$ reveals significant magnitudes, increasing the number of shots can effectively reduce variance, ensuring high accuracy in the estimation process. Another interesting direction is to consider non-gradient based optimization algorithm. For example, can second-order, or zero th-order optimization methods (i.e., methods using only function evaluation) be more efficient compared to the current gradient-based algorithms? In a technical viewpoint, another question is whether the stability result \cref{lemma: stability of GD} can be generalized so that the event covers the case that the iteration diverges at some time instances, but it remains in the entire basin until then, $f^{-1}\left[0,\delta_f\right)$, not necessarily in the region above the set of global minima, $f^{-1}\left(\frac{La\sigma_{\infty}^2d}{\mu},\delta_f\right)$. If this is possible to show, then it will provide a stronger result such as the stability of the noisy GD and RCD within the entire basin as the stability of Markov Chain in \cite{kushner1997applications}. 
Furthermore, the barren plateau~\cite{McClean2018} (i.e., vanishing gradients) poses a significant challenge for training algorithms for PQCs. Moreover, there remains a gap in the theoretical understanding of how the observed vanishing gradients relate to the assumptions underlying the analysis of optimization algorithms. A simple speculation is that the presence of barren plateau will make some of the constants (e.g., the local PL constant) in our analysis vanishingly small as the number of qubits and/or the circuit depth increases.  It may be possible to combine  RCD with some existing novel optimization algorithms, e.g.,  layerwise training \cite{skolik2021layerwise}, to partially mitigate the difficulty caused by the barren plateau in specific applications. 

\section*{Acknowledgements.   } TK and XL's research is supported by the National Science Foundation Grants No. DMS-2111221 and No. CCF-2312456. TK is supported by a KIAS Individual Grant CG096001 at Korea Institute for Advanced Study. This material is based on work supported by the U.S. Department of Energy, Office of Science, National Quantum Information Science Research Centers, Quantum Systems Accelerator (ZD).  Additional funding is provided by the SciAI Center funded by the Office of Naval Research (ONR), under Grant Number N00014-23-1-2729 (LL). LL is a Simons investigator in Mathematics.

\appendix

\section{Stochastic stability of noisy GD}\label{sec: appendix B}

In this section, we prove Lemma \ref{lemma: stability of GD}.

\begin{proof}[Proof of Lemma \ref{lemma: stability of GD}]
Define the probability filtration: $\mathcal{F}_n=\sigma\left(\bs{\theta}_k\middle|k\leq n\right)$ and the stopping time\footnote{It is straightforward to see 
\[
\{\tau\leq n\}\in \mathcal{F}_n,\quad \{\tau>n\}\in \mathcal{F}_n\,.
\]}
\[
\tau=\inf \left\{k:f(\bs\theta_k)\leq \frac{La\sigma^2_\infty d}{\mu}\right\}\,,
\]
which is the smallest timepoint where the noisy GD achieves $f(\bs\theta_k)\leq\frac{La\sigma^2_\infty d}{\mu}$. 

Define the indicator function $\mathbb{I}_n$:
\begin{equation}\label{eq: characteristic ft2}
    \mathbb{I}_n=\left\{
\begin{aligned}
&1,\quad \text{if}\quad \{\bs{\theta}_k\}^{n-1}_{k=1}\subset f^{-1}\left(\left[0,\delta_f\right)\right)\\
&0,\quad \text{otherwise}
\end{aligned}\right.\,,    
\end{equation}
and the stochastic process
\[
V_n=\left\{
\begin{aligned}
&f(\bs\theta_n)\mathbb{I}_n,\quad n<\tau\\
&f(\bs\theta_\tau)\mathbb{I}_\tau,\quad n\geq\tau
\end{aligned}\right.\,.
\]
 According to the definition of $V_n$, there are complementary and exclusive events (cases):
\begin{itemize}
    \item Case 1: If there exists $0<n<\infty$ such that: 1. $\bs\theta_n\notin\mathcal{N}$; 2. For any $m<n$, $\bs\theta_m\in\mathcal{N}$ and $f(\bs\theta_m)>\frac{La\sigma^2_\infty d}{\mu}$. Then
    \[
    V_n\geq\delta_f\Rightarrow \sup_n V_n\geq\delta_f\,.
    \]
    \item Case 2: For any $n<\tau$, $f(\bs\theta_n)\in\mathcal{N}$.

\end{itemize}
We observe that Case 2 is the stable situation, indicating that $f(\bs\theta_n)$ remains in the basin of the global minimum until it achieves a small loss\footnote{We emphasize that Case 2 also includes the situation where $f(\bs\theta_n)$ remains in the basin and never achieves the small loss.}. 
To prove \eqref{eqn:p_gd}, it suffices to show that
\begin{equation}\label{eqn:P_omega_1_bound}
\mathbb{P}(\Omega_1)\leq \frac{f(\bs{\theta}_{1})}{\delta_f},
\end{equation}where $\Omega_1$ denotes the event associated with Case 1.

Now, we show that $V_n$ is a supermartingale to bound $\sup_n V_n$. Taking the conditional expectation, we obtain
\[
\mathbb{E}(V_{n+1}|\mathcal{F}_{n})=\mathbb{E}(V_{n+1}|\mathcal{F}_{n},\mathbb{I}_n=1,\tau\leq n)\mathbb{P}(\tau\leq n)+\mathbb{E}(V_{n+1}|\mathcal{F}_{n},\mathbb{I}_n=1,\tau>n)\mathbb{P}(\tau>n)\,,
\]
where we use $\mathbb{I}_{n+1}\leq \mathbb{I}_n$. There are two terms in the above equation:
\begin{itemize}
    \item For the first term, when $\tau\leq n$, we obtain $V_{n+1}=V_\tau=V_n$. This implies 
    \begin{equation}\label{eqn:first_term_estimation}
    \mathbb{E}(V_{n+1}|\mathcal{F}_{n},\mathbb{I}_n=1,\tau\leq n)=V_n\,.
    \end{equation}
    \item For the second term, when $\tau>n$, we have $f(\bs\theta_{n})> \frac{La\sigma^2_\infty d}{\mu}$. Then, taking the conditional expectation  yields
\begin{equation}\label{eqn:second_term_estimation}
\begin{split}
    &\mathbb{E}[V_{n+1}|\mathbb{I}_n=1,\bs{\theta}_1,\tau>n]\\
    =&\mathbb{E}[f(\bs{\theta}_{n+1})\mathbb{I}_{n+1}|\mathbb{I}_n=1,\bs{\theta}_1,\tau>n]\\
    \leq &f(\bs{\theta}_n)-a\|\nabla f(\bs{\theta}_n)\|^2+\frac{La^2}{2}(\|\nabla f(\bs{\theta}_n)\|^2+\sigma^2_\infty d)\\
    \leq &(1-\mu a)f(\bs{\theta}_n)+\frac{La^2\sigma^2_\infty d}{2}\\
    < &(1-\mu a)f(\bs{\theta}_n) +\frac{\mu a}{2}f(\bs{\theta}_n)\\ 
    \leq &\left(1-\frac{\mu a}{2}\right)f(\bs{\theta}_n)\mathbb{I}_n=\left(1-\frac{\mu a}{2}\right)V_n,
\end{split}
\end{equation}
where we use assumption \ref{as: local PL} and $a<\frac{1}{L}$ in the second inequality, $\tau>n$ in the third inequality.
\end{itemize}
Combining \eqref{eqn:first_term_estimation} and \eqref{eqn:second_term_estimation}, we obtain
\begin{equation}\label{eqn:iteration}
\mathbb{E}(V_{n+1}|\mathcal{F}_{n})=V_n\mathbb{P}(\tau\leq n)+\left(1-\frac{\mu a}{2}\right)V_n\mathbb{P}(\tau>n)\leq V_n\,.
\end{equation}
Thus, $V_{n}$ is a supermartingale. 

Now, we consider the Case 1 event:
\[
\Omega_1=\left\{\exists n>1, \bs\theta_n\notin\mathcal{N}\ \text{and}\ f(\bs\theta_m)>\epsilon\text{ with }\bs\theta_m\in\mathcal{N},\ \forall 1\leq m<n\right\}\subset\left\{\sup_n V_n\geq \delta_f\right\}\,.
\]
Because $V_{n}$ is a supermartingale, we obtain Case 1 happens with small probability:
\[
\mathbb{P}(\Omega_1)\leq \frac{V_1}{\delta_f}=\frac{f(\bs\theta_1)}{\delta_f}\,.
\]
This concludes the proof.
\end{proof}

\section{Stochastic stability of noisy RCD}\label{sec: appendix C}

In this section, we prove \cref{lemma: stability of RCD} with a slight modification of the proof in \cref{sec: appendix B}. From a theoretical viewpoint, the difference between the noisy GD and RCD methods is made by the construction of gradient estimate (e.g., see \eqref{eq:def_GD} and \eqref{eq: def_RCD}). Compared to GD,  the additional randomness of RCD comes with the random selection of a component as in \eqref{eq: def_RCD}. This difference affects the recursive inequality \eqref{eqn:second_term_estimation} mainly in the previous proof, where we considered the properties of the gradient estimator. From this observation, it suffices to derive a recursive inequality similar to \eqref{eqn:second_term_estimation} to prove \cref{lemma: stability of RCD}.

Note that the sampling of a component within RCD is performed before estimating a partial derivative. Thus, the first step is to take expectation on the partial derivative estimate,
\begin{equation}\label{eq: expectation on xi_RCD}
\mathbb{E}_{\xi_{i_n}}[f(\bs{\theta}_{n+1})]\leq f(\bs{\theta}_n)-a\mathbb{E}_{\xi_{i_n}}[\partial_{i_n}f(\bs{\theta}_n)g_{i_n}(\bs{\theta}_n)]+\frac{L_{i_n}a^2}{2}\mathbb{E}_{\xi_{i_n}}[|g_{i_n}(\bs{\theta}_n)|^2]\,,
\end{equation}where $i_n$ is uniformly sampled index and $g_{i_n}$ is the corresponding unbiased estimate for the partial derivative. Let $\mathcal{F}_n$, $\tau$, $\mathbb{I}_n$, and $V_n$ be as defined in the previous section. By considering the inequality \eqref{eq: expectation on xi_RCD} and the conditional expectation in \eqref{eqn:second_term_estimation}, we achieve the following result by taking expectations with respect to the random index $i_n$,
\begin{equation}\label{eqn:second_term_estimation_RCD}
\begin{split}
    &\mathbb{E}[V_{n+1}|\mathbb{I}_n=1,\bs{\theta}_1,\tau>n]\\
    =&\mathbb{E}[f(\bs{\theta}_{n+1})\mathbb{I}_{n+1}|\mathbb{I}_n=1,\bs{\theta}_1,\tau>n]\\
    \leq &\left(f(\bs{\theta}_n)-\frac{a}{d}\|\nabla f(\bs{\theta}_n)\|^2+\frac{L_{\max}a^2}{2d}\|\nabla f(\bs{\theta}_n)\|^2+\frac{L_{\mathrm{avg}}\sigma^2_\infty da^2}{2d}\right)\mathbb{I}_{n+1}\\
    \leq &\left(\left(1-\frac{\mu a}{d}\right)f(\bs{\theta}_n)+\frac{L_{\mathrm{avg}}a^2\sigma^2_\infty }{2}\right)\mathbb{I}_{n+1}\\
    < &\left(\left(1-\frac{\mu a}{d}\right)f(\bs{\theta}_n) +\frac{\mu a}{2d}f(\bs{\theta}_n)\right)\mathbb{I}_{n+1}\\ 
    = &\left(1-\frac{\mu a}{2d}\right)f(\bs{\theta}_n)\mathbb{I}_n=\left(1-\frac{\mu a}{2d}\right)V_n,
\end{split}
\end{equation}provided that $f(\bs\theta_n)>\frac{L_{\mathrm{avg}a\sigma_{\infty}^2d}}{\mu}$ and $a_n=a<\min\left\{\frac{1}{L_{\max}},\; \frac{d}{\mu}, \;\frac{2\mu \delta_f}{L_{\mathrm{avg}}\sigma^2_\infty d}\right\}$. 

Similarly to \eqref{eqn:iteration}, in the case of RCD, \eqref{eqn:second_term_estimation_RCD} implies
\begin{equation}\label{eqn:iteration_RCD}
\mathbb{E}(V_{n+1}|\mathcal{F}_{n})=V_n\mathbb{P}(\tau\leq n)+\left(1-\frac{\mu a}{2d}\right)V_n\mathbb{P}(\tau>n)\leq V_n\,,
\end{equation}
which implies $V_n$ forms a supermartingale. The remaining proof of \cref{lemma: stability of RCD} follows the same steps as the proof of \cref{lemma: stability of GD}, so we will not include them here.

\section{The proofs of 
\texorpdfstring{\cref{thm:GD_noisy} and \cref{thm:RCD_noisy}}{Lg}}\label{sec: appendix D}

We first show the convergence rate of the noisy GD method, followed by a similar analysis for the noisy RCD method. The following proofs are similar to those in \cref{sec: appendix B} and \cref{sec: appendix C} with minor differences.

\begin{proof}[Proof of Theorem \ref{thm:GD_noisy}]
Define the probability filtration: $\mathcal{F}_n=\sigma\left(\bs{\theta}_k\middle|k\leq n\right)$ and the stopping time
\[
\tau=\inf \left\{k:f(\bs\theta_k)\leq \epsilon\right\}\,,
\]
which is the smallest timepoint where the noisy GD achieves $f(\bs\theta_k)\leq\epsilon$. Then, our ultimate goal is to show that the event that $\inf_{1\leq n\leq N}f(\bs\theta_n)\leq \epsilon$ occurs with high probability, say, at least $1-\eta$.
Then, our goal is to show that for any $\eta\in\left(\frac{f(\bs\theta_1)}{\delta_f},1\right)$, there exists a sufficiently large $N$ such that
\begin{equation}\label{pfail}
    p_{\mathrm{fail}}:=\mathbb{P}(\tau>N)\leq \eta.
\end{equation}
Define the indicator function $\mathbb{I}_n$:
\[
    \mathbb{I}_n=\left\{
\begin{aligned}
&1,\quad \text{if}\quad \{\bs{\theta}_k\}^{n-1}_{k=1}\subset f^{-1}\left(\left[0,\delta_f\right)\right)\\
&0,\quad \text{otherwise}
\end{aligned}\right.\,,    
\]
and the stochastic process
\[
V_n=\left\{
\begin{aligned}
&f(\bs\theta_n)\mathbb{I}_n,\quad n<\tau\\
&f(\bs\theta_\tau)\mathbb{I}_\tau,\quad n\geq\tau
\end{aligned}\right.\,.
\]
Define the unstable event:
\[
\Omega=\left\{\exists n>1, \bs\theta_n\notin\mathcal{N}\ \text{and}\ f(\bs\theta_m)>\epsilon,\ \forall 1\leq m<n\right\}\subset\left\{\sup_n V_n\geq \delta_f\right\}\,.
\]
According to Lemma \ref{lemma: stability of GD} and the proof in \cref{sec: appendix B}, for learning rate $a$ with $\frac{La\sigma^2_\infty d}{\mu}<\epsilon$, we obtain $\Omega$ happens with small probability:
\begin{equation}\label{eqn:failure_1}
\mathbb{P}(\Omega)\leq \frac{V_1}{\delta_f}=\frac{f(\bs\theta_1)}{\delta_f}\,.
\end{equation}  
Recalling \eqref{pfail}, we note that, for any $n\leq N$, 
\[
\mathbb{P}(\tau>n)\geq p_{\mathrm{fail}}\,.
\]
Plugging this into \eqref{eqn:iteration}, we obtain that
\begin{equation}
\begin{split}
    \mathbb{E}(V_{n+1}|\mathcal{F}_n)&= \left(1-\mathbb{P}(\tau>n)+\left(1-\frac{\mu a}{2}\right)\mathbb{P}(\tau>n)\right)V_n\\
    & = \left(1-\frac{\mu a \mathbb{P}(\tau>n)}{2}\right)V_n\\
    &\leq \left(1-\frac{\mu a p_{\mathrm{fail}}}{2}\right)V_n.
\end{split}
\end{equation}By taking the total expectation on both sides and using a telescoping trick, we achieve that
\begin{equation}
    \mathbb{E}(V_{n+1})\leq \left(1-\frac{\mu a p_{\mathrm{fail}}}{2}\right)^n V_1=\left(1-\frac{\mu a p_{\mathrm{fail}}}{2}\right)^n f(\bs\theta_1).
\end{equation}
This means that if the probability of failure, 
 $p_{\mathrm{fail}}$, is large, the expectation of $V_{n+1}$ decreases quickly. By Markov's inequality, we have
\[
\mathbb{P}\left(V_N> \epsilon\right)\leq \frac{\left(1-\frac{\mu a p_{\mathrm{fail}}}{2}\right)^{N-1} f(\bs\theta_1)}{\epsilon}\,,
\]
equivalently,
\[
\mathbb{P}\left(V_N\leq \epsilon\right)\geq 1-\frac{\left(1-\frac{\mu a p_{\mathrm{fail}}}{2}\right)^{N-1} f(\bs\theta_1)}{\epsilon}\,.
\]
Now, if we consider the event $\{V_N\leq \epsilon\}$, then it is the union of the following two events (not necessarily exclusive and complementary), which are slightly different from the ones in \cref{sec: appendix B}:
\begin{itemize}
    \item $\Omega_1$: There exists $n\leq N$ such that $f(\bs\theta_n)\leq \epsilon$ and $\bs\theta_n\in\mathcal{N}$. This means \[\inf_{1\leq n\leq N}f(\bs\theta_n)\leq \epsilon\,.\]

    We want to show that $\Omega_1$ happens with high probability.
    
    \item $\Omega_2$: There exists $n< N$ such that $f(\bs \theta_n)>\delta_f$ and $f(\bs\theta_m)>\epsilon$ for any $m<n$. 
    
    We note that, when $\Omega_2$ happens, we have $V_{n+1}=0$ with $f(\bs\theta_n)>\delta_f$, which implies $\Omega_2\subset \Omega$. According to \eqref{eqn:failure_1}, we obtain
    \[
\mathbb{P}\left(\Omega_2\right)\leq \mathbb{P}\left(\Omega\right)\leq \frac{f(\bs\theta_1)}{\delta_f}\,.
    \]
\end{itemize}
Now, we give a lower bound for the event $\Omega_1$:
\begin{equation}\label{lowerbound}
    \mathbb{P}\left(\inf_{1\leq n\leq N}f(\bs\theta_n)\leq \epsilon\right)=\mathbb{P}\left(\Omega_1\right)\geq \mathbb{P}\left(V_N\leq \epsilon\right)-\mathbb{P}(\Omega_2)\geq 1-\frac{\left(1-\frac{\mu a p_{\mathrm{fail}}}{2}\right)^N f(\bs\theta_1)}{\epsilon}-\frac{f(\bs\theta_1)}{\delta_f}.
\end{equation}Notice
\[
\mathbb{P}\left(\inf_{1\leq n\leq N}f(\bs\theta_n)\leq \epsilon\right)\leq \mathbb{P}(\tau\leq N)=1-p_{\mathrm{fail}}\,.
\]
Combining the above two inequalities, we have
\begin{equation}\label{eqn:must_hold}
p_{\mathrm{fail}}\leq \frac{\left(1-\frac{\mu a p_{\mathrm{fail}}}{2}\right)^N f(\bs\theta_1)}{\epsilon}+\frac{f(\bs\theta_1)}{\delta_f}\,.
\end{equation}
Next, we show \eqref{pfail} using the proof by contradiction. Assume that the conclusion of the theorem is not true, meaning that for some $\eta\in\left(\frac{f(\bs\theta_1)}{\delta_f},1\right)$ and every $N$,
\[
p_{\mathrm{fail}}> \eta\,.
\]
When $p_{\mathrm{fail}}>\eta$ and $N=\frac{2}{\mu a\eta}\log\left(\frac{f(\bs{\theta_1})}{\left(\eta-\frac{f(\bs\theta_1)}{\delta_f}\right)\epsilon}\right)$, then 
\[
\begin{aligned}
    &\frac{\left(1-\frac{\mu a p_{\mathrm{fail}}}{2}\right)^N f(\bs\theta_1)}{\epsilon}+\frac{f(\bs\theta_1)}{\delta_f}\\
    <&\frac{\left(1-\frac{\mu a \eta}{2}\right)^N f(\bs\theta_1)}{\epsilon}+\frac{f(\bs\theta_1)}{\delta_f}\\
    \leq&\frac{\exp\left(-\frac{\mu a \eta N}{2}\right) f(\bs\theta_1)}{\epsilon}+\frac{f(\bs\theta_1)}{\delta_f}=\eta<p_{\mathrm{fail}}
\end{aligned}
\]
where we use $p_{\mathrm{fail}}>\eta$ in the first inequality and $(1-x)^N\leq \exp(-xN)$ in the second inequality. This contradicts to \eqref{eqn:must_hold}. Thus, \eqref{pfail} must be true and we conclude the proof.

\end{proof}

\begin{proof}[Proof of Theorem \ref{thm:RCD_noisy}]
 Denote the probability of failure
\[
    p_{\mathrm{fail}}=\mathbb{P}(\tau>N).
\]
Similarly to the calculation in the previous proof, from \eqref{eqn:iteration_RCD}, we have
\begin{equation}\label{lowerbound_RCD}
    \mathbb{P}\left(\inf_{1\leq n\leq N}f(\bs\theta_n)\leq \epsilon\right)\geq 1-\frac{\left(1-\frac{\mu a p_{\mathrm{fail}}}{2d}\right)^N f(\bs\theta_1)}{\epsilon}-\frac{f(\bs\theta_1)}{\delta_f}.
\end{equation}
With the same logic below \eqref{lowerbound}, we conclude the proof of \cref{thm:RCD_noisy}.
\end{proof}

\newpage

\section{Parameterized Circuit for the VQE}\label{sec:circuit}

The quantum circuit described below is utilized for the numerical result in \cref{fig:vqe}.

\begin{figure}[htbp]
    \centering
\Qcircuit @C=0.9em @R=.7em {
& \gate{R_Y(\frac{3\pi}{2})} & \multigate{1}{R_{ZZ}(\theta_1)} & \qw 
 & \gate{R_X(\theta_2)} & \qw & 
\cdots &   & \multigate{1}{R_{ZZ}(\theta_{39})} & \qw 
 & \gate{R_X(\theta_{40})} & \qw \\
& \gate{R_Y(\frac{3\pi}{2})} & \ghost{R_{ZZ}(\theta_1)} &  \multigate{1}{R_{ZZ}(\theta_1)} & \gate{R_X(\theta_2)} & \qw & \cdots &  & \ghost{R_{ZZ}(\theta_{39})} &  \multigate{1}{R_{ZZ}(\theta_{39})} & \gate{R_X(\theta_{40})} & \qw \\
& \gate{R_Y(\frac{3\pi}{2})} & \multigate{1}{R_{ZZ}(\theta_1)} & \ghost{R_{ZZ}(\theta_1)} & \gate{R_X(\theta_2)} & \qw & \cdots &  & \multigate{1}{R_{ZZ}(\theta_{39})} & \ghost{R_{ZZ}(\theta_{39})} & \gate{R_X(\theta_{40})} & \qw\\
& \gate{R_Y(\frac{3\pi}{2})} & \ghost{R_{ZZ}(\theta_1)} & \multigate{1}{R_{ZZ}(\theta_1)} & \gate{R_X(\theta_2)} & \qw & \cdots &  & \ghost{R_{ZZ}(\theta_{39})} & \multigate{1}{R_{ZZ}(\theta_{39})} & \gate{R_X(\theta_{40})} & \qw\\
& \gate{R_Y(\frac{3\pi}{2})} & \multigate{1}{R_{ZZ}(\theta_1)} & \ghost{R_{ZZ}(\theta_1)} & \gate{R_X(\theta_2)} & \qw & \cdots &  & \multigate{1}{R_{ZZ}(\theta_{39})} & \ghost{R_{ZZ}(\theta_{39})} & \gate{R_X(\theta_{40})} & \qw\\
& \gate{R_Y(\frac{3\pi}{2})} & \ghost{R_{ZZ}(\theta_1)} & \multigate{1}{R_{ZZ}(\theta_1)} & \gate{R_X(\theta_2)} & \qw & \cdots &  & \ghost{R_{ZZ}(\theta_{39})} & \multigate{1}{R_{ZZ}(\theta_{39})} & \gate{R_X(\theta_{40})} & \qw\\
& \gate{R_Y(\frac{3\pi}{2})} & \multigate{1}{R_{ZZ}(\theta_1)} & \ghost{R_{ZZ}(\theta_1)} & \gate{R_X(\theta_2)} & \qw & \cdots & & \multigate{1}{R_{ZZ}(\theta_{39})} & \ghost{R_{ZZ}(\theta_{39})} & \gate{R_X(\theta_{40})} & \qw\\
& \gate{R_Y(\frac{3\pi}{2})} & \ghost{R_{ZZ}(\theta_1)} & \multigate{1}{R_{ZZ}(\theta_1)} & \gate{R_X(\theta_2)} & \qw & \cdots &  & \ghost{R_{ZZ}(\theta_{39})} & \multigate{1}{R_{ZZ}(\theta_{39})} & \gate{R_X(\theta_{40})}
& \qw \\
& \gate{R_Y(\frac{3\pi}{2})} & \multigate{1}{R_{ZZ}(\theta_1)} & \ghost{R_{ZZ}(\theta_1)} & \gate{R_X(\theta_2)} & \qw & \cdots & & \multigate{1}{R_{ZZ}(\theta_{39})} & \ghost{R_{ZZ}(\theta_{39})} & \gate{R_X(\theta_{40})} & \qw \\
& \gate{R_Y(\frac{3\pi}{2})} & \ghost{R_{ZZ}(\theta_1)} & \qw & \gate{R_X(\theta_2)} & \qw & \cdots &  & \ghost{R_{ZZ}(\theta_{39})} & \qw & \gate{R_X(\theta_{40})}
& \qw } 
\caption{A QAOA-like ansatz motivated by \cite{wiersema2020exploring} is used for the TFIM model \eqref{eq:isingmodel} with $10$ qubits. 
    For the result in \cref{fig:vqe}, 36 parameters are assigned with 18 layers of alternating rotation ZZ gates and rotation X gates.}
    \label{fig:circuit}
\end{figure}
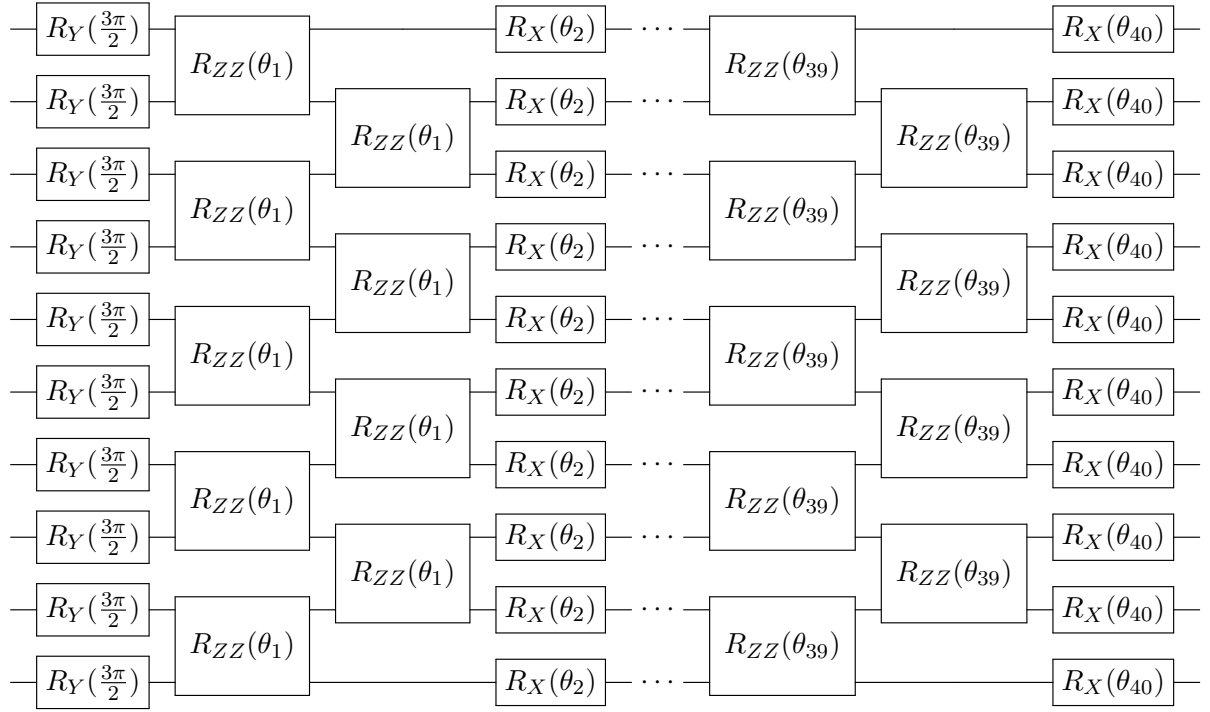

\newpage 

\section{Parameterized Circuit for the VQE in QUBO experiments}\label{sec:circuit_qubo}

The quantum circuit described below is utilized in the QUBO experiments.

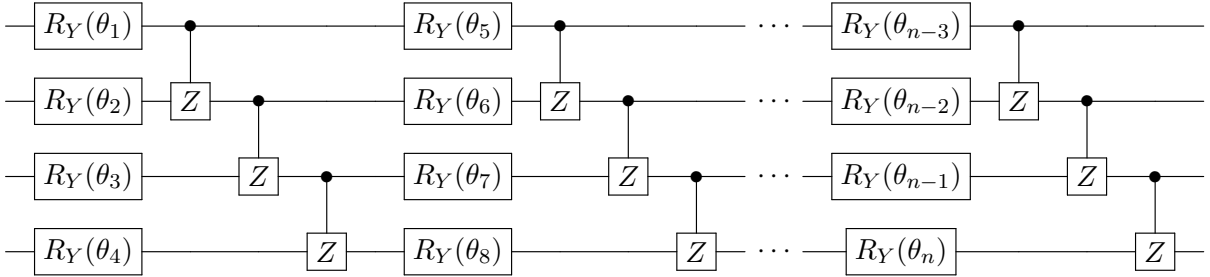
\begin{figure}[htbp]
    \centering
\Qcircuit @C=1em @R=1em {
& \gate{R_Y(\theta_1)} & \ctrl{1} & \qw & \qw & \qw  & \gate{R_Y(\theta_5)} & \ctrl{1} & \qw & \qw & \qw  & \cdots & & \gate{R_Y(\theta_{n-3})} & \ctrl{1} & \qw & \qw & \qw  \\
& \gate{R_Y(\theta_2)} & \gate{Z} & \ctrl{1} & \qw & \qw  & \gate{R_Y(\theta_6)} & \gate{Z} & \ctrl{1} & \qw & \qw  & \cdots & & \gate{R_Y(\theta_{n-2})} & \gate{Z} & \ctrl{1} & \qw & \qw  \\
& \gate{R_Y(\theta_3)} & \qw & \gate{Z} & \ctrl{1} & \qw  & \gate{R_Y(\theta_7)} & \qw & \gate{Z} & \ctrl{1} & \qw  & \cdots & & \gate{R_Y(\theta_{n-1})} & \qw & \gate{Z} & \ctrl{1} & \qw  \\
& \gate{R_Y(\theta_4)} & \qw & \qw & \gate{Z} & \qw  & \gate{R_Y(\theta_8)} & \qw & \qw & \gate{Z} & \qw  & \cdots & & \gate{R_Y(\theta_n)} & \qw & \qw & \gate{Z} & \qw  \\
}

\caption{A parametrized quantum circuit is employed in the QUBO experiments. This circuit features alternating layers of single rotation gates and entangling controlled-z gates. The adjustable parameters are exclusively found in the single rotation gates, and these parameters vary across different layers and qubits.}
    \label{fig:circuit_qubo}
\end{figure}

\section{Additional histograms of partial derivative estimates}\label{sec:noise}

\cref{fig:vqe_noise} plots the histograms with respect to the first 12 parameters among 36. The rest of 24 histograms are shown in the following figures. It is observed in all figures that the variances of partial derivative estimates in all directions are a similar magnitude of value. 

\begin{figure}[htbp]
   \centering
   \begin{subfigure}[b]{0.3\textwidth}
    \centering
    \includegraphics[width=\textwidth]{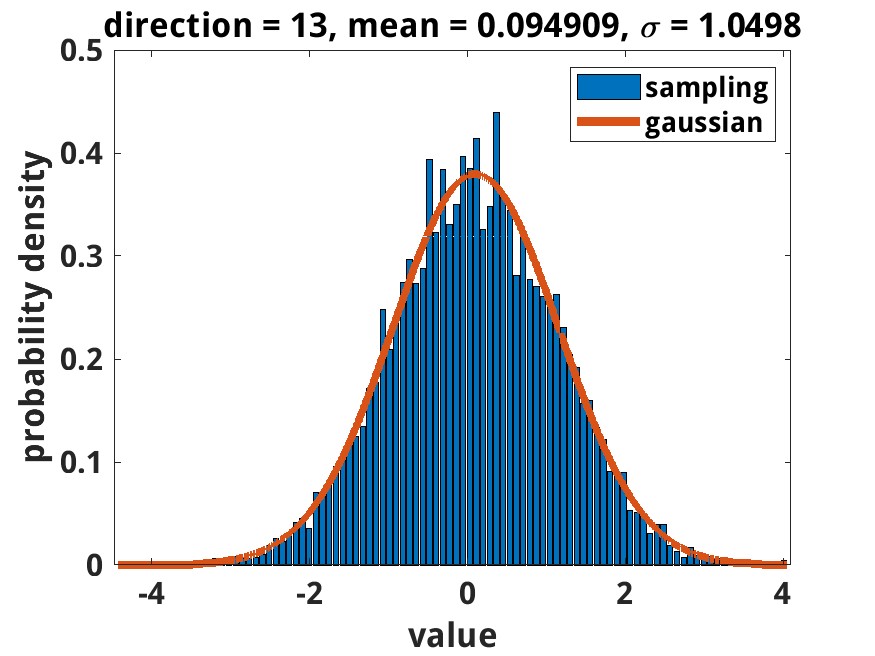}
    \end{subfigure}
   \begin{subfigure}[b]{0.3\textwidth}
    \centering
    \includegraphics[width=\textwidth]{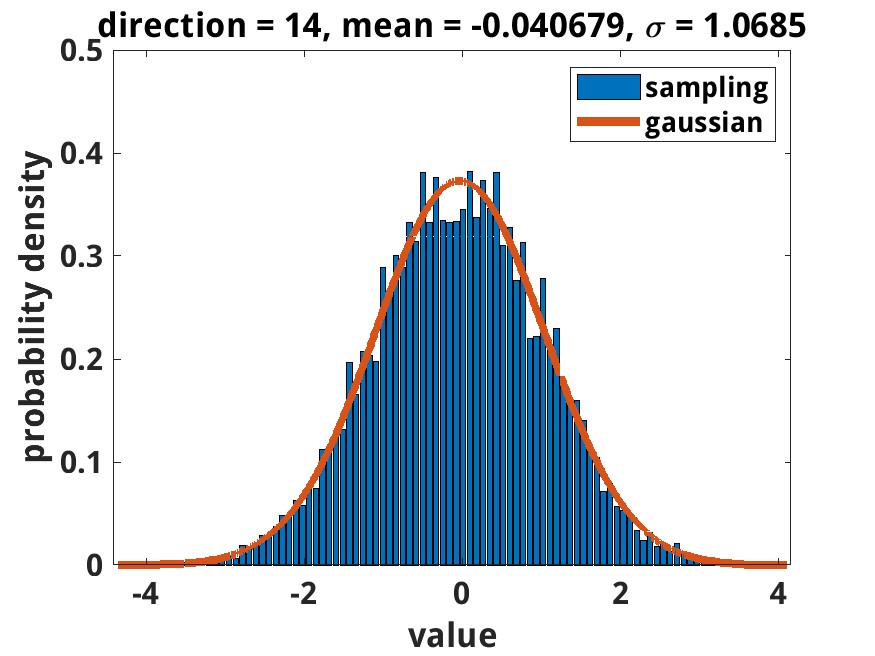}
    \end{subfigure}
    \centering
   \begin{subfigure}[b]{0.3\textwidth}
    \centering
    \includegraphics[width=\textwidth]{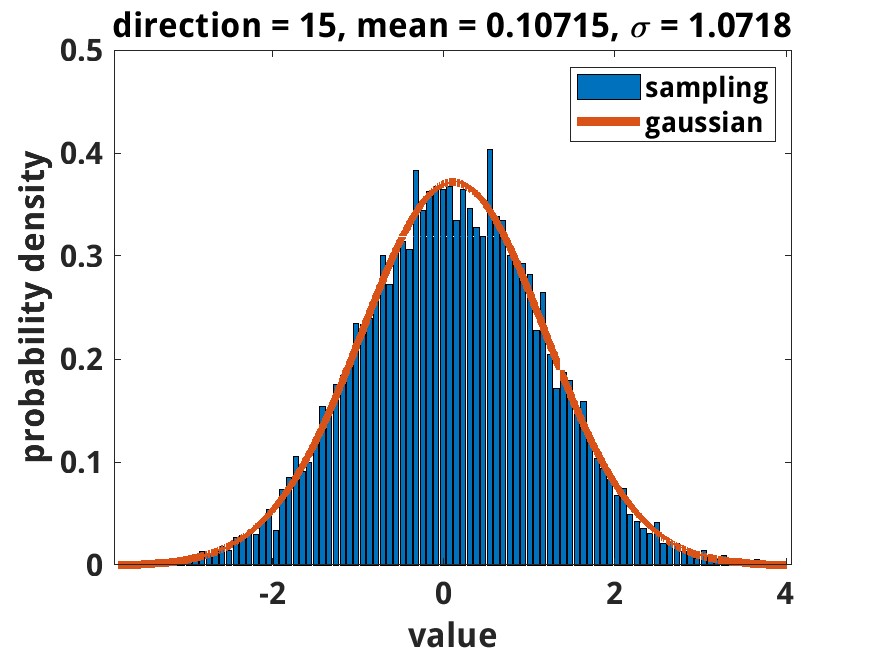}
    \end{subfigure}
    
    \begin{subfigure}[b]{0.3\textwidth}
    \centering
    \includegraphics[width=\textwidth]{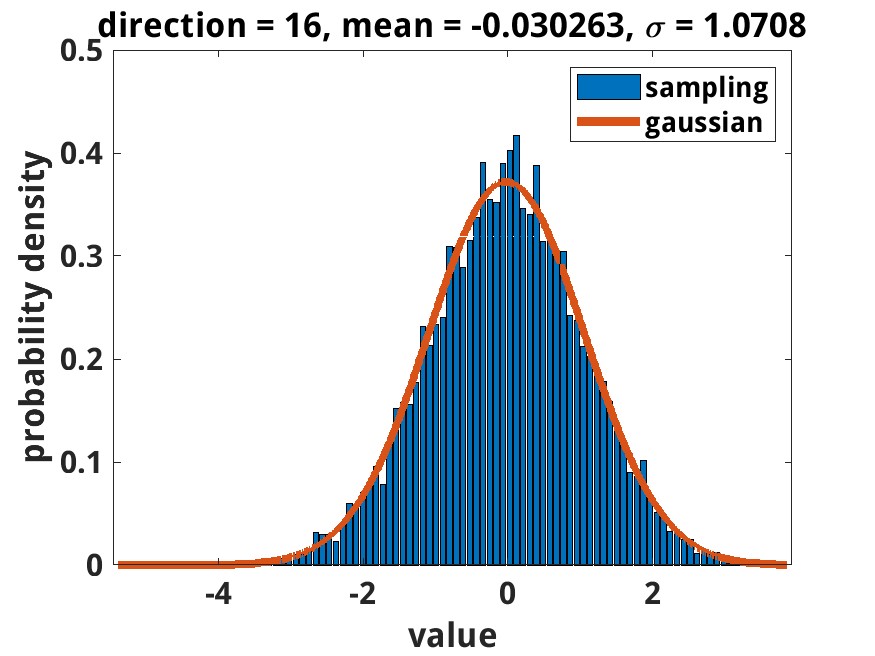}
    \end{subfigure}
   \begin{subfigure}[b]{0.3\textwidth}
    \centering
    \includegraphics[width=\textwidth]{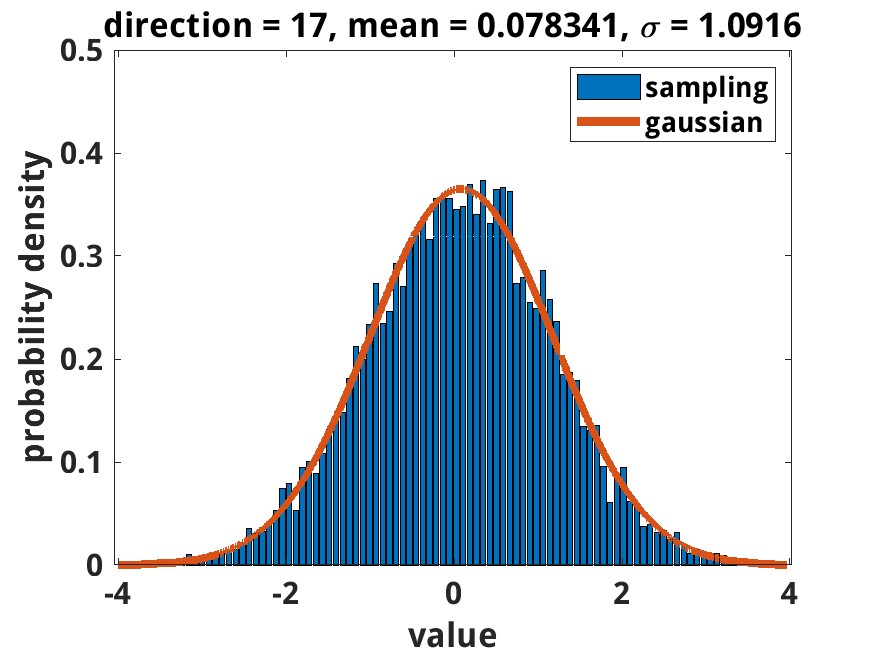}
    \end{subfigure}
    \centering
   \begin{subfigure}[b]{0.3\textwidth}
    \centering
    \includegraphics[width=\textwidth]{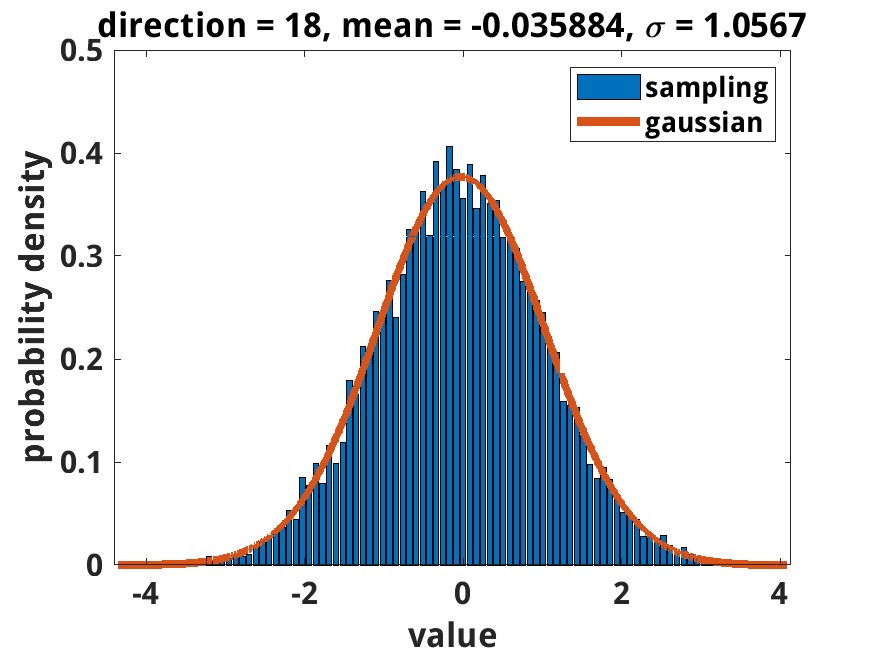}
    \end{subfigure}
    
    \begin{subfigure}[b]{0.3\textwidth}
    \centering
    \includegraphics[width=\textwidth]{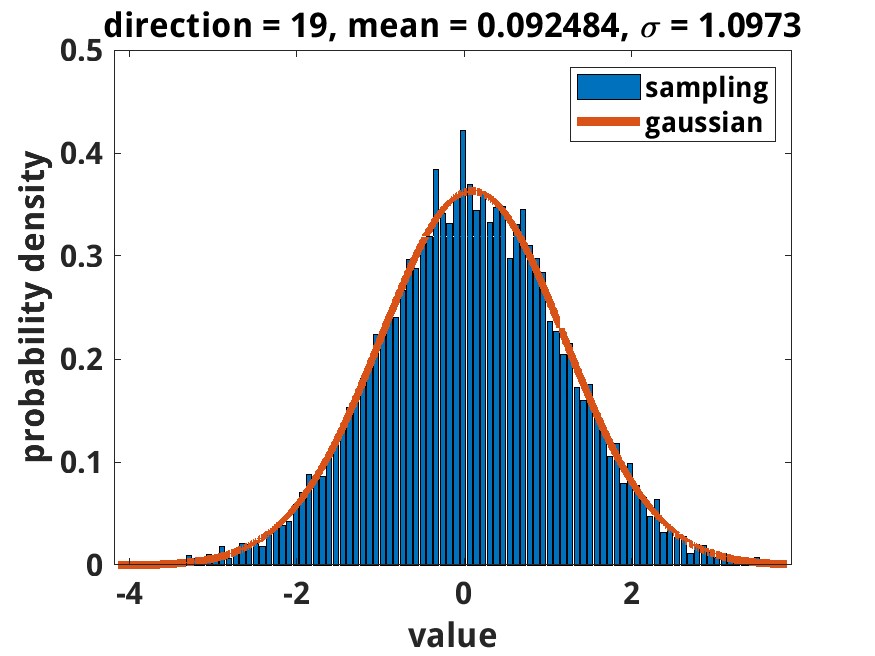}
    \end{subfigure}
   \begin{subfigure}[b]{0.3\textwidth}
    \centering
    \includegraphics[width=\textwidth]{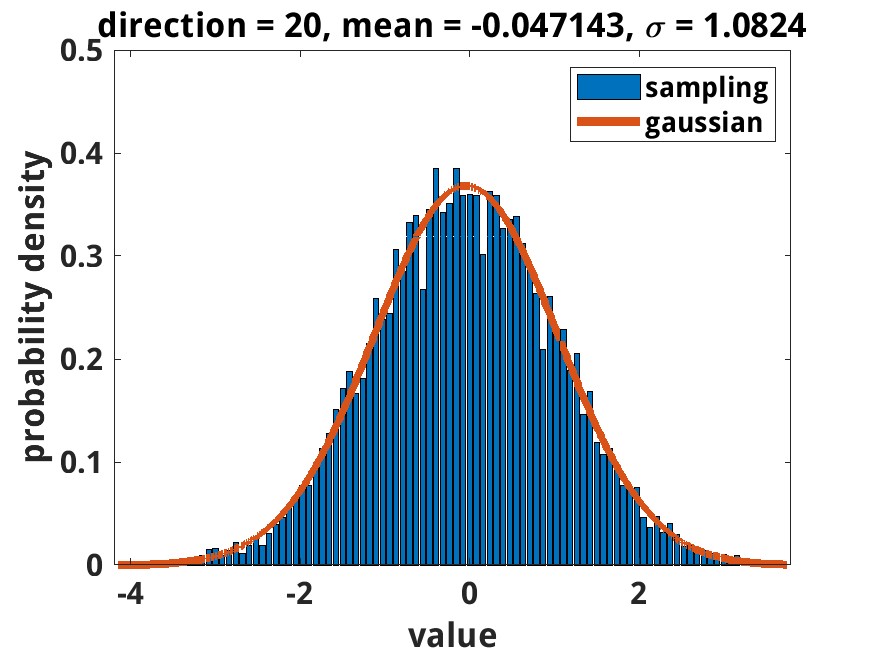}
    \end{subfigure}
    \centering
   \begin{subfigure}[b]{0.3\textwidth}
    \centering
    \includegraphics[width=\textwidth]{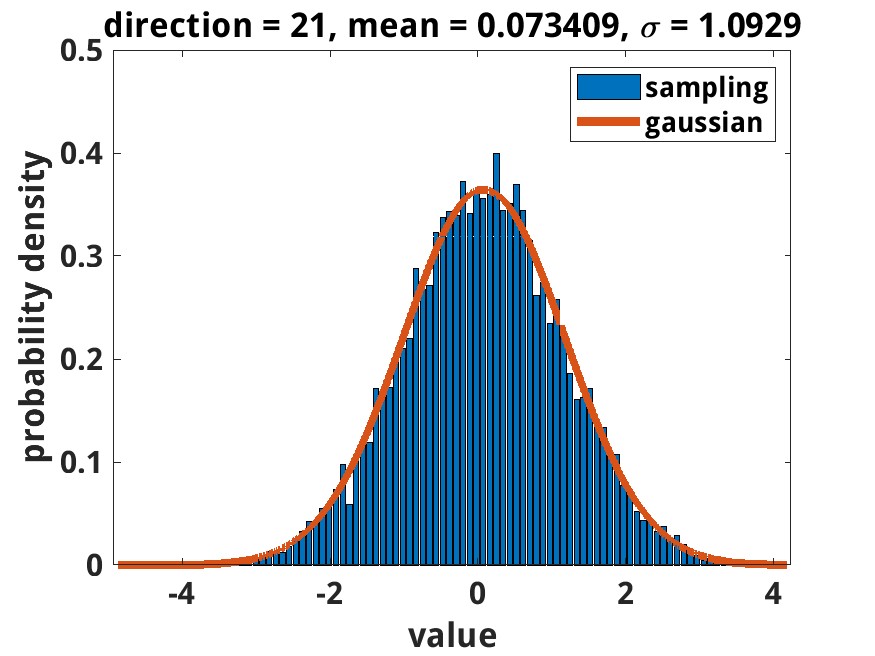}
    \end{subfigure}
    
    \begin{subfigure}[b]{0.3\textwidth}
    \centering
    \includegraphics[width=\textwidth]{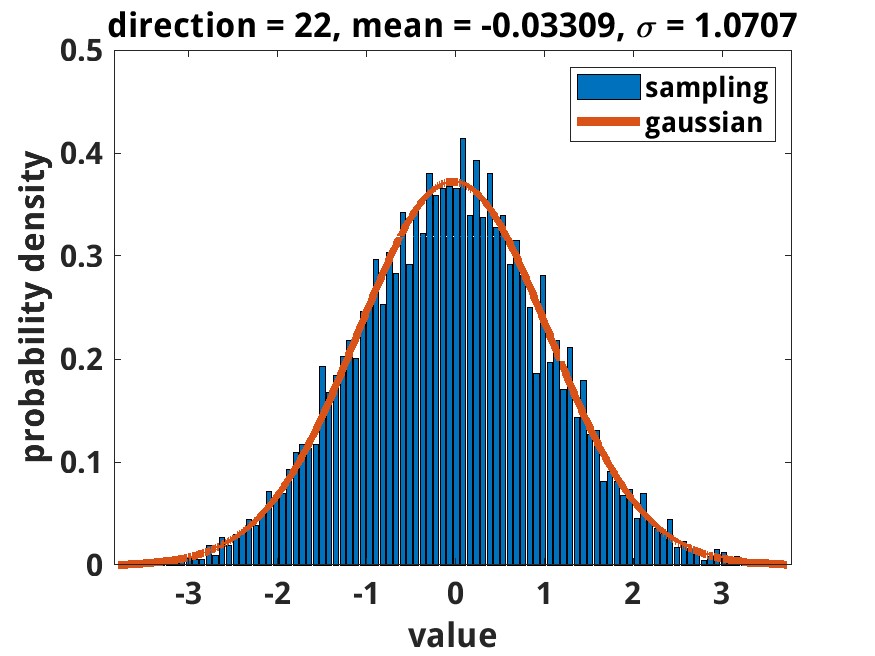}
    \end{subfigure}
   \begin{subfigure}[b]{0.3\textwidth}
    \centering
    \includegraphics[width=\textwidth]{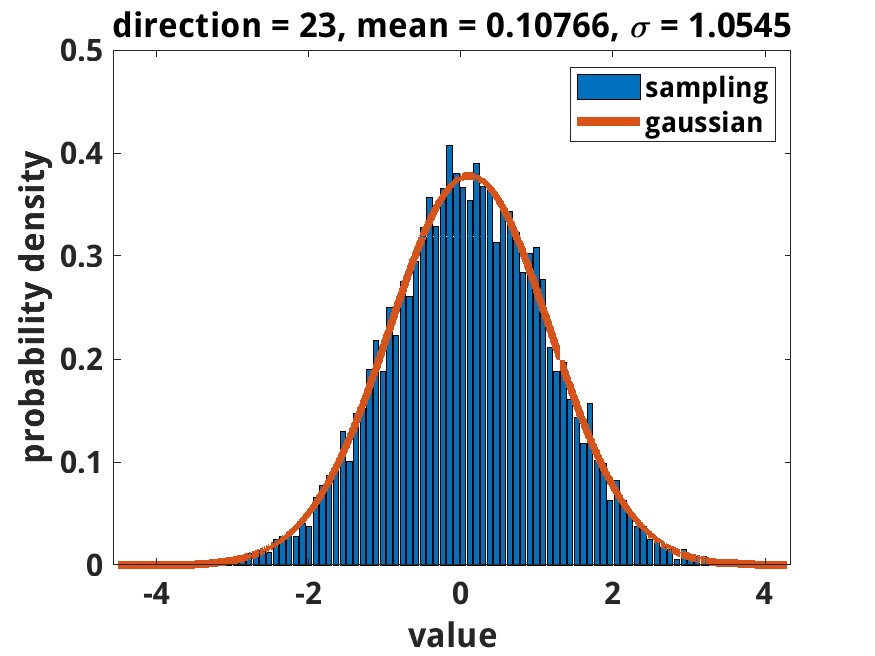}
    \end{subfigure}
    \centering
   \begin{subfigure}[b]{0.3\textwidth}
    \centering
    \includegraphics[width=\textwidth]{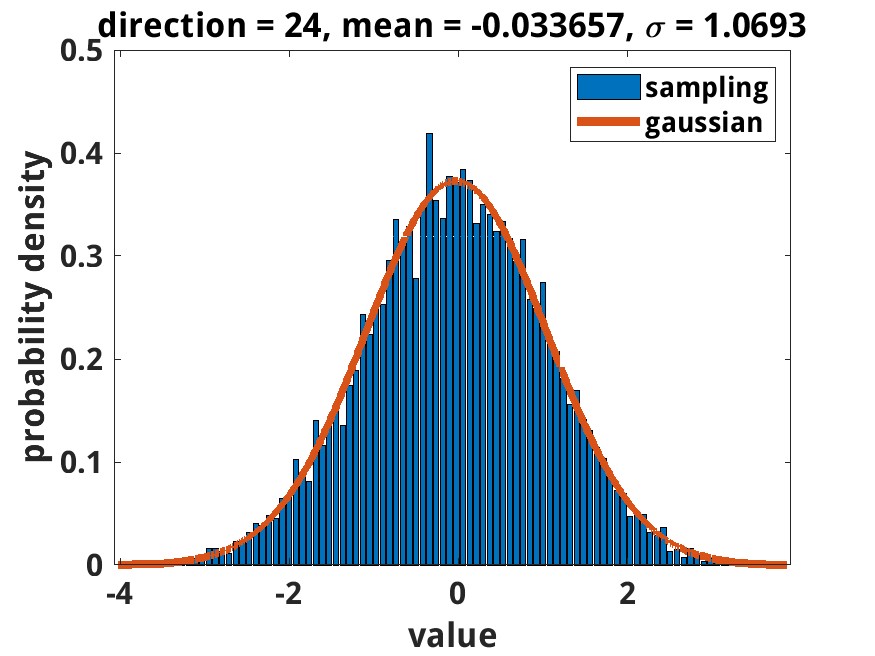}
    \end{subfigure}

    \begin{subfigure}[b]{0.3\textwidth}
    \centering
    \includegraphics[width=\textwidth]{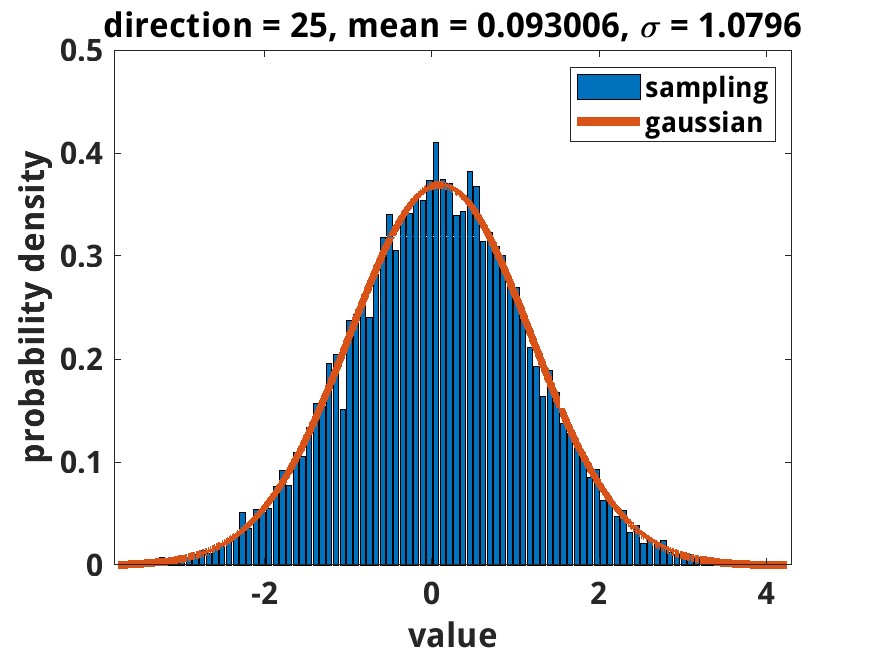}
    \end{subfigure}
    \centering
   \begin{subfigure}[b]{0.3\textwidth}
    \centering
    \includegraphics[width=\textwidth]{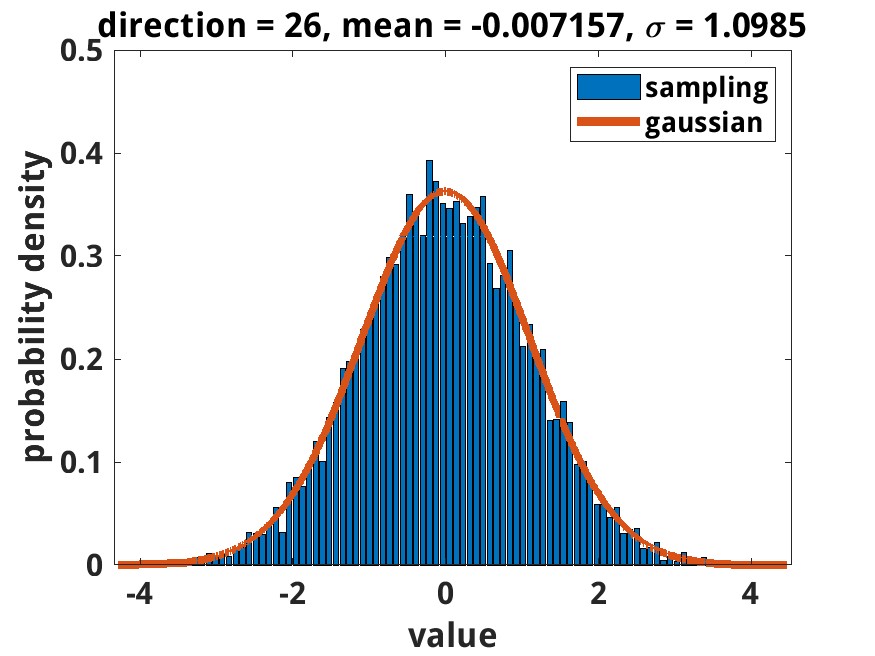}
    \end{subfigure}

    \caption{The histogram of partial derivative estimates with respect to the 13 th to the 26 th parameters are plotted with the same setup as in \cref{fig:vqe_noise}.}
    \label{fig:noise1}
\end{figure}

\begin{figure}[htbp]
   \centering
   \begin{subfigure}[b]{0.3\textwidth}
    \centering
    \includegraphics[width=\textwidth]{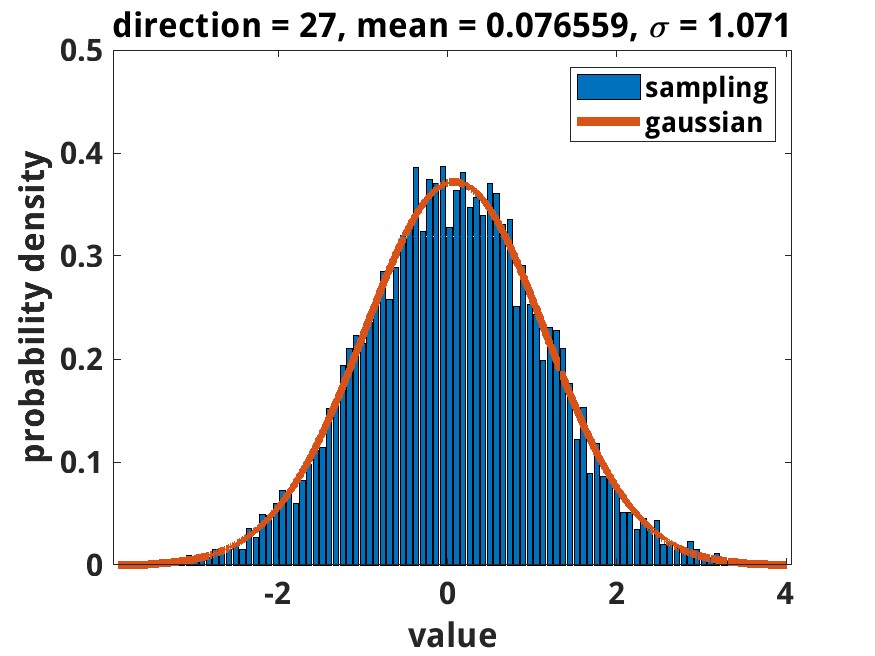}
    \end{subfigure}
   \begin{subfigure}[b]{0.3\textwidth}
    \centering
    \includegraphics[width=\textwidth]{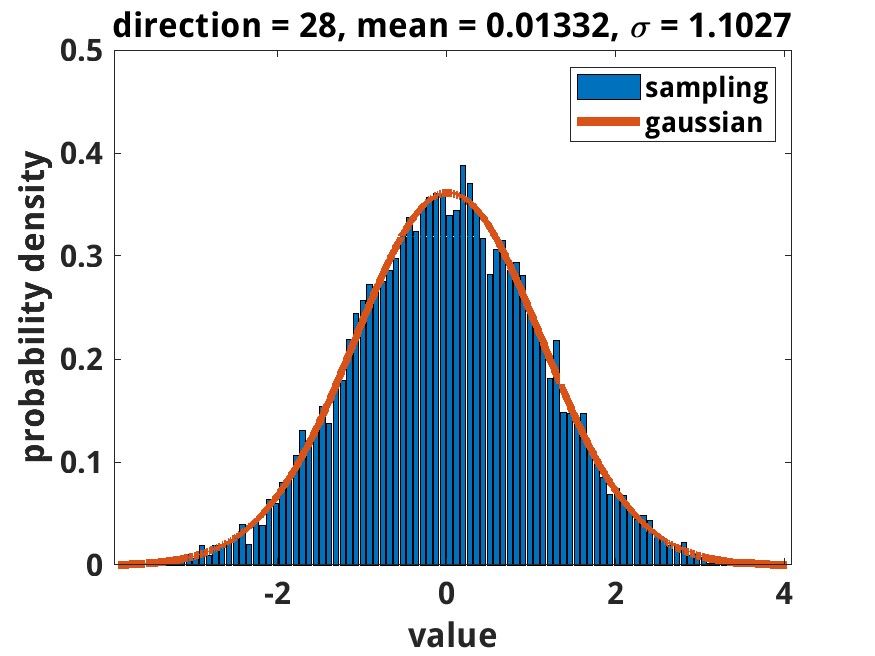}
    \end{subfigure}
    \centering
   \begin{subfigure}[b]{0.3\textwidth}
    \centering
    \includegraphics[width=\textwidth]{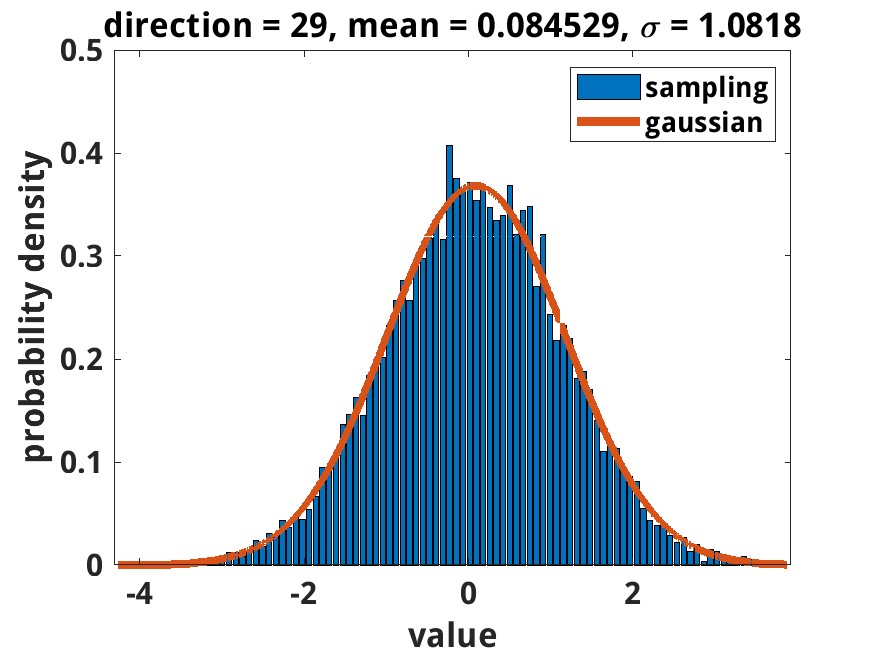}
    \end{subfigure}
    
    \begin{subfigure}[b]{0.3\textwidth}
    \centering
    \includegraphics[width=\textwidth]{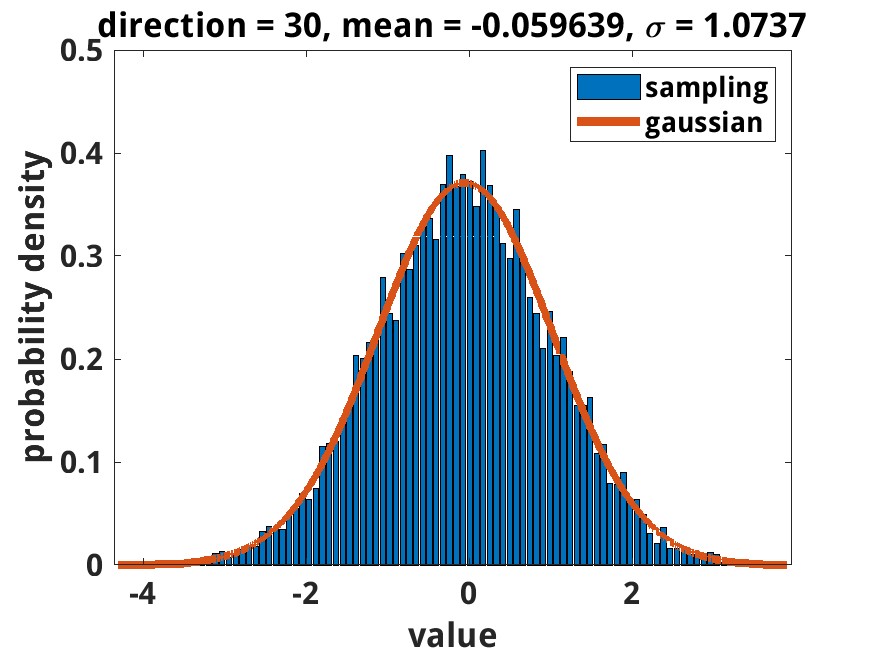}
    \end{subfigure}
   \begin{subfigure}[b]{0.3\textwidth}
    \centering
    \includegraphics[width=\textwidth]{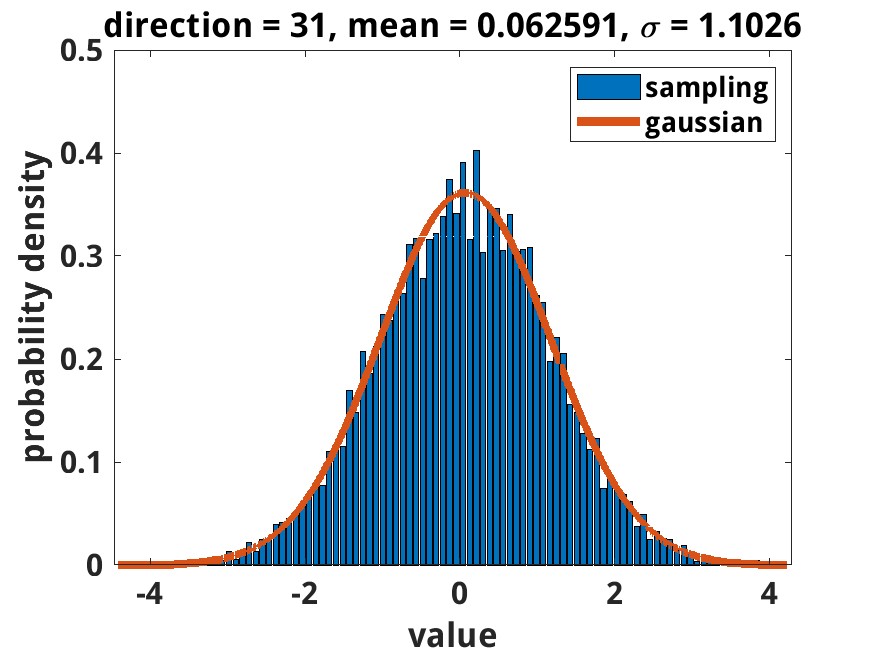}
    \end{subfigure}
    \centering
   \begin{subfigure}[b]{0.3\textwidth}
    \centering
    \includegraphics[width=\textwidth]{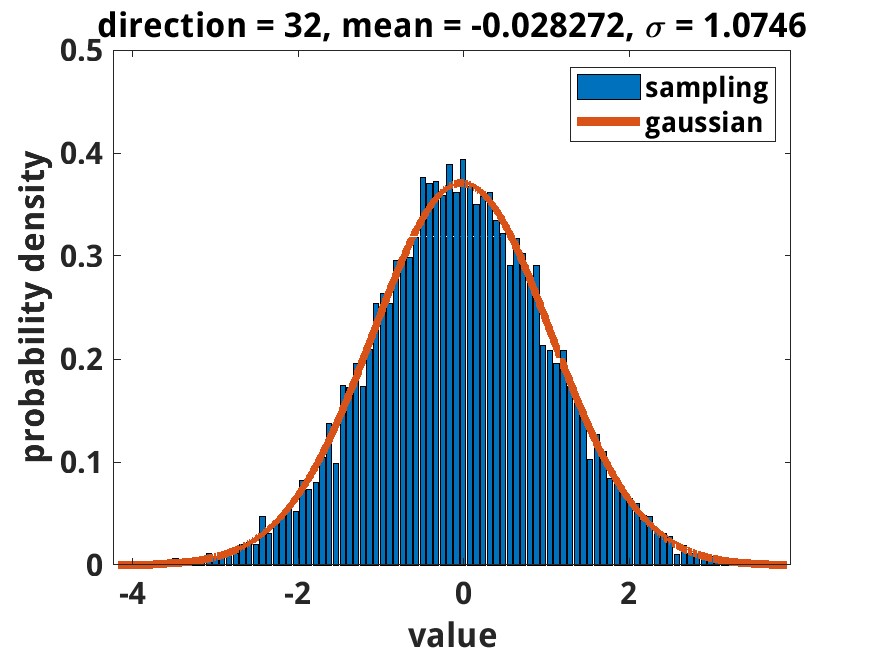}
    \end{subfigure}
    
    \begin{subfigure}[b]{0.3\textwidth}
    \centering
    \includegraphics[width=\textwidth]{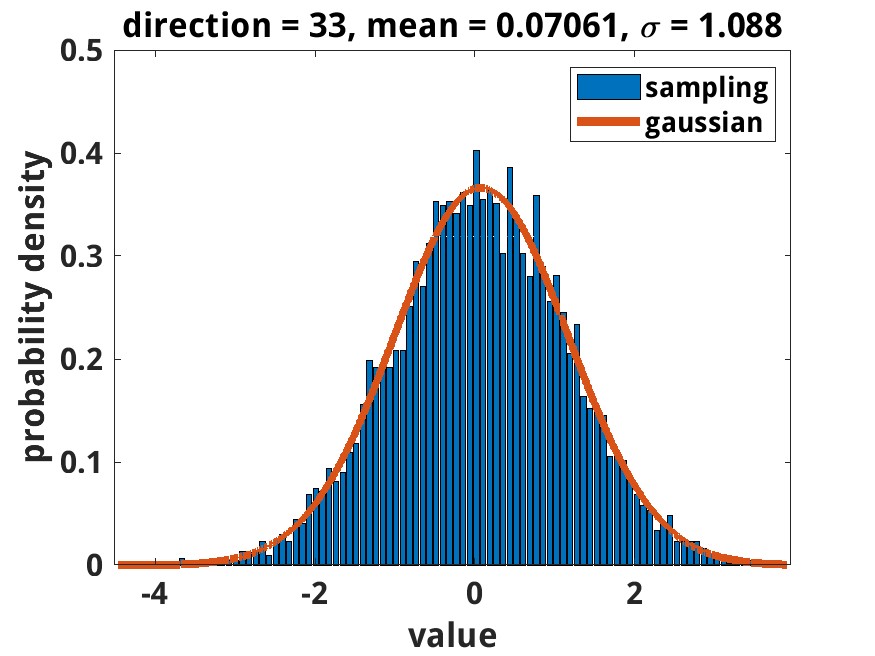}
    \end{subfigure}
   \begin{subfigure}[b]{0.3\textwidth}
    \centering
    \includegraphics[width=\textwidth]{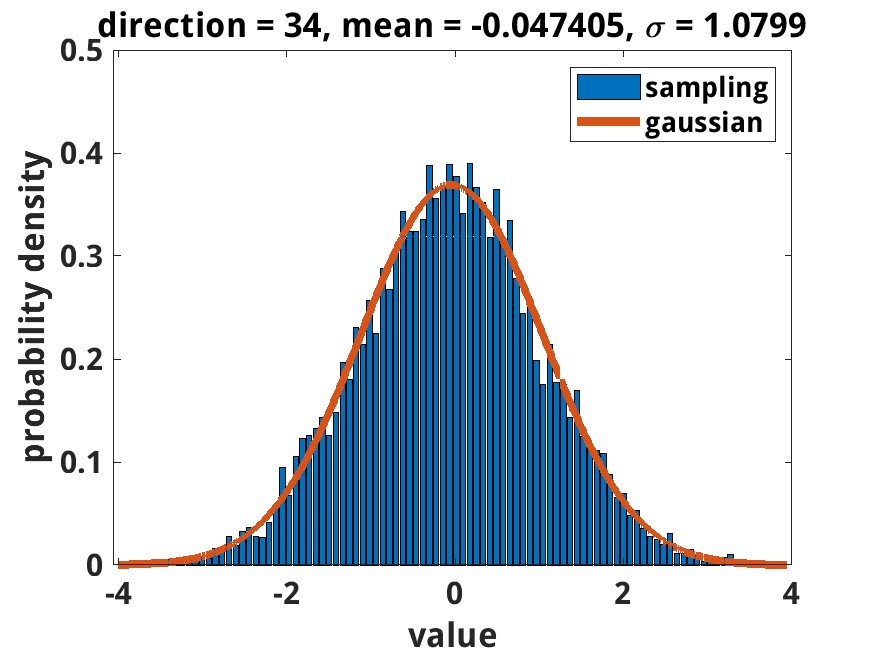}
    \end{subfigure}
    \centering
   \begin{subfigure}[b]{0.3\textwidth}
    \centering
    \includegraphics[width=\textwidth]{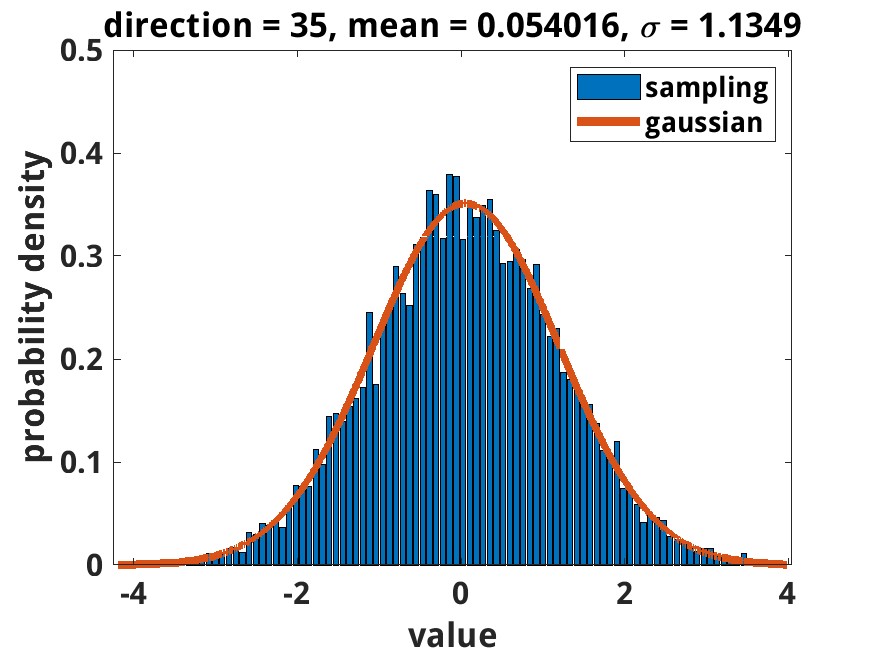}
    \end{subfigure}
    
    \begin{subfigure}[b]{0.3\textwidth}
    \centering
    \includegraphics[width=\textwidth]{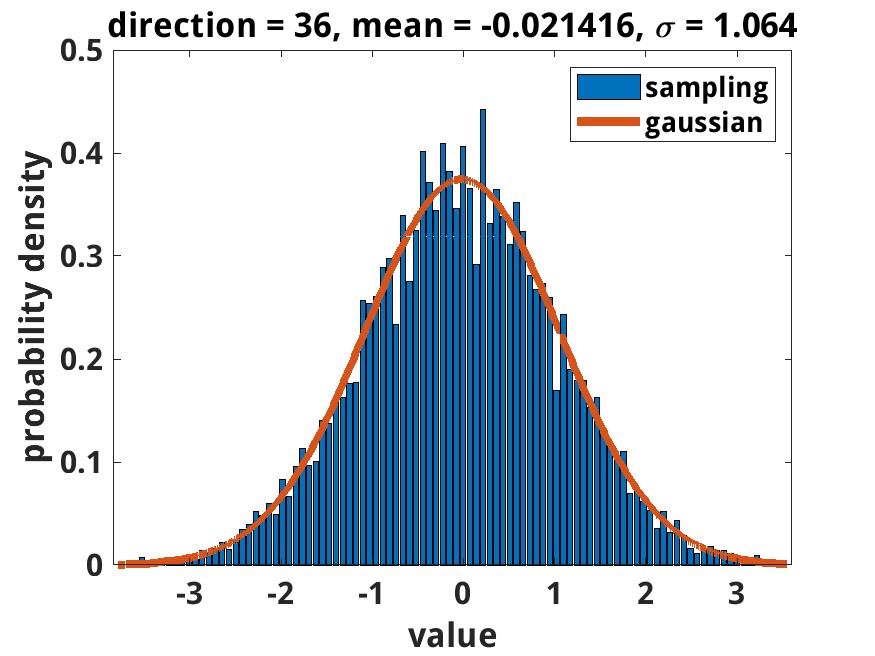}
    \end{subfigure}

    \caption{The histogram of partial derivative estimates with respect to the 27 th to the 36 th parameters are plotted with the same setup as in \cref{fig:vqe_noise}.}
    \label{fig:noise2}
\end{figure}

\section{Cost function for the TSP}\label{sec:cost} 

First, the cost function is defined as 

\[
C(\mathbf{x})=\sum_{i, j} w_{i j} \sum_{p} x_{i, p} x_{j, p+1}+A \sum_{p}\left(1-\sum_{i} x_{i, p}\right)^{2}+A \sum_{i}\left(1-\sum_{p} x_{i, p}\right)^{2}\,,
\]
where $A=10000$ , \( w_{12} = w_{21} = 48 \), \( w_{13} = w_{31} = 91 \), and \( w_{23} = w_{32} = 63 \), and \(w_{ii} = 0, i = 1, 2, 3\). 

We can introduce a new Boolean variable, denoted by \( \tilde{x}_{3i + j - 4} = x_{i,j} \), where \( i, j = 1, 2, 3 \). For simplicity, in the following formula, we will use \( x_0, \ldots, x_8 \) to represent \( \tilde{x}_0, \ldots, \tilde{x}_8 \). With this notation, the expanded form of the cost function can be expressed as:

\begin{equation*}
\begin{split}
C({\bm x}) = -200000 x_0 & - 200000 x_1 - 200000 x_2 - 200000 x_3 - 200000 x_4 - 200000 x_5 \\
& - 200000 x_6 - 200000 x_7 - 200000 x_8 \\
& + \left[ 200000 x_0x_1 + 200000 x_0x_2 + 200000 x_0x_3 + 48 x_0x_4 + 48 x_0x_5 \right.\\
& + 200000 x_0x_6 + 91 x_0x_7 + 91 x_0x_8 + 200000 x_1x_2 + 48 x_1x_3 \\
& + 200000 x_1x_4 + 48 x_1x_5 + 91 x_1x_6 + 200000 x_1x_7 + 91 x_1x_8 \\
& + 48 x_2x_3 + 48 x_2x_4 + 200000 x_2x_5 + 91 x_2x_6 + 91 x_2x_7 \\
& + 200000 x_2x_8 + 200000 x_3x_4 + 200000 x_3x_5 + 200000 x_3x_6 \\
& + 63 x_3x_7 + 63 x_3x_8 + 200000 x_4x_5 + 63 x_4x_6 + 200000 x_4x_7 \\
& + 63 x_4x_8 + 63 x_5x_6 + 63 x_5x_7 + 200000 x_5x_8 + 200000 x_6x_7 \\
& \left.+ 200000 x_6x_8 + 200000 x_7x_8 \right] + 600000
\end{split}
\end{equation*}

In order to build the corresponding Hamiltonian, we align the binary variables \( x_i \) with the Pauli \( Z \) matrices, which operate on individual qubits, and are represented by \( Z_i \). Taking into account the relationship between the binary variables \( x_i \) and the Pauli \( Z \) matrices, defined by the equation \( x_i = \frac{{1 - Z_i}}{2} \), we can express the Hamiltonian for QUBO as follows,
\begin{equation*}
\begin{split}
H_{\text{TSP}} = 600303.0 &-100069.5 Z_0 -100055.5 Z_4 + 12.0 Z_4 Z_0 -100069.5 Z_1\\
&-100055.5 Z_5 + 12.0 Z_5 Z_1 -100069.5 Z_2 -100055.5 Z_3 + 12.0 Z_3 Z_2\\
&-100077.0 Z_7 + 22.75 Z_7 Z_0 -100077.0 Z_8 + 22.75 Z_8 Z_1\\
&-100077.0 Z_6 + 22.75 Z_6 Z_2 + 12.0 Z_3 Z_1 + 12.0 Z_4 Z_2\\
&+ 12.0 Z_5 Z_0 + 15.75 Z_7 Z_3 + 15.75 Z_8 Z_4 + 15.75 Z_6 Z_5\\
&+ 22.75 Z_6 Z_1 + 22.75 Z_7 Z_2 + 22.75 Z_8 Z_0\\
&+ 15.75 Z_6 Z_4 + 15.75 Z_7 Z_5 + 15.75 Z_8 Z_3\\
&+ 50000.0 Z_3 Z_0 + 50000.0 Z_6 Z_0 + 50000.0 Z_6 Z_3\\
&+ 50000.0 Z_4 Z_1 + 50000.0 Z_7 Z_1 + 50000.0 Z_7 Z_4\\
&+ 50000.0 Z_5 Z_2 + 50000.0 Z_8 Z_2 + 50000.0 Z_8 Z_5\\
&+ 50000.0 Z_1 Z_0 + 50000.0 Z_2 Z_0 + 50000.0 Z_2 Z_1\\
&+ 50000.0 Z_4 Z_3 + 50000.0 Z_5 Z_3 + 50000.0 Z_5 Z_4\\
&+ 50000.0 Z_7 Z_6 + 50000.0 Z_8 Z_6 + 50000.0 Z_8 Z_7
\end{split}
\end{equation*}

\section{Technique used in quantum factoring}\label{sec:tech_qfac} 

The introduced technique proposes an alternative formulation for equations of the type \(A B + S = 0\). Here \(A\) and \(B\) represent Boolean variables, while \(S\) denotes integers with \(S \in \mathbb{Z}\). The optimization algorithm targets the minimization of the quadratic version of this equation.

Given the problem Hamiltonian, defined as \( H = (A B + S)^2 \), it can be restructured as:
\begin{equation}
H = 2 \left[ \frac{1}{2} \left( A + B - \frac{1}{2} \right) + S \right]^2 - \frac{1}{8}.
\end{equation}

While the two Hamiltonians are not generally equivalent, they do share the same minimizer due to their underlying Boolean function properties. For instance:

\begin{itemize}
    \item When \(AB=1\): The minimizer for the first Hamiltonian dictates \(S=-1\). In the reformulated version, we get
    \begin{align*}
    H &= 2 \left[ \frac{1}{2} \left( 1 + 1 - \frac{1}{2} \right) - 1 \right]^2 - \frac{1}{8} \\
    &= 0.
    \end{align*}
    
    \item When \(AB=0\): At least one of \(A\) or \(B\) is zero. Assuming \(A=0\) (without loss of generality) and due to the minimizer, we get \(S=0\). This also minimizes the reformulated Hamiltonian since, regardless of whether \(B\) is 0 or 1, the result remains 0.
\end{itemize}

Thus, the reformulated version can be employed interchangeably in certain scenarios. However, this updated representation leads to a significant reduction in the many-body interactions observed experimentally. Specifically, the quartic terms in the Ising Hamiltonian are eliminated, simplifying experimental realizations.
As a result, the third Hamiltonian term $(p_2q_1 + p_1q_2 - 1 ) ^ 2$ in Eq. (\ref{eqn:ss_transform}) is reformulated as:
\[ 
H^{\prime} = 2 \left[ \frac{1}{2} \left( p_{1} + q_{2} - \frac{1}{2} \right) + p_{2} q_{1} - 1 \right]^2 - \frac{1}{8}. 
\]

\section{Additional result for the TFIM with different ansatzes}

In addition to the QAOA-like ansatz depicted in \cref{fig:circuit}, we incorporate hardware-efficient ansatzes \cite{choy2023molecular,yoshioka2022generalized} to evaluate the GD and RCD methods within the context of TFIM \eqref{eq:isingmodel}. The ansatzes, named HEA and HEA2 following the works of \cite{choy2023molecular} and \cite{yoshioka2022generalized} respectively, were modified in our approach. Notably, we replaced the CNOT gates with rotation ZZ gates in the HEA configuration.  For our numerical analysis in \cref{fig:ansatzes}, HEA was configured with 36 parameters while HEA2 was equipped with 240 parameters, which correspond to 18 layers of HEA and 12 layers of HEA2. We also did numerical tests for the two ansatzes with more parameters, for example, 30 layers of HEA and 18 layers of HEA2. We found that performance with RCD method applied to the ansatzes with more layers was degraded in comparison to the case of the smaller numbers of layers.


Our analysis revealed a notable performance disparity between the two ansatzes, with the RCD method demonstrating superior efficiency over GD, as depicted in \cref{fig:ansatzes}. Specifically, the optimization of HEA2 required a significantly larger number of derivative computations in comparison to HEA. This discrepancy is largely attributed to the structural differences between HEA and HEA2. In HEA2, each parameter influences a singular qubit rotation Pauli gate, whereas in HEA, a single parameter adjustment can concurrently modify multiple gates. Thus, parameter alterations in HEA2 affect only one parameterized gate at a time, contrasting with HEA where multiple gates are simultaneously updated with each parameter change.

From the results shown in \cref{fig:vqe} and \cref{fig:ansatzes}, our findings indicate that achieving a fidelity threshold of 0.9 with HEA2 demands a substantially higher number of partial derivative calculations than with HEA and the QAOA-like ansatz illustrated in \cref{fig:circuit}. We also see that the QAOA-like ansatz performs slightly better than HEA. Precisely, the maximum and average fidelities at the last iterates of 10 simulations with the QAOA-like ansatz are 0.99 and 0.97 whereas those with HEA are 0.95 and 0.94, respectively. This comparison underscores the relative efficacy of  QAOA-like ansatz in approximating the ground state of the TFIM \eqref{eq:isingmodel} over the hardware-efficient designs HEA and HEA2, despite their intended hardware optimization benefits.


\begin{figure}[t]
   \centering
   \begin{subfigure}[b]{0.48\textwidth}
    \centering
    \includegraphics[width=\textwidth]{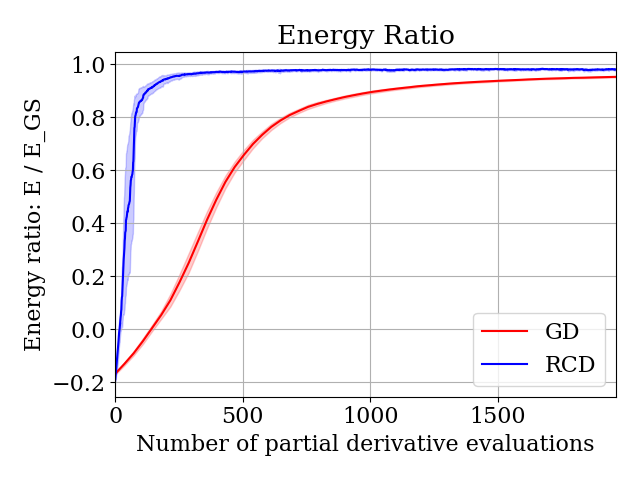}
    \end{subfigure}
   \begin{subfigure}[b]{0.48\textwidth}
    \centering
    \includegraphics[width=\textwidth]{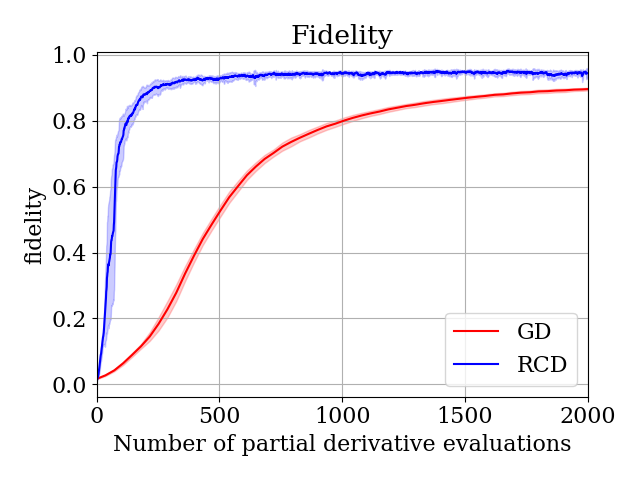}
    \end{subfigure}
    
    \begin{subfigure}[b]{0.48\textwidth}
    \centering
    \includegraphics[width=\textwidth]{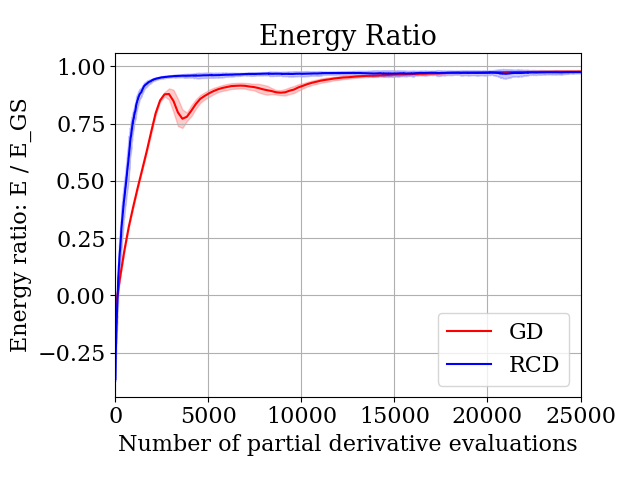}
    \end{subfigure}
   \begin{subfigure}[b]{0.48\textwidth}
    \centering
    \includegraphics[width=\textwidth]{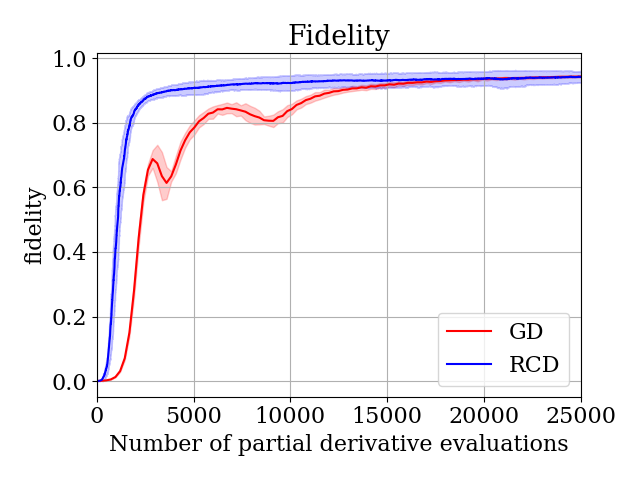}
    \end{subfigure}
    \caption{Comparison of the performance of GD (red) and RCD (blue) for optimizing the Hamiltonian  \eqref{eq:isingmodel}.  
    The result in the top panels is obtained from a slight modification of the ansatz in \cite{choy2023molecular}. For the result in the bottom panels, we used an ansatz \cite{yoshioka2022generalized}.}
    \label{fig:ansatzes}
\end{figure}

\section{Comparison of RCD with SPSA using the TFIM model}\label{sec:RCD_vs_SPSA}

We now present results from another numerical experiment where we compare the performance of RCD and SPSA. The numerical tests are based on the TFIM model \eqref{eq:isingmodel} using the ansatz shown in \cref{fig:circuit}. 
Specifically, we consider the problem \eqref{eq:isingmodel} with $n=12$, the number of qubits. In the implementation, both methods are subject to measurement noise, and we set the number of shots to $1000$ to estimate gradients within both optimization methods. We assigned 80 trainable parameters to the ansatz, that is, 40 layers within the circuit in \cref{fig:circuit}. \cref{fig:rcdvsspsa} shows results from five numerical experiments. All results are obtained from 10 independent simulations with the same fixed initial configuration, but different learning rates, specifically $a=0.008$ for RCD, $a=0.001,0.0005,0.00001$ for SPSA\_1, SPSA\_2, SPSA\_3, and a default setting in Qiskit for SPSA\_4.  Note that the two methods have the same complexity of derivative estimation, employing the finite-difference formula (SPSA) and the parameter shift rule (RCD).


From \cref{fig:rcdvsspsa}, it is evident that RCD outperforms SPSA significantly, even after fine-tuning the parameters of SPSA.  We observe that hyperparameter tuning in RCD is easier than that in SPSA: RCD requires only a constant learning rate, while SPSA requires scheduling a decreasing learning rate and a diminishing perturbation parameter for the finite-difference scheme. To obtain successful SPSA training,  a small learning rate must be selected. Compared to RCD, to achieve stable training for SPSA, the learning rate must be selected 80 times smaller, rendering the algorithm much less efficient than RCD. This can be clearly observed in \cref{fig:rcdvsspsa}, where RCD (with learning rate $a=8\times 10^{-3}$) attains a fidelity exceeding 0.95 at the end, already reaching over 0.9 between 200 and 400 parameter shift rule evaluations. In contrast, only SPSA\_3 (with learning rate $a=10^{-4}$) achieves a fidelity greater than 0.9, but with a much slower rate. 


\begin{figure}[htbp]
    \caption{Comparison of the performance of RCD (red) and SPSA (other colors) for optimizing the Hamiltonian of the TFIM \eqref{eq:isingmodel} defined in Section 1.5. The $x$ axis labels the number of parameter shift rule (PSR) or finite-difference (FD) evaluations as an indication of the computational complexity, which corresponds to RCD and SPSA, respectively. Both evaluations involve the same measurement complexity.
    The $y$ axis indicates the fidelity with the ground state from the iterations.}
    \centering
    \includegraphics[width=0.5\textwidth]{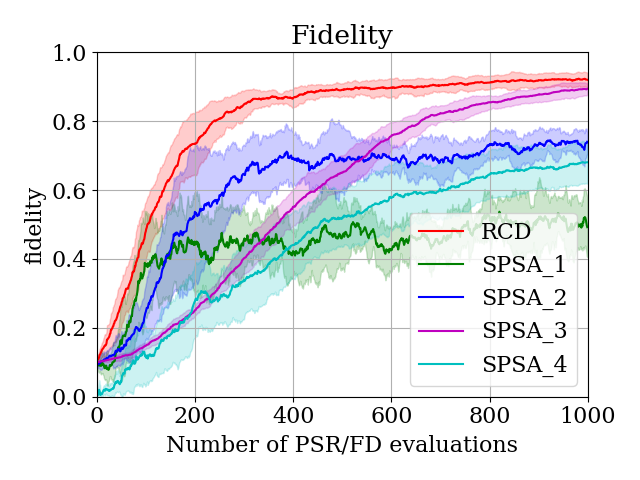}
    \label{fig:rcdvsspsa}
\end{figure}

\bibliographystyle{abbrv}
\bibliography{ref}
\end{document}